\documentclass[letterpaper, 11pt]{amsart}
\usepackage{macros}
\usepackage{fullpage}
\usepackage[foot]{amsaddr}

\allowdisplaybreaks

\def\BIP{\mathsf{bip}}
\def\CM{\mathsf{CM}}
\def\GD{\mathsf{GD}}
\def\LOW{\mathsf{Low}}
\def\PATTERN{{\mathscr{P}}}
\def\RANK{\mathsf{rank}}

\def\Ber{\mathrm{Ber}}
\def\ParameterCondition{{\Delta (1 - \e^{-\beta}) \ge 9 q \ln{\frac{q \Delta}{1 - \e^{-\beta}}}}}

\def\e{\mathrm{e}} 
\def\mix{\mathrm{mix}}

\def\vecpi{\boldsymbol{\pi}}

\newboolean{doubleblind}
\setboolean{doubleblind}{False}

\title{Counting and Sampling Anti-ferromagnetic Potts Models on Random Regular Bipartite Graphs in the Non-uniqueness Regime}

\ifdoubleblind
    \author{Author(s)}
\else
    \author{Zhidan Li, Siyu Liu, Kuan Yang}
    \address[Zhidan Li]{School of Computer Science, Shanghai Jiao Tong University, Shanghai, China. \textnormal{Email: \url{yueEnTeRnAL@sjtu.edu.cn}}.}
    \address[Siyu Liu]{Zhiyuan College, Shanghai Jiao Tong University, Shanghai, China. \textnormal{Email: \url{williamliusy@sjtu.edu.cn}}.}
    \address[Kuan Yang]{John Hopcroft Center for Computer Science, Shanghai Jiao Tong University, Shanghai, China. \textnormal{Email: \url{kuan.yang@sjtu.edu.cn}}.}
\fi
\date{}

\begin{document}

\begin{abstract}
The anti-ferromagnetic multi-state Potts model, a generalization of the Ising model, is one of the most fundamental models in statistical physics.
It was conjectured by Koteck\'{y} (Phys. Rev. B, 1985) that the model undergoes a phase transition from a disordered phase at infinite temperature to an ordered phase at sufficiently low temperature on lattices.
Such phase transitions are believed to play an important role in computational complexity theory and remain closely connected to the problem of approximating the partition function of the system.
For proper three-coloring models (corresponding to the zero-temperature), torpid mixing of a family of local-update Markov chains on lattices was established by Galvin, Kahn, Randall and Sorkin (SIDMA, 2015), coinciding with the presence of phase coexistence following shown by Feldheim and Spinka (J. Eur. Math. Soc., 2019).

In this work, we study approximating the partition function of the anti-ferromagnetic multi-state Potts model at low temperature on random regular bipartite graphs, which are with high probability good bipartite expanders.
On the negative side, we generalize the result by Geisler, Kang, Sarantis and Wdowinski (arXiv, 2026) for anti-ferromagnetic Ising models to show that when the temperature is sufficiently low relative to the degree of the underlying graph, the celebrated single-site Glauber dynamics has exponentially slow mixing time.
On the positive side, we design a deterministic algorithm that yields an approximation to the partition function of the model via the framework of abstract polymer models as Jenssen, Keevash and Perkins (SICOMP, 2020), Liao, Lin, Lu and Mao (Theor. Comput. Sci., 2022), Galanis, Goldberg and Stewart (TOCT, 2021) and Geisler, Kang, Sarantis and Wdowinski (arXiv, 2026).
\end{abstract}

\maketitle

\newpage

\section{Introduction} \label{sec:introduction}

The $q$-state Potts model, introduced by R.~B.~Potts~\cite{Potts52}, is a fundamental spin system in statistical physics and has been well studied in probability theory and theoretical computer science. It generalizes the Ising model by allowing each site to take one of $q$ possible colors.
Formally, given a ground set $U$, the model is specified by a color set $[q] \defeq \set{1, \ldots, q}$ for a positive integer $q \ge 3$, an inverse temperature $\beta$ and a Hamiltonian $H : [q]^U \to \mathbb R$.
A \emph{configuration} is an assignment $\sigma : U \to [q]$ and the weight of $\sigma$ is defined by:
\begin{align} \label{eq:weight-function}
    w(\sigma) \defeq \e^{-\beta H(\sigma)}.
\end{align}
Let the state space or configuration space $\Omega \defeq [q]^U$ be the collection of all configurations.
When $\beta < 0$, we say that it is a \emph{ferromagnetic $q$-Potts model} and an \emph{anti-ferromagnetic} case for $\beta > 0$.

We study the \emph{anti-ferromagnetic $q$-Potts model} on graphs.
Namely, for a graph $G = (V, E)$, the ground set is $V$ (and thus $\Omega = [q]^V$), and the Hamiltonian $H(\cdot)$ is given by the number of monochromatic edges in $G$ under a configuration, \IE,
\begin{align} \label{def:Hamiltonian-function}
    \forall \sigma \in \Omega, \quad H(\sigma) = m_G(\sigma) \defeq \abs{\set{(u, v) \in E \midc \sigma|_u = \sigma|_v}}.
\end{align}
The Gibbs distribution induced by the anti-ferromagnetic $q$-Potts model on $G$ at $\beta$ is the probability distribution $\mu = \mu_{G, q; \beta}$ on $\Omega$ defined as
\begin{align*}
    \forall \sigma \in \Omega, \quad \mu(\sigma) = \frac{w(\sigma)}{Z_{G, q}(\beta)} = \frac{\e^{-\beta m_G(\sigma)}}{Z_{G, q}(\beta)}
\end{align*}
where $Z_{G, q}(\beta) = \sum_{\sigma \in \Omega} w(\sigma)$ is the \emph{partition function}.
In particular, when the temperature is zero ($\beta = \infty$), configurations with monochromatic edges receive weight $0$ and configurations representing \emph{proper colorings} receive weight $1$.
Thus the proper vertex-coloring model is a special case of the anti-ferromagnetic Potts model.

A central computational problem on the anti-ferromagnetic Potts model is to compute the partition function $Z_{G, q}(\beta)$.
Unfortunately, exact computation is $\numberP$-complete, even on regular and triangle-free graphs \cite{Greenhill00}.
This motivates the following natural algorithmic question:
\begin{center}
    \emph{Does the anti-ferromagnetic Potts partition function admit an efficient deterministic or
randomized approximation scheme?}
\end{center}
Here we use $\FPTAS$ and $\FPRAS$ to describe approximation algorithms for counting problems: an $\eps$-approximation to the target quantity is produced in time polynomial in the input size and $\eps^{-1}$, deterministically for an $\FPTAS$ and with constant success probability for an $\FPRAS$.\footnote{Formally, for a function $Z$ from a family of instances to reals, a \emph{fully polynomial-time approximation scheme} ($\FPTAS$) is an algorithm such that for each instance $\Phi$ and a tolerance error $\eps \in (0, 1)$, it outputs a number $\wh{Z}$ in time $\poly{\abs{\Phi}, \eps^{-1}}$ (where $\abs{\Phi}$ is the length of its input) satisfying that $\wh{Z}$ is an $\eps$-approximation to $Z(\Phi)$, \IE, $(1 - \eps)Z(\Phi) \le \wh{Z} \le (1 + \eps) Z(\Phi)$.
Similarly, a \emph{fully polynomial-time randomized approximation scheme} ($\FPRAS$) is a randomized algorithm outputting a $\wh{Z} \in (1 \pm \eps)Z(\Phi)$ in time $\poly{\abs{\Phi}, \eps^{-1}}$ with probability at least $3/4$. Here the probability $3/4$ is standard. One can substitute it with any real number $> 1/2$.}

A guiding principle in approximate counting is the connection between efficient algorithms and the \emph{uniqueness} of the infinite-volume Gibbs measure.
For the anti-ferromagnetic Potts model on the infinite $\Delta$-regular tree, the conjectured uniqueness threshold is
\begin{align*}
    \beta_c(q, \Delta) \defeq \ln{\max\set{1, \frac{\Delta}{\Delta - q}}}.
\end{align*}
It is a folklore conjecture that the Gibbs distribution is unique if and only if $\beta \le \beta_c(q, \Delta)$ (equivalently, $q \ge \Delta(1 - \e^{-\beta})$).
This has been proved for $q=3$ by Galanis, Goldberg and Yang~\cite{GGY18} and subsequently improved by Bencs et al.~\cite{BBBG23} for general $q \ge 3$ but large enough $\Delta$.
In a recent work by Chen et al.~\cite{CLMM23}, the existence of tree-uniqueness is shown for $q \ge (1 - \e^{-\beta})(\Delta + 5) + 1$, which almost matches the conjectured threshold.
The tree-uniqueness threshold plays a crucial role in computational theory since it is conjectured that there exists an efficient approximation (either an $\FPTAS$ or $\FPRAS$) to the partition function of anti-ferromagnetic Potts models on general bounded degree graphs if and only if $\beta$ is in the tree-uniqueness threshold (unless $\mathbf{P}=\mathbf{NP}$), which has been proved in the anti-ferromagnetic Ising model (\cite{LLY13,GSV16}).

On the algorithmic side, Lu and Yin~\cite{LY13} propose an $\FPTAS$ via the so-called \emph{correlation-decay} method when $3\Delta(\e^{\abs{\beta}} - 1) < 1$ for general Potts models.
Correlation decay algorithms are based on the property of \emph{spatial mixing}, while a recently developed technique \emph{spectral independence} connects spatial mixing to rapid mixing of Glauber dynamics for sampling colorings on high-girth graphs~\cite{CLMM23}.
In parallel, zero-free methods give another route to deterministic counting algorithms for the Potts models, particularly the proper coloring problem. Liu, Sinclair and Srivastava~\cite{LSS25} establish the zero-freeness of $Z_{G, q}(\beta)$ when $q \ge 2\Delta$, leading to an $\FPTAS$ by~\cite{PR17}. The zero-freeness result is recently improved to $q \ge (2 - \eta)\Delta$ for some $\eta \ge 0.002$ by~\cite{BBR26}.
For random graphs, Blanca et al.~\cite{BGSVY20} designed an efficient sampler on random $\Delta$-regular graphs with polynomially small total variation error in the tree-uniqueness regime, although such a sampler does not directly yield an $\FPRAS$ for random regular graph instances.

The picture changes sharply in the non-uniqueness regime. On general graphs, Galanis, \v{S}tefankovi\v{c} and Vigoda~\cite{GSV15} proved inapproximability for the anti-ferromagnetic $q$-Potts partition function on triangle-free $\Delta$-regular graphs for even $q$, unless $\NP = \RP$.
On bipartite graphs, however, the obstruction takes a
different form. Low-temperature anti-ferromagnetic systems may exhibit several ordered phases
rather than a single disordered phase.
When the model is defined on lattice $\mathbb{Z}^{\Delta}$, \Kotecky~\cite{Kotecky85} conjectured that there exists a phenomenon called \emph{long-range order} for anti-ferromagnetic Potts models.
This was later proved by Feldheim and Spinka~\cite{FS19} that such coexistence of different phases occurs for $q = 3$ and sufficiently large $\Delta$ at $\beta > \ln{\Delta}$, which is far away from the tree-uniqueness threshold. In the zero-temperature case (where $\beta = \infty$), Galvin et al.~\cite{GKRS15} showed torpid mixing of any local-update chains for proper $3$-colorings on $\mathbb{Z}^{\Delta}$ on sufficiently high dimensions, where the model is in the ordered phase.

This phase-coexistence viewpoint has also led to positive algorithmic results on random bipartite graphs.
Jenssen, Keevash and Perkins~\cite{JKP19} gives an $\FPTAS$ for counting the number of proper colorings on random $\Delta$-regular bipartite graphs with high probability, as long as $\Delta = \Omega((q\ln{q})^2)$, by decomposing the state space into ordered phases and polymer models around these phases.
Similar techniques are extended to general spin systems (including both ferromagnetic and anti-ferromagnetic Potts models) by Galanis, Goldberg and Stewart~\cite{GGS21} to design an $\FPRAS$ to compute the models on random $\Delta$-regular bipartite graphs when $\Delta(1 - \e^{-\beta})^{4} = \Omega((q \ln{q})^4)$
and a following work by Chen et al.~\cite{CGSV22} improves the result for proper vertex-coloring models to $\Delta \ge 100q\ln{(q\Delta)}$.
These works suggest a compelling question:
\begin{center}
    \emph{Can one exploit ordered phases on random regular bipartite graphs to count anti-ferromagnetic Potts models at low temperature, even though local-update Markov chains mix slowly?}
\end{center}

\subsection{Main results}

We answer this question for the anti-ferromagnetic Potts model on random regular bipartite graphs at low temperature in the non-uniqueness regime.
Random regular bipartite graphs are good expanders with high probability~\cite{JKP19,GGS21,LLLM22,CGSV22}, and this expansion forces typical low-temperature configurations to concentrate near ordered coloring patterns.
Our results show both sides of the algorithmic story: the standard local Markov chain is torpid, but the partition function is still deterministically approximable through a phase-based polymer expansion.


Our first result is slow mixing of single-site Glauber dynamics.
The Glauber dynamics $P^{\GD}$ is one of the most celebrated Markov chains with local updates.
To sample from a target distribution $\pi$ on $\Omega$, starting from an initial state $X_0 \in \Omega$, at each step we update the current state $X_t$ as following:
\begin{itemize}
    \item pick a vertex $v \in V$ uniformly at random;
    \item update $X_{t + 1} \sim \pi(\cdot\;|\;X_{t + 1}(V \setminus \set{v}) = X_t(V \setminus \set{v}))$.
\end{itemize}
We show that it has an exponential mixing time on the anti-ferromagnetic $q$-Potts models on random bipartite regular graphs at sufficiently low temperature.
\begin{theorem} \label{thm:Glauber-dynamics-slow-mixing}
    For every positive integers $q, \ge 3, \Delta \ge 4$ and a real number $\beta > 0$ satisfying
    \begin{align*}
        \ParameterCondition,
    \end{align*}
    the following holds with high probability over $G$ drawn uniformly from all $\Delta$-regular bipartite $2n$-vertex graphs at random for all sufficiently large positive integer $n$.
    The Glauber dynamics $P^{\GD}$ for the anti-ferromagnetic $q$-Potts models on $G$ at inverse temperature $\beta$ has a mixing time at least $\e^{Cn}$ for some factor $C = C(q, \Delta, \beta)$.
\end{theorem}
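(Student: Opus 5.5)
We will prove the statement with a conductance (bottleneck) argument. The state space is partitioned according to the ordered \emph{phases}. For the bipartition $V = L \cup R$ with $|L| = |R| = n$, a phase is a pair $(A, B)$ of disjoint nonempty color sets with $A \cup B = [q]$. Intuitively, vertices of $L$ mostly use colors in $A$ and vertices of $R$ mostly use colors in $B$. For a configuration $\sigma$ we assign a phase by majority: $A(\sigma)$ is the set of colors $c$ that appear on at least a $\gamma$-fraction of $L$ and on few vertices of $R$. Here $\gamma$ is a small threshold, for instance $\gamma \approx 1/(2q)$. Configurations that fit no phase well go into a \emph{bad} set $\Omega_{\mathrm{bad}}$.

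The key steps, in order, are as follows.

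(i) \emph{Expansion.} W.h.p.\ the random $\Delta$-regular bipartite graph $G$ is a good bipartite expander. Every $X \subseteq L$ with $|X| \le n/2$ has $|N(X)| \ge (1+c)|X|$, and the expander mixing lemma holds with second eigenvalue $O(\sqrt{\Delta})$. This is standard, as in \cite{JKP19,GGS21}.

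(ii) \emph{Bad configurations are exponentially rare.} Consider the sets $\Omega_{\mathrm{bad}}$ and the boundary sets $\partial$ of configurations that are ``between'' two phases, i.e.\ where some color has intermediate density on both sides. We show $\mu(\Omega_{\mathrm{bad}}\cup\partial) \le \e^{-c' n}\,\mu(\Omega_{(A,B)})$ for the dominant phases. This is a first-moment / counting bound. If a color $c$ occupies sets $X_c \subseteq L$ and $Y_c \subseteq R$ with both $|X_c|, |Y_c| \ge \gamma n$, then expander mixing gives $e(X_c, Y_c) \ge (\Delta\gamma^2 - O(\sqrt{\Delta}\gamma)) n$ monochromatic edges. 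These contribute a weight factor $\e^{-\beta \Omega(\Delta\gamma^2 n)}$. The entropy of such configurations is at most $q^{2n}$, while a phase configuration has weight at least $(|A|^{n}|B|^n)\e^{-0}$. The condition $\Delta(1-\e^{-\beta}) \ge 9q\ln\frac{q\Delta}{1-\e^{-\beta}}$ is precisely what lets the energy penalty beat the entropy $2n\ln q$. The appearance of $1-\e^{-\beta}$ rather than $\beta$ reflects that the penalty per monochromatic edge is effectively $1-\e^{-\beta}$ after summing over the choices. One would do this via a polymer-like union bound over defect sets, as in \cite{GGS21}.

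(iii) \emph{Glauber moves are local.} A single-site update changes one vertex. It therefore changes each color density by $1/n$, so any path from one phase region $\Omega_{(A,B)}$ to a different one must pass through the boundary set $\partial$. By symmetry, permuting colors maps phase $(A,B)$ to $(\pi A,\pi B)$, so the balanced phases have equal measure. Hence a set $S$, namely the union of the regions of one phase, has $\mu(S) \le 1/2$, bounded below by $1/\mathrm{poly}$ in $q$, while its conductance satisfies $\Phi(S) \le \mu(\partial)/\mu(S) \le \e^{-c'n}$. The standard bound $t_{\mathrm{mix}} \ge \Omega(1/\Phi)$ then gives $\e^{Cn}$.

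The main obstacle is step (ii). We must define the phase and boundary classification precisely, so that single-site moves cannot jump between phases. We must also get a counting bound that is uniform over all ways a configuration can straddle phases, and that succeeds with only the stated constant $9$. I would first try the approach of Geisler--Kang--Sarantis--Wdowinski for Ising. Their route bounds, for each color, the weight of configurations where that color has intermediate density on $L$, using expansion to force $\Omega(\Delta n)$ monochromatic edges or a small neighbourhood. We would then take a union bound over the $q$ colors and over $2^q$ patterns.
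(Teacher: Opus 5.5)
There is a genuine gap in step (ii), and it is conceptual rather than a matter of constants. The only mechanism you use is \emph{energetic}: a color occupying $\ge\gamma n$ vertices on both sides forces monochromatic edges. But the barrier between balanced phases is largely \emph{entropic}, so your $\partial$ is not a cut.

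For example, take $q=4$. Glauber dynamics can move from a typical configuration of $(\{1,2\},\{3,4\})$ to one of $(\{1,3\},\{2,4\})$ through proper colorings only: recolor the $2$'s on $L$ to $1$, then the $3$'s on $R$ to $4$, then introduce $3$ on $L$ and $2$ on $R$. No color is ever present on both sides, so $\partial$ is never visited and there is no edge penalty to exploit. The intermediate configurations are rare only because a color is (almost) absent, a loss of about $\ln\frac{\lfloor q/2\rfloor\lceil q/2\rceil}{\lfloor (q-1)/2\rfloor\lceil (q-1)/2\rceil}\approx 2/q$ per vertex. Hence $\Omega_{\mathrm{bad}}$ must absorb them, and you give no bound for that. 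A crude count cannot supply one at $\gamma\approx 1/(2q)$, since merely placing the rare color costs about $2H(\gamma)\approx\ln(2\e q)/q>2/q$ per vertex.

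The energetic part also fails under the stated hypothesis. The expander mixing bound $e(X_c,Y_c)\ge(\Delta\gamma^2-2\sqrt{\Delta}\gamma)n$ is vacuous unless $\gamma>2/\sqrt{\Delta}$, i.e.\ $\Delta>16q^2$. To beat the gap $\ln\frac{q^2}{\lfloor q/2\rfloor\lceil q/2\rceil}\ge\ln 4$ between $q^{2n}$ and $|A|^n|B|^n$, you further need $\beta\Delta\gamma^2=\Omega(1)$, i.e.\ $\beta\Delta=\Omega(q^2)$. The hypothesis $\Delta(1-\e^{-\beta})\ge 9q\ln\frac{q\Delta}{1-\e^{-\beta}}$ gives neither in general: take $q$ large with $\Delta$ of order $q\ln(q\Delta)$, or $\beta$ near the lower end of its range. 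So the hypothesis is not ``precisely what lets the energy beat the entropy'' in your scheme.

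The missing idea is to measure defects by entropy loss relative to the ground-state count $|A|^n|B|^n$, which treats entropic and energetic defects uniformly. The paper fixes a pattern $(A,B)$ and uses the cut $\Omega^{\mathrm{Low}}_{A,B}$ of configurations disagreeing with $(A,B)$ on at most $\theta n$ vertices, with $\theta=1/(6\Delta)$. Its inner boundary (exactly $\theta n$ disagreements) is crossed by every single-site path, so no phase classification is needed.

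The boundary mass is bounded via \emph{small-set} vertex expansion. From $\lambda_2\le 2\sqrt{\Delta-1}+\zeta$, one-sided sets of size at most $\theta n$ expand by a factor of order $6\Delta/25$; a $(1+c)$-expansion as in your (i) is far too weak. Each neighbor of the defect set $S$ then contributes a factor at most $1-\frac{1-\e^{-\beta}}{q-1}$ relative to the pattern count. This gives $W_{A,B}(S)\le |A|^n|B|^n\e^{-\kappa|S|}$ with $\kappa$ large enough to beat $\binom{2n}{\theta n}$, and this is where a condition linear in $q$ suffices.

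To guarantee that some $\Omega^{\mathrm{Low}}_{A,B}$ has mass at least $1/(2^{q+1}-4)$, the paper shows that configurations of rank $>\theta n$ carry only an $\e^{-Cn}$ fraction of $Z$. It does this by a first-moment computation in the configuration model, $\mathbb{E}[\e^{-\beta m_G(\sigma)}]\le(\Delta n+1)^{q^2}\e^{-\Delta n(1-\e^{-\beta})\langle\mathbf{x},\mathbf{y}\rangle}$, combined with the sharp inequality $H(\mathbf{x})+H(\mathbf{y})-\ln(\lfloor q/2\rfloor\lceil q/2\rceil)\le\kappa\langle\mathbf{x},\mathbf{y}\rangle+2\e^{-\kappa/(4\lceil q/2\rceil)}$. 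This inequality compares against the ground-state entropy exactly instead of against $q^{2n}$. A pigeonhole over the $2^q-2$ patterns then finishes the lower bound.
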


Despite the slow mixing of Glauber dynamics (\Cref{thm:Glauber-dynamics-slow-mixing}), our second result gives an efficient deterministic approximation algorithm for the partition function in a low temperature range.
\begin{theorem} \label{thm:counting-results}
    For every positive integers $q \ge 3, \Delta \ge 4$ and a real number $\beta > 0$ satisfying
    \begin{align*}
        \ParameterCondition,
    \end{align*}
    the following holds with high probability over $G$ drawn uniformly from all $\Delta$-regular bipartite $2n$-vertex graphs at random for all sufficiently large positive integer $n$.
    There exists a deterministic algorithm on input $G$ and $\eps \in (0, 1)$ to output $\wh{Z}$ satisfying that
    \begin{align*}
        (1 - \eps) Z_{G, q}(\beta) \le \wh{Z} \le (1 + \eps) Z_{G, q}(\beta)
    \end{align*}
    in running time $\poly{n, \eps^{-1}}$, \IE, an $\FPTAS$ for counting the partition function of the anti-ferromagnetic $q$-Potts model on $G$ at $\beta$.
\end{theorem}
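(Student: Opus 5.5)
We follow the phase-decomposition and polymer-model approach of \cite{JKP19,GGS21,CGSV22}. Write $G=(L\cup R,E)$ with $|L|=|R|=n$. A \emph{pattern} is a pair $(A,B)$ of disjoint nonempty sets with $A\cup B=[q]$. The ordered phase of $(A,B)$ consists of configurations in which most of $L$ uses colors from $A$ and most of $R$ uses colors from $B$. The ground state of this phase has weight $|A|^n|B|^n$.

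\textbf{Step 1 (expansion).} We first show that, with high probability, $G$ has the following bipartite expansion property. For every $S\subseteq L$ (or $S\subseteq R$) with $|S|\le n/2$ we have $|N(S)|\ge(1+\delta)|S|$. Moreover, small sets have neighborhoods of size close to $\Delta|S|$. This is a standard first-moment computation in the configuration model, as in \cite{JKP19,GGS21}.

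\textbf{Step 2 (concentration on dominant phases).} For a configuration $\sigma$, call a vertex $v\in L$ \emph{defective} with respect to $(A,B)$ if $\sigma(v)\notin A$, and define defects in $R$ analogously. We then show that the total weight of configurations that are not $\gamma$-close to some pattern is at most $\e^{-\Omega(n)}Z$. The argument is a counting one: each defective vertex either creates monochromatic edges, costing a factor $\e^{-\beta}$ each, or forces its $\Delta$ neighbors to avoid its color. By expansion, a set of $k$ defects pays roughly $(1-\e^{-\beta}\text{ correction})^{\Delta k}$, or a factor $\e^{-(1-\e^{-\beta})\Delta k/q}$, against an entropy gain of at most $\binom{n}{k}q^{k}$. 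The hypothesis $\Delta(1-\e^{-\beta})\ge 9q\ln\frac{q\Delta}{1-\e^{-\beta}}$ is exactly what makes the energy term beat this entropy term.

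\textbf{Step 3 (polymer model).} Within each phase, define polymers to be the $G^2$-connected components of defective vertices. The weight of a polymer $\gamma$ is the ratio of the summed Boltzmann weights with and without the defect. Each defect at $v\in L$ modifies the factors of its neighbors, and soft constraints from monochromatic edges enter through $\e^{-\beta}$. Step 2 gives the bound $|w(\gamma)|\le \e^{-c\Delta(1-\e^{-\beta})|\gamma|/q}$. The number of polymers of size $k$ containing a given vertex is at most $(\e\Delta^2)^k$. The Koteck\'y--Preiss condition with decay function $a(\gamma)=|\gamma|$ then holds under the parameter condition. This step and Step 2 are the main obstacle: the bound must be uniform over soft-constraint weights, and defects must be handled while polymers are restricted to size at most $n/2$ (the truncated polymer model of \cite{JKP19}).

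\textbf{Step 4 (algorithm).} We approximate $Z$ by
\begin{align*}
\sum_{(A,B)}|A|^n|B|^n\,\Xi_{A,B},
\end{align*}
with relative error $\e^{-\Omega(n)}$. For $\eps\ge \e^{-cn}$, each $\Xi_{A,B}$ is approximated by truncating its cluster expansion at clusters of size $O(\log(n/\eps))$. These clusters are enumerated and their weights computed in time $\poly{n,\eps^{-1}}$ using the algorithm of Helmuth--Perkins--Regts. Polymer weights involve only $O(\log(n/\eps))$ vertices of bounded degree, so they are computable by brute force in $q^{O(\log(n/\eps))}$ time. For $\eps<\e^{-cn}$, exhaustive enumeration takes $q^{2n}=\poly{\eps^{-1}}$ time. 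Together these give the FPTAS.
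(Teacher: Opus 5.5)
Your architecture matches the paper's: patterns, polymers given by $G^2$-connected defect sets with weight $W_{A,B}(S)/(\abs{A}^n\abs{B}^n)$, Koteck\'y--Preiss via the $(\e\Delta^2)^k$ count, a truncated cluster expansion, and brute force when $\eps<\e^{-cn}$. However, Step 2 has a genuine gap.

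Your per-defect accounting is essentially the paper's bound on $W_{A,B}(S)$. It charges the non-defective vertices of $N_G(S)$ for avoiding the defect colours, and $\abs{N_G(S)}\le 2n$. So the $\Delta k$ in your exponent is available only while $\abs{N_G(S)}\approx\Delta\abs{S}$, i.e.\ for $k=O(n/\Delta)$; the paper uses it only for $\abs{S}\le\theta n$ with $\theta=1/(6\Delta)$.

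For linear $k$, the claimed decay relative to a fixed pattern is actually false. Take $c\in B$ and the proper colorings with $\calL$ coloured from $A\cup\{c\}$ and $\calR$ from $B\setminus\{c\}$. With respect to $(A,B)$, their defect set is $\sigma^{-1}(c)\cap\calL$, of size about $n/(\abs{A}+1)$. Their total weight is $((\abs{A}+1)(\abs{B}-1))^n$, which is nowhere near $\e^{-\Omega(\Delta k/q)}\abs{A}^n\abs{B}^n$; it is at least $\abs{A}^n\abs{B}^n$ when $\abs{B}\ge\abs{A}+1$. So being far from \emph{every} pattern cannot be exploited through the defect set of a single pattern. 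You need a global entropy--energy comparison.

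The paper supplies this with a first-moment computation in the configuration model. For $\sigma$ with colour profiles $\mathbf{x},\mathbf{y}$ on $\calL,\calR$, it shows $\mathbb{E}_G[\e^{-\beta m_G(\sigma)}]\le\mathrm{poly}(n)\,\e^{-\Delta n(1-\e^{-\beta})\langle\mathbf{x},\mathbf{y}\rangle}$. This is a per-monochromatic-edge penalty, so it scales with $\Delta n$ rather than with $\abs{N_G(S)}$. It is weighed against the entropy $\e^{n(H(\mathbf{x})+H(\mathbf{y}))}$ using the balanced-profile inequality $H(\mathbf{x})+H(\mathbf{y})-\ln(\lfloor q/2\rfloor\lceil q/2\rceil)\le\kappa\langle\mathbf{x},\mathbf{y}\rangle+2\e^{-\kappa/(4\lceil q/2\rceil)}$. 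The identity $\RANK(\sigma)/n=\sum_i\min\{x_i,y_i\}>\theta$ forces $\langle\mathbf{x},\mathbf{y}\rangle\ge\theta^2/q$. Markov's inequality together with $Z_{G,q}(\beta)\ge(\lfloor q/2\rfloor\lceil q/2\rceil)^n$ finishes. This uses the randomness of $G$ directly, not just expansion.

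The same example shows your Step 3 cap of $n/2$ on polymer size is too large. Polymers such as $\sigma^{-1}(c)\cap\calL$ above, of size $\Theta(n/q)$, carry weight $\e^{-O(\ln q)\abs{\gamma}}$ (for the appropriate direction of the neighbouring pattern). That is far too large for the Koteck\'y--Preiss check against the $(\e\Delta^2)^k$ count, so the cap must be $\theta n=O(n/\Delta)$. With that cap, the $\e^{-\Omega(n)}$ relative error claimed in Step 4 also needs the paper's two gluing estimates:
\begin{itemize}
\item the overlaps between $\Omega^{\LOW}_{A,B}$ for distinct patterns are exponentially small;
\item $Z_{A,B}(\calS)\le(1+\e^{-\Omega(n)})Z^{\LOW}_{A,B}$.
\end{itemize}
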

We mark here that the analysis of~\Cref{thm:counting-results} works for $\beta = \infty$, \IE, the proper vertex-coloring model on random regular bipartite graphs.
\begin{corollary} \label{cor:proper-coloring-result}
    For every positive integers $q \ge 3, \Delta \ge 4$ satisfying
    \begin{align*}
        \Delta \ge 9q\ln{(q\Delta)},
    \end{align*}
    with high probability over $G$ drawn uniformly from all $\Delta$-regular bipartite $2n$-vertex graphs at random for all sufficiently large positive integer $n$, there exists an $\FPTAS$ for the number of proper $q$-colorings of $G$.
\end{corollary}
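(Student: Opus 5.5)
The corollary is the $\beta = \infty$ specialization of \Cref{thm:counting-results}. The plan is to check that every step of that proof goes through with $\beta = \infty$ read literally.

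First, the parameter condition. At $\beta = \infty$ we have $\e^{-\beta} = 0$, so $1 - \e^{-\beta} = 1$, and the condition
\[
\Delta (1 - \e^{-\beta}) \ge 9 q \ln{\frac{q \Delta}{1 - \e^{-\beta}}}
\]
becomes exactly $\Delta \ge 9q\ln(q\Delta)$. Next, the weight $w(\sigma) = \e^{-\beta m_G(\sigma)}$ should be read as the indicator that $m_G(\sigma) = 0$. With that reading, $Z_{G,q}(\infty)$ is the number of proper $q$-colorings of $G$. The high-probability expansion properties of the random $\Delta$-regular bipartite graph used for \Cref{thm:counting-results} depend only on $n$, $q$ and $\Delta$, not on $\beta$. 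So they hold for the same good event of $G$.

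The substantive step is to go through the proof of \Cref{thm:counting-results} and confirm that $\beta$ enters only through the quantity $1 - \e^{-\beta}$ and through factors $\e^{-\beta k}$ with $k \ge 0$. Both have well-defined limits at $\beta = \infty$ (namely $1$, and $\mathbb{1}[k = 0]$), and all the estimates remain monotone in the right direction. The proof has four pieces, and I would check each in turn.
\begin{itemize}
\item The decomposition of $[q]^V$ into ordered phases indexed by pairs of color sets $(A, B)$ assigned to the two sides of the bipartition.
\item The definition of polymers as connected sets of defect vertices, whose weights carry factors $\e^{-\beta(\cdot)}$ from monochromatic edges.
\item The verification of the Koteck\'{y}--Preiss condition via expansion, with bounds of the form $\e^{-\Omega(\Delta(1-\e^{-\beta})/q)}$ per defect vertex.
\item The bound showing that configurations far from all phases have negligible total weight.
\end{itemize}
For each piece I would confirm that the bounds are stated in terms of $1 - \e^{-\beta}$ or upper bound $\e^{-\beta}$-terms by $1$ or by $0$. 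Then substituting $\beta = \infty$ either leaves them unchanged or improves them.

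The remaining concern, and the main obstacle, is algorithmic. Polymer weights at $\beta = \infty$ may vanish, and the truncated cluster expansion plus polymer enumeration must still run in $\poly{n, \eps^{-1}}$ time. Zero weights are harmless for the Koteck\'{y}--Preiss condition, since it only requires upper bounds. They are also harmless for computation, since each weight is computed exactly as a sum over colorings of a polymer of size $O(\log n)$, and $\e^{-\infty \cdot k}$ is evaluated as $0$ or $1$. I would also check that no step divides by $Z$ restricted to a phase, or by any quantity that could become $0$. The ground phase contributions are products of positive counts of colorings of one side from $A$ and the other side from $B$, so these are strictly positive.

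Combining these checks gives the $\FPTAS$ for the number of proper $q$-colorings whenever $\Delta \ge 9q\ln(q\Delta)$.
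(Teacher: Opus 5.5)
Your proposal is correct and takes the same approach as the paper, whose proof of this corollary is the single observation that the whole analysis behind Theorem~\ref{thm:counting-results} still holds at $\beta=\infty$, where $1-\e^{-\beta}=1$ turns the parameter condition into $\Delta\ge 9q\ln(q\Delta)$ and $Z_{G,q}(\infty)$ counts proper $q$-colorings. Your checklist (the factors $\e^{-\beta k}$ becoming $0/1$ indicators, positivity of the ground-phase weight $\abs{A}^n\abs{B}^n$, exact computability of polymer weights) spells that observation out; the paper also treats $\beta=\infty$ explicitly inside the first-moment tail bound, by noting that only $r=0$ can contribute there.
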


\subsection{Technical overview}

We state here a brief overview of techniques used in our work, with some details in~\Cref{sec:preliminaries,sec:anti-ferromagnetic-Potts}.

\subsubsection*{\bf Existence of ordered phases}

To show~\Cref{thm:Glauber-dynamics-slow-mixing,thm:counting-results}, we investigate the geometric structure of the configuration space of the anti-ferromagnetic Potts model on random regular bipartite graphs.
A key observation is that, at low temperature, monochromatic edges are penalized, and expansion makes it costly for a color to appear substantially on both sides of the bipartition.
Consequently, typical configurations are close to one of the `coloring patterns', similar to the anti-ferromagnetic $3$-states case on the lattice $\mathbb{Z}^{\Delta}$ in~\cite{GKRS15,FS19}.
Based on this observation, we apply the concept of patterns in~\cite{JKP19}\footnote{We note here that the concept of patterns in~\cite{JKP19} is a special case of the maximal bi-cliques in~\cite{GGS21}.} to define a rank measuring the number of vertices that disagree with the nearest pattern.
The main probabilistic estimate shows that high-rank configurations have exponentially small Gibbs weight.

\subsubsection*{\bf Conductance}
For the slow mixing result, we use the standard conductance method of Sinclair and Jerrum~\cite{SJ89}.
By a type of the Cheeger's inequality, there exists a bridge between the conductance of a given Markov chain and the second largest eigenvalue of its Laplacian matrix, leading to both upper and lower bound to its mixing time.
For anti-ferromagnetic Ising models ($q = 2$ in Potts) on a family of regular bipartite expander graphs, a recent work by Geisler et al.~\cite{GKSW26} has shown that the conductance of Glauber dynamics is exponentially small, and hence the mixing time is exponential in the size of the instance. Similarly, to show~\Cref{thm:Glauber-dynamics-slow-mixing}, we apply the method of conductance. 
We show how to find a subset of the state space with small conductance and apply a simpler lower bound for the mixing time by Levin and Peres~\cite{LP17} than the one induced by the Cheeger's inequality.
To find the very subset, we also prove some useful results to describe the geometry of the anti-ferromagnetic $q$-Potts models on random regular bipartite graphs.

\subsubsection*{\bf Abstract polymer models}

The \emph{abstract polymer model} together with \emph{the cluster expansion} is a useful tool to study statistical mechanics models.
For approximate counting problems, the abstract polymer model has been successfully used to design $\FPTAS$ or $\FPRAS$ on expander graphs for the hard-core models \cite{JKP19,CGGPSV21,LLLM22,CGSV22}, ferromagnetic Potts models \cite{JKP19,CGGPSV21,GGS22,CGSV22}, general spin systems \cite{GGS21,CGSV22} and proper colorings \cite{JKP19,LLLM22,CGSV22}.

We follow the methodology in~\cite{JKP19,LLLM22,GGS21,CGSV22}.
Based on our description of ordered phases, we specify each configuration in $\Omega$ with patterns.
Therefore, for each pattern, we design a suitable abstract polymer model and follow a routine way to obtain an efficient approximation scheme to the partition function by verifying the \Kotecky-Preiss condition.

\subsection{Comparison with related works}

The torpid mixing of Glauber dynamics has been recently shown on the anti-ferromagnetic Ising models with uniform external fields by~\cite{GKSW26} on bipartite expander graphs.
Their result relies on an observation that the set of `balanced' configurations owns an exponentially small weight.
For the multi-state Potts model, the main additional challenge arises from the difficulty to identify an appropriate analogue of `balanced configurations' and upper bound their total weights.
To overcome this, we investigate the geometry of the configuration space and show that the configurations are highly concentrated around some specific `ground states'.
Therefore, we are able to define a proper `boundary' and show the small conductance of the Glauber dynamics.

For approximately counting, our work is close to the polymer algorithms for proper colorings by Jenssen, Keevash and Perkins~\cite{JKP19}, and for general spin systems by Galanis, Goldberg and Stewart~\cite{GGS21} and Chen et al.~\cite{CGSV22}. It has been shown in~\cite{JKP19} that the state space can be viewed as a union of configurations specified by different coloring `patterns' representing assigning almostly disjoint colors to each side for proper vertex-colorings.
The latter work~\cite{GGS21} focuses on soft-constrained models and applies maximal bi-cliques which is a generalization of patterns and has been successfully used to design an efficient sampler on the ferromagnetic case in~\cite{CGGPSV21}.
As a result,~\cite{GGS21} gives an $\FPRAS$ for counting the anti-ferromagnetic Potts models on random regular bipartite graphs, under a stronger constraint on the degree $\Delta$ and the number of colors $q$.
The following work~\cite{CGSV22} adapts the method of phase vectors and phase maximality in~\cite{GSV15,CGGPSV21} and improves the threshold for proper vertex-coloring models.
However, their method requires a strict condition on phase maximality, which is seemingly hard to verify for Potts models.
To overpass this obstacle, we use a more direct analysis for anti-ferromagnetic Potts models on random regular bipartite graphs and design a \emph{deterministic} algorithm whenever $\ParameterCondition$.

\subsection{Organization of this paper}

We present preliminaries including some notations, definitions and useful lemmas of graph-expansion and Markov chains in~\Cref{sec:preliminaries}.
Then in~\Cref{sec:anti-ferromagnetic-Potts}, we develop the rank and pattern decomposition and prove
the high-rank tail bound for random regular bipartite graphs.
The proof of slow mixing (\Cref{thm:Glauber-dynamics-slow-mixing}) is presented in~\Cref{sec:slow-mixing-theorem}. Our counting algorithm and the proof is given in~\Cref{sec:counting-theorem}.
Finally, in~\Cref{sec:conclusion}, we conclude our work with some open questions.

\subsection*{Statement on AI use}

The improvement of the tail bound~\Cref{lem:high-rank-tail-bound} is optimized under the help of ChatGPT 5.6 Pro Sol based on the original version which applies for bipartite expanders.
The choice of the parameter $\theta$ and related calculations are also derived with the help of ChatGPT 5.6 Pro Sol.

\section{Preliminaries} \label{sec:preliminaries}

\subsection{Mathematical notations}

We list notations used in the paper here.
We use $\e$ to denote the natural logarithm base and $\ln$ to denote the natural logarithm function.
For a natural number $n$, we use $[n]$ to denote the set $\set{1, 2, \ldots, n}$.
For two sets $A, B$, we use $A \sqcup B$ to denote the disjoint union of $A$ and $B$.

A graph $G$ consists of a finite vertex set $V$ and an edge set $E \subseteq V \times V$.
Given a graph $G = (V, E)$, we denote by $N_G(S)$ the neighborhood of $S$ for every $S \subseteq V$, \IE,
$$
    N_G(S) \defeq \set{v \in V \setminus S \mid \exists u \in S, (u, v) \in E}
$$
and for simplicity, use $N_G(v)$ to denote $N_G(\set{v})$ when $S = \set{v}$ for each $v \in V$.
For two vertex subsets $S, T \subseteq V$, we use $E_G(S, T) \defeq \set{(u, v) \in E \midc u \in S \land v \in T}$ to denote the edges between $S$ and $T$.
For a subset $S \subseteq V$ and an integer $\ell \ge 1$, we say $S$ is $\ell$-connected if the induced subgraph $G^{\ell}[S]$ is connected.
For simplicity, we use $G = (V = \calL \sqcup \calR, E)$ to denote a \emph{bipartite graph}, where $(\calL, \calR)$ forms a partition of the vertex set $V$, and the edge set $E \subseteq \calL \times \calR$.

For an assignment $\sigma : U \to [q]$ and a subset $S \subseteq U$, we use $\sigma(S)$ to denote the partial assignment of $\sigma$ projected on $S$, \IE, $\sigma|_S$ and use $\sigma(v)$ to denote $\sigma(\set{v})$ for simplicity when $S = \set{v}$.
With little abuse of notation, we also use $\sigma(v)$ to denote the color assigned to $v$ by $\sigma$.
For a color subset $C \subseteq [q]$, we denote by $\sigma^{-1}(C)$ the subset $S \subseteq U$ assigned colors in $C$ by $\sigma$, \IE, $\sigma^{-1}(C) \defeq \set{v \in U \mid \sigma(v) \in C}$.

For a state space $[q]^U$, we use $d_H(\cdot, \cdot)$ to denote the Hamming distance on it, \IE,
\begin{align*}
    \forall \sigma, \tau \in [q]^U, \quad d_H(\sigma, \tau) = \sum_{v \in U} \id{\sigma(v) \neq \tau(v)}.
\end{align*}
We extend $d_H(\cdot, \cdot)$ to subsets of $[q]^U$ as
\begin{align*}
    \forall \Lambda, \Pi \subseteq [q]^{U}, \quad d_H(\Lambda, \Pi) = \min_{\sigma \in \Lambda, \tau \in \Pi} d_H(\sigma, \tau).
\end{align*}

Fix a positive integer $d \ge 2$.
The probability simplex of dimension $d$ contains all $d$-dimensional probability vectors, \IE,
\begin{align*}
    \Delta_d \defeq \set{\vecx \in \mathbb{R}^d \mid \forall i \in [d], x_i \ge 0 \land \sum_{i = 1}^d x_i = 1}.
\end{align*}
We define the entropy $H$ of $\vecx$ as $H(\vecx) = \sum_{i = 1}^d x_i \ln x_i$ with convention $0 \ln 0 = 0$.
With little abuse of notations, the entropy function $H : [0, 1] \to \mathbb{R}$ is defined as $H(x) = -x\ln{x} - (1 - x)\ln{(1 - x)}$ with $H(0) = H(1) = 0$.

\subsection{Random regular bipartite graphs} \label{sec:random-regular-bipartite-graph}

Now we discuss here some combinatorial properties of random regular bipartite graphs.
For positive integers $\Delta, n \in \mathbb N_{+}$, we use $\calG_{n, \Delta}^{\BIP}$ to denote the probabilistic model of random $\Delta$-regular bipartite graphs of $2n$ vertices.

To study random regular bipartite graphs $\calG_{n, \Delta}^{\BIP}$, we introduce another classical probabilistic model of random graphs named the \emph{configuration model}.
\begin{definition}[Configuration model] \label{def:configuration-model}
    Let $G = (V = \calL \sqcup \calR, E) \sim \calG_{n, \Delta}^{\CM}$ be a bipartite multi-graph generated by the following procedure:
    \begin{itemize}
        \item The vertex set $V = \calL \sqcup \calR$ is set by $\calL = [n]$ and $\calR = [n]$.
        \item To generate $E$, we sample $\Delta$ perfect matchings of the complete bipartite graph $K_{n, n} = (\calL \sqcup \calR, \calL \times \calR)$ uniformly at random and for each perfect matching, we add edges in it to $E$. 
    \end{itemize}
\end{definition}
The following proposition builds a connection between $\calG_{n, \Delta}^{\CM}$ and $\calG_{n, \Delta}^{\BIP}$.
\begin{proposition}[\cite{Wormald99}] \label{prop:configuration-model-to-random-regular-bipartite-graph}
    For an event $\calE$, if $\Pr[G \sim \calG_{n, \Delta}^{\CM}]{G \in \calE} = o(1)$, then $\Pr[G \sim \calG_{n, \Delta}^{\BIP}]{G \in \calE} = o(1)$.
\end{proposition}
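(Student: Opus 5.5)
The plan is the standard contiguity argument between the configuration model and the uniform simple model. Let $\calS$ denote the event that the multigraph $G \sim \calG_{n, \Delta}^{\CM}$ is simple, that is, that the $\Delta$ perfect matchings are pairwise edge-disjoint. The first observation is that, conditioned on $\calS$, $G$ is uniformly distributed over all $\Delta$-regular bipartite simple graphs on $\calL \sqcup \calR$. Indeed, a fixed simple $\Delta$-regular bipartite graph $H$ arises from exactly the ordered tuples $(M_1, \ldots, M_\Delta)$ of disjoint perfect matchings whose union is $E(H)$. These are the ordered $1$-factorizations of $H$, and a priori their number depends on $H$. So the first step is to handle this multiplicity.

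If the counts of $1$-factorizations differ across graphs, I would reduce to a model where each graph has the same multiplicity. One option is the pairing version, in which each vertex has $\Delta$ labelled half-edges and a uniform bijection between left and right half-edges is drawn; every simple graph then corresponds to exactly $(\Delta!)^{2n}$ pairings. The other option is to compare the two measures through the ratio of factorization counts.

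The cleaner route is to invoke the known result (Wormald's survey, or Janson's contiguity theory, or McKay's enumeration) that $\Pr[G \sim \calG_{n, \Delta}^{\CM}]{\calS}$ is bounded below by a positive constant $c(\Delta)$ independent of $n$. With this, the argument is immediate when the conditional law is uniform:
\[
\Pr[G \sim \calG_{n, \Delta}^{\BIP}]{G \in \calE} = \Pr[\CM]{G \in \calE \mid \calS} \le \frac{\Pr[\CM]{G \in \calE}}{\Pr[\CM]{\calS}} \le \frac{o(1)}{c(\Delta)} = o(1).
\]
To justify the lower bound on $\Pr{\calS}$, I would show that the number of repeated edges converges in distribution to a Poisson random variable with mean $\binom{\Delta}{2}$. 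For any fixed pair of matchings, the number of shared edges is asymptotically Poisson$(1)$, and the pairs are asymptotically independent, which can be checked by the method of factorial moments. This gives $\Pr{\calS} \to \e^{-\binom{\Delta}{2}} > 0$.

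The main obstacle is the non-uniformity of the conditional law caused by varying numbers of $1$-factorizations. Rather than recompute this, I would cite that the union of $\Delta$ independent uniform perfect matchings is contiguous to the uniform random $\Delta$-regular bipartite graph, which is exactly the content attributed to \cite{Wormald99} (the superposition/contiguity results, Theorem 4.15 and its bipartite analogues). Contiguity gives that every $o(1)$ event in one model is $o(1)$ in the other, and this is precisely the statement of \Cref{prop:configuration-model-to-random-regular-bipartite-graph}.
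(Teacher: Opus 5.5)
This is essentially the paper's approach: the paper gives no argument beyond citing \cite{Wormald99}, and your proof likewise rests on the cited contiguity between the superposition of $\Delta$ uniform perfect matchings and the uniform model, while your side remarks are correct (conditioning on simplicity alone does not give the uniform law because the number of ordered $1$-factorizations varies, and the probability of simplicity tends to $\e^{-\binom{\Delta}{2}}$). Just state the cited fact precisely: it is the superposition \emph{conditioned on being simple} that is contiguous to $\calG_{n, \Delta}^{\BIP}$ (this holds for $\Delta \ge 3$ but fails for $\Delta = 2$, where the conditional law is tilted by $2^{c(H)}$ with $c(H)$ the number of cycles), and it is only together with your constant lower bound on the probability of simplicity that this yields the one-sided transfer asserted in the proposition.
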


An important property of random regular bipartite graphs is the \emph{vertex expansion}.
For a graph $G$, we use $\lambda_2(G)$ to denote the second largest eigenvalue of the adjacent matrix of $G$.
The following lemma in~\cite{GGS21} illustrates the vertex expansion of regular bipartite graphs by the spectral method.
\begin{lemma}[\cite{GGS21}] \label{lem:vertex-bipartite-expansion}
    For a $\Delta$-regular bipartite graph $G = (V = \calL \sqcup \calR, E)$ such that the second largest eigenvalue of $G$ satisfies that $\lambda_2(G) \le \lambda$, then it holds that for every $S \subseteq \calL$ or $S \subseteq \calR$,
    \begin{align*}
        \abs{N_G(S)} \ge \frac{\Delta^2}{\lambda^2 + (\Delta^2 - \lambda^2) \abs{S}/n} \abs{S}.
    \end{align*}
\end{lemma}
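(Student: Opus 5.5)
This is the standard spectral-expansion (Tanner-type) argument via the expander mixing lemma. We apply it to $S$ and $T = N_G(S)$.

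\textbf{Setup.} Take $S \subseteq \calL$ with $|S| = s$, and let $T = N_G(S) \subseteq \calR$ with $|T| = t$. Every one of the $\Delta s$ edges leaving $S$ lands in $T$, so
\begin{align*}
    e(S, T) = \Delta s .
\end{align*}

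\textbf{Bipartite mixing estimate.} Let $A$ be the adjacency matrix. Its top eigenvectors are $\mathbf 1$ (eigenvalue $\Delta$) and the signed vector $\mathbf 1_{\calL} - \mathbf 1_{\calR}$ (eigenvalue $-\Delta$). On the orthogonal complement of these two vectors, $\abs{\langle x, A y\rangle} \le \lambda \|x\| \|y\|$, reading the hypothesis $\lambda_2(G) \le \lambda$ as a bound on all nontrivial eigenvalues in absolute value.

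Decompose $\mathbf 1_S = \frac{s}{n}\mathbf 1_{\calL} + f$ and $\mathbf 1_T = \frac{t}{n}\mathbf 1_{\calR} + g$. Here $f$ is supported on $\calL$ with $\sum f = 0$, and $g$ is supported on $\calR$ with $\sum g = 0$. Both $f$ and $g$ are then orthogonal to $\mathbf 1$ and to $\mathbf 1_{\calL} - \mathbf 1_{\calR}$. Moreover,
\begin{align*}
    \|f\|^2 = s(1 - s/n), \qquad \|g\|^2 = t(1 - t/n).
\end{align*}

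Expanding $\mathbf 1_S^\top A \mathbf 1_T$, the term $\frac{s}{n}\mathbf 1_{\calL}^\top A \mathbf 1_{\calR}\frac{t}{n}$ equals $\Delta s t / n$. Each cross term vanishes, because $A\mathbf 1_{\calR} = \Delta \mathbf 1_{\calL}$ is orthogonal to $f$, and similarly for $g$. This gives
\begin{align*}
    \Delta s \le \frac{\Delta s t}{n} + \lambda \sqrt{s(1 - s/n)\, t(1 - t/n)} .
\end{align*}

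\textbf{Solving for $t$.} Divide by $s$ and write $\alpha = s/n$, $\tau = t/n$. Then
\begin{align*}
    \Delta(1 - \tau) \le \lambda \sqrt{(1 - \alpha)\,\tau(1 - \tau)/\alpha} .
\end{align*}
If $\tau = 1$ the claimed bound is trivial, since the right-hand side of the lemma is at most $n$. Otherwise, square both sides and divide by $(1 - \tau)$ to get
\begin{align*}
    \Delta^2 \alpha (1 - \tau) \le \lambda^2 (1 - \alpha)\tau .
\end{align*}
Rearranging,
\begin{align*}
    \tau \ge \frac{\Delta^2 \alpha}{\lambda^2 + (\Delta^2 - \lambda^2)\alpha},
\end{align*}
and multiplying by $n$ gives exactly
\begin{align*}
    t \ge \frac{\Delta^2 s}{\lambda^2 + (\Delta^2 - \lambda^2)s/n} .
\end{align*}

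\textbf{Main obstacle.} The delicate step is the bipartite mixing estimate. Two things must be checked there: that the cross terms vanish, which needs both the $\mathbf 1$ and the $\pm$ eigenvector handled correctly, and that the correct centred norms $s(1 - s/n)$ and $t(1 - t/n)$ are used. The factors $(1-\alpha)$ and $(1-\tau)$ are exactly what produce the sharp form of the bound. Everything after that is algebra.

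The case $S \subseteq \calR$ follows by symmetry.
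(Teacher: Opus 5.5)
The paper does not prove this lemma; it imports it from \cite{GGS21}, and it is Tanner's classical spectral bound. So there is no in-paper argument to compare against.

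Your proof is correct and self-contained. The following steps all check out:
the decomposition of $\mathbf 1_S$ and $\mathbf 1_{N_G(S)}$ into side-averages plus mean-zero parts $f,g$, which are orthogonal to both $\mathbf 1_{\calL}$ and $\mathbf 1_{\calR}$;
the vanishing of the cross terms;
the centred norms $s(1-s/n)$ and $t(1-t/n)$;
the final rearrangement $\Delta^2\alpha(1-\tau)\le\lambda^2(1-\alpha)\tau$.
Together they give exactly the stated bound. The last division is legitimate because $\lambda^2+(\Delta^2-\lambda^2)\alpha=\lambda^2(1-\alpha)+\Delta^2\alpha>0$ for $0<\alpha\le 1$.

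One step you present as a ``reading'' of the hypothesis is actually a fact you should state. The spectrum of a bipartite graph is symmetric about $0$, so $\lambda_{2n-1}(G)=-\lambda_2(G)\ge-\lambda$. Hence every eigenvalue of $A$ on the orthogonal complement of $\mathrm{span}\{\mathbf 1_{\calL},\mathbf 1_{\calR}\}$ lies in $[-\lambda,\lambda]$. This also gives $\lambda\ge 0$, which you use when squaring. It covers the disconnected case as well, where $\lambda_2=\Delta$.

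For comparison, Tanner's original argument is a little shorter. Let $B$ be the $n\times n$ biadjacency matrix. The vector $B^\top\mathbf 1_S$ is supported on $T=N_G(S)$ and its entries sum to $\Delta s$, so Cauchy--Schwarz gives $(\Delta s)^2\le t\,\|B^\top\mathbf 1_S\|^2$. Also $\|B^\top\mathbf 1_S\|^2=\mathbf 1_S^\top BB^\top\mathbf 1_S\le \Delta^2 s^2/n+\lambda^2 s(1-s/n)$, because $BB^\top$ has eigenvalue $\Delta^2$ on the all-ones vector and every other eigenvalue is at most $\lambda_2(G)^2$. Dividing yields the lemma directly.

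That route uses the spectral bound only on the $S$ side. It therefore needs no centring of $\mathbf 1_T$, no square root, and no separate case $t=n$. Your bilinear expander-mixing version is more symmetric and reaches the identical inequality after squaring, so nothing is lost in sharpness.
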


To apply~\Cref{lem:vertex-bipartite-expansion} to $\calG_{n, \Delta}^{\BIP}$, we use the following result by~\cite{BDH22}.
\begin{proposition}[\cite{BDH22}] \label{prop:random-regular-bipartite-spectrum}
    For every fixed $\zeta > 0$, with high probability over $G \sim \calG_{n, \Delta}^{\BIP}$, $\lambda_2(G) \le 2\sqrt{\Delta - 1} + \zeta$.
\end{proposition}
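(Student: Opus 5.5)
Since this is a cited result of~\cite{BDH22}, I would only reconstruct its proof at the level of a plan. I read $\lambda_2(G)$ as the largest eigenvalue in absolute value other than the trivial pair $\pm\Delta$, since every bipartite regular graph has both. The plan is to prove the bound first for $G \sim \calG_{n, \Delta}^{\CM}$ (the union of $\Delta$ independent uniform perfect matchings) and then transfer it. The transfer uses \Cref{prop:configuration-model-to-random-regular-bipartite-graph} with the event $\calE = \set{\lambda_2(G) > 2\sqrt{\Delta - 1} + \zeta}$. This is legitimate because the proposition only needs $\Pr[G \sim \calG_{n, \Delta}^{\CM}]{G \in \calE} = o(1)$; the multigraph case causes no difficulty, since the spectral statement makes sense for multigraphs.

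For the configuration model I would follow the non-backtracking (Hashimoto) operator route of Bordenave. Let $B$ be the non-backtracking matrix on the $2\Delta n$ directed edges. The Ihara--Bass identity relates the spectra of $A$ and $B$: each eigenvalue $\lambda$ of $A$ yields eigenvalues $\mu$ of $B$ solving $\mu^2 - \lambda\mu + (\Delta - 1) = 0$. Hence $\abs{\lambda} > 2\sqrt{\Delta - 1} + \zeta$ forces a real eigenvalue $\mu$ of $B$ with $\abs{\mu} > \sqrt{\Delta - 1} + \zeta'$, where $\zeta' = \zeta'(\zeta, \Delta) > 0$. Bipartiteness supplies the trivial eigenvalues $\pm(\Delta - 1)$ of $B$, coming from $\pm\Delta$ for $A$. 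So it suffices to show that, with high probability, every eigenvalue of $B$ outside $\set{\pm(\Delta - 1)}$ has modulus at most $\sqrt{\Delta - 1} + o(1)$.

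The key steps, in order, are the following.
\begin{enumerate}
\item Show that with high probability every ball of radius $\ell = c\log_{\Delta - 1} n$ contains at most one cycle (tangle-freeness). This is a first-moment count of small subgraphs with excess at least two, using the independence of the matchings.
\item On this event, decompose $B^{\ell}$ as a centered part $\underline{B}^{(\ell)}$ plus a low-rank remainder. The centered part is obtained by subtracting the mean $1/n$ of each matching entry along tangle-free non-backtracking paths, and the remainder is aligned with the trivial left and right vectors $\mathbf{1}$ and $\chi$, where $\chi$ is the $\pm$-sign vector of the bipartition.
\item Bound $\norm{\underline{B}^{(\ell)}} \le (\log n)^{C}(\Delta - 1)^{\ell/2}$ by the high trace method. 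This means computing $\E{\mathrm{tr}\bigl[(\underline{B}^{(\ell)}\underline{B}^{(\ell)*})^{m}\bigr]}$ with $m \asymp \log n/\log\log n$ and counting closed concatenations of tangle-free paths by their number of vertices, edges and excess. The moments of centered entries of a uniform perfect matching behave like independent entries of size $1/n$ up to $(1 + O(m\ell/n))$ corrections.
\item Conclude with a perturbation argument, namely Bauer--Fike applied to $B^{\ell}$ restricted to the complement of the trivial directions. This shows the nontrivial eigenvalues satisfy $\abs{\mu}^{\ell} \le (\log n)^{C}(\Delta - 1)^{\ell/2}$, so $\abs{\mu} \le \sqrt{\Delta - 1}\,(1 + o(1))$.
\end{enumerate}

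I expect step (3) to be the main obstacle. The path-counting combinatorics must be sharp enough to lose only polylogarithmic factors in $\ell$, and this requires carefully enumerating equivalence classes of path shapes by excess. The matching structure also needs care: edges are not independent, and both sides must have equal size. Two further details must be handled throughout the decomposition, since otherwise they would contaminate the remainder term: the trivial eigenspace is two-dimensional because of the sign vector of the bipartition, and the presence of multi-edges must be tracked.
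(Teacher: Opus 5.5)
The paper does not prove this proposition; it quotes Brito--Dumitriu--Harris, whose proof follows the route you outline: Bordenave's non-backtracking method (Ihara--Bass, tangle-freeness, centering, high trace method) adapted to bipartite biregular graphs. They work in a half-edge configuration model, whose law conditioned on simplicity is exactly uniform. You instead use the union of $\Delta$ matchings and transfer via \Cref{prop:configuration-model-to-random-regular-bipartite-graph}. That model is exactly a random $n$-lift of the two-vertex graph with $\Delta$ parallel edges, so there the claim is already a special case of Bordenave's random-lift theorem. The outline is sound, but two ingredients are misstated, and the first sits in the step you call the crux.

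\textbf{Step (3).} Centered entries of a uniform perfect matching do \emph{not} behave like independent entries up to a multiplicative $(1+O(m\ell/n))$ factor. For independent entries, a product containing a singly visited edge has mean zero; here it does not.
\begin{itemize}
\item Write $\underline{M}=M-\frac{1}{n}J$ for one matching. For $x\neq x'$ and $y\neq y'$, $\mathbb{E}[\underline{M}_{xy}\underline{M}_{x'y'}]=\frac{1}{n^{2}(n-1)}$.
\item For two partners of one vertex in the same matching, $\mathbb{E}[\underline{M}_{xy}\underline{M}_{xy'}]=-\frac{1}{n^{2}}$, which is as large as a doubly visited edge.
\item What Bordenave proves, and what Brito--Dumitriu--Harris redo, is an additive estimate. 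For consistent paths it is roughly $n^{-a}(Ck/\sqrt{n})^{a_1}$ up to combinatorial factors. Here $k=2\ell m$, $a$ counts distinct edges and $a_1$ counts those visited once. Inconsistent pairs, as in the second example, need separate bookkeeping.
\item The path count must therefore keep paths with singly visited edges and charge each a factor $k/\sqrt{n}$. This needs $k=o(\sqrt{n})$, which is harmless for $\ell m=O(\log^{2}n)$. Discarding such paths, as the independent heuristic suggests, is not justified.
\end{itemize}

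\textbf{Step (4).} Bauer--Fike is ill-suited to the non-normal $B$ and by itself does not count eigenvalues. Your reduction ``every eigenvalue of $B$ outside $\{\pm(\Delta-1)\}$ is small'' also loses multiplicity. A disconnected $G$ has $\lambda_2(G)=\Delta$, yet the corresponding eigenvalues of $B$ are $\Delta-1$ and $1$, which your criterion does not see.
\begin{itemize}
\item The clean fix uses $\mathbf{1}^{*}B=(\Delta-1)\mathbf{1}^{*}$ and $\chi^{*}B=-(\Delta-1)\chi^{*}$, with $\chi$ the side-sign of the head of a directed edge.
\item These make $W=\{\mathbf{1},\chi\}^{\perp}$ invariant under $B$. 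Then $B|_{W}$ carries every eigenvalue except one copy each of $\pm(\Delta-1)$, and all of these have modulus at most $\|B^{\ell}|_{W}\|^{1/\ell}$.
\item That restricted norm is what your step (2) decomposition must bound. The remainder there is not literally low rank. Its mean part is small after restriction to $W$, and the tangled-path terms are bounded by the same trace method.
\end{itemize}
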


The following corollary is immediate from~\Cref{lem:vertex-bipartite-expansion,prop:random-regular-bipartite-spectrum}.
\begin{corollary} \label{cor:random-regular-bipartite-expansion}
    For every fixed $\zeta > 0$, with high probability over $G = (V = \calL \cup \calR, E) \sim \calG_{n, \Delta}^{\BIP}$, for every $S \subseteq \calL$ or $S \subseteq \calR$,
    \begin{align*}
        \abs{N_G(S)} \ge \frac{\Delta^2}{(2\sqrt{\Delta - 1} + \zeta)^2+ (\Delta^2 - (2\sqrt{\Delta - 1} + \zeta)^2) \abs{S}/n} \abs{S}.
    \end{align*}
\end{corollary}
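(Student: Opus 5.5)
The corollary follows by combining \Cref{lem:vertex-bipartite-expansion} with \Cref{prop:random-regular-bipartite-spectrum}. Fix $\zeta > 0$ and set $\lambda \defeq 2\sqrt{\Delta - 1} + \zeta$. By \Cref{prop:random-regular-bipartite-spectrum}, the event $\calE_\zeta \defeq \set{\lambda_2(G) \le \lambda}$ holds with high probability over $G \sim \calG_{n, \Delta}^{\BIP}$. We work on this event from now on.

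On $\calE_\zeta$, the graph $G$ is a $\Delta$-regular bipartite graph with $\lambda_2(G) \le \lambda$. So \Cref{lem:vertex-bipartite-expansion} applies with this $\lambda$. For every $S \subseteq \calL$ or $S \subseteq \calR$ it gives
\begin{align*}
    \abs{N_G(S)} \ge \frac{\Delta^2}{\lambda^2 + (\Delta^2 - \lambda^2)\abs{S}/n}\abs{S}.
\end{align*}
Substituting $\lambda = 2\sqrt{\Delta-1}+\zeta$ yields exactly the displayed bound. The bound holds simultaneously for all such $S$, because the lemma is deterministic once the spectral condition holds. Hence the probability of the conclusion is at least $\Pr{\calE_\zeta} = 1 - o(1)$.

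Only one point needs care. \Cref{lem:vertex-bipartite-expansion} is stated with the hypothesis $\lambda_2(G) \le \lambda$, not $\lambda_2(G) = \lambda$. It is therefore applied as stated, with no need for monotonicity of the right-hand side in $\lambda$. If the cited version were instead stated with equality, one would check that the right-hand side is nonincreasing in $\lambda$ on $\lambda \le \Delta$. This holds because the denominator equals $\lambda^2(1 - \abs{S}/n) + \Delta^2\abs{S}/n$, which is increasing in $\lambda$ whenever $\abs{S} \le n$.

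There is also the degenerate case where $\zeta$ is so large that $\lambda > \Delta$. The spectral event then holds trivially, since $\lambda_2(G) \le \Delta$. The lemma still applies as stated, so the same argument goes through.
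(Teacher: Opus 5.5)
Your proposal is correct and matches the paper, which records the corollary as immediate from \Cref{lem:vertex-bipartite-expansion} applied with $\lambda = 2\sqrt{\Delta-1}+\zeta$ on the high-probability spectral event of \Cref{prop:random-regular-bipartite-spectrum}. Your extra remarks on monotonicity in $\lambda$ and on the case $\lambda > \Delta$ are sound but not needed, since the lemma already assumes only $\lambda_2(G) \le \lambda$.
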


\subsection{Markov chains}

We state here some basic definitions and lemmas for Markov chains and for more details, please see~\cite{LP17}.

For a state space $\Omega$, a Markov chain $M$ consists of a transition kernel $P$ on $\Omega$ with its stationary distribution $\pi$.
For brevity, we sometimes use the transition matrix $P$ to denote the Markov chain $M$.
In this work, we only consider \emph{ergodic} Markov chains (see details in~\cite{LP17}), and for an ergodic Markov chain $M$, we say it is reversible with respect to $\pi$ if the detailed balanced equation holds:
\begin{align} \label{eq:detailed-balanced-equation}
    \forall \sigma, \tau \in \Omega, \quad \pi(\sigma) P(\sigma, \tau) = \pi(\tau) P(\tau, \sigma).
\end{align}

Recall the Glauber dynamics $P^{\GD}$ with its stationary distribution $\pi$.
It is direct to verify the following fact.
\begin{fact} \label{fact:Glauber-dynamics-reversibility}
    The single-site Glauber dynamics $P^{\GD}$ is reversible w.r.t. its stationary distribution $\pi$.
\end{fact}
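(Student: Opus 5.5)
We verify the detailed balance equation~\eqref{eq:detailed-balanced-equation} directly, with $\pi = \mu$ the Gibbs distribution. We split the pairs $(\sigma, \tau) \in \Omega \times \Omega$ into three cases according to $d_H(\sigma, \tau)$.

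\emph{Case $d_H(\sigma, \tau) \ge 2$.} A single step of $P^{\GD}$ changes at most one vertex. Hence $P^{\GD}(\sigma, \tau) = P^{\GD}(\tau, \sigma) = 0$, and both sides of~\eqref{eq:detailed-balanced-equation} vanish.

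\emph{Case $\sigma = \tau$.} The identity is trivial.

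\emph{Case $d_H(\sigma, \tau) = 1$.} Let $v$ be the unique vertex where $\sigma$ and $\tau$ differ. The only way to move from $\sigma$ to $\tau$ is to pick $v$ (probability $1/\abs{V}$) and resample its spin to $\tau(v)$. Since $\sigma$ and $\tau$ agree on $V \setminus \set{v}$, they have the same set of extensions
\[
\Omega_v \defeq \set{\eta \in \Omega \midc \eta(V \setminus \set{v}) = \sigma(V \setminus \set{v})}.
\]
Therefore
\[
P^{\GD}(\sigma, \tau) = \frac{1}{\abs{V}} \cdot \frac{\pi(\tau)}{\pi(\Omega_v)}, \qquad P^{\GD}(\tau, \sigma) = \frac{1}{\abs{V}} \cdot \frac{\pi(\sigma)}{\pi(\Omega_v)}.
\]
Multiplying the first by $\pi(\sigma)$ and the second by $\pi(\tau)$ gives the same quantity $\pi(\sigma)\pi(\tau)/(\abs{V}\,\pi(\Omega_v))$, which is exactly~\eqref{eq:detailed-balanced-equation}.

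The only point requiring care is that the conditional distribution is well defined, i.e.\ $\pi(\Omega_v) > 0$. For finite $\beta$ every configuration has positive weight $\e^{-\beta m_G(\sigma)}$, so this holds. In the zero-temperature case one restricts to $\sigma, \tau$ in the support of $\pi$. There $\pi(\Omega_v) \ge \pi(\sigma) > 0$, and the same computation goes through.

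Summing the balance equation over $\sigma$ also shows that $\pi$ is stationary for $P^{\GD}$.
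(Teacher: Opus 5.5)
Your proposal is correct and takes the same approach as the paper, which just states that detailed balance~\eqref{eq:detailed-balanced-equation} can be checked directly; your case analysis on $d_H(\sigma,\tau)$ carries out that check in full. Your remark on the conditional distribution being well defined at $\beta=\infty$ goes slightly beyond the paper, which only uses the dynamics at finite $\beta$.
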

\begin{proof}
    We verify~\eqref{eq:detailed-balanced-equation} for $P^{\GD}$ and $\pi$ and it is direct to show the fact.
\end{proof}

We put our attention on the \emph{mixing time} of Markov chains.
For two probability distributions $\mu, \nu$ on $\Omega$, the \emph{total variation distance} is defined as
\begin{align*}
    \TV{\mu}{\nu} \defeq \frac{1}{2} \sum_{x \in \Omega} \abs{\mu(x) - \nu(x)} = \max_{\Lambda \subseteq \Omega} \abs{\mu(\Lambda) - \nu(\Lambda)}.
\end{align*}
For a positive real $\eps \in (0, 1)$, define the \emph{mixing time} of $M$ with error $\eps$ as
\begin{align*}
    \tau_{\mix}(M, \eps) \defeq \inf\set{t > 0 \midc \forall x \in \Omega, \TV{P^t(x, \cdot)}{\pi} \le \eps}.
\end{align*}
We refer to the value of $\tau_{\mix}(M, \cdot)$ when $\eps = 1/4$ as the mixing time of $M$, \IE,
\begin{align*}
    \tau_{\mix}(M) \defeq \tau_{\mix}(M, 1/4).
\end{align*}

For $\sigma, \tau \in \Omega$, we define $Q(\sigma \to \tau) \defeq \pi(\sigma) P(\sigma, \tau)$ and for two subsets $\Lambda, \Pi \subseteq \Omega$, we define
\begin{align*}
    Q(\Lambda \to \Pi) \defeq \sum_{\sigma \in S} \sum_{\tau \in T} Q(\sigma \to \tau).
\end{align*}
The \emph{conductance} of a subset $\Lambda \subseteq \Omega$ is defined as:
\begin{align*}
    \Phi_M(\Lambda) \defeq \frac{Q(\Lambda \to (\Omega \setminus \Lambda))}{\pi(\Lambda)}
\end{align*}
and the conductance of $M$ is defined by
\begin{align*}
    \Phi_M \defeq \inf_{\Lambda \subseteq \Omega : 0 < \pi(\Lambda) \le 1/2} \Phi_M(\Lambda).
\end{align*}
The following lemma shows that the mixing time can be lower bounded by the conductance.
\begin{lemma}[{\cite[Theorem 7.4]{LP17}}] \label{lem:slow-mixing-time}
    For an ergodic Markov chain $M$, it holds that
    \begin{align*}
        \tau_{\mix}(M) \ge \frac{1}{4\Phi_{M}}.
    \end{align*}
\end{lemma}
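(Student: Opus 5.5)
This lemma is the standard conductance lower bound of Levin and Peres, so I would reproduce their short argument. The key is stationarity together with a coupling-free total variation comparison. Fix any $\Lambda \subseteq \Omega$ with $0 < \pi(\Lambda) \le 1/2$. Let $\pi_\Lambda$ denote $\pi$ conditioned on $\Lambda$, that is, $\pi_\Lambda(\sigma) = \pi(\sigma)\id{\sigma \in \Lambda}/\pi(\Lambda)$. Run the chain from initial distribution $\pi_\Lambda$.

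\textbf{Step 1: one-step leakage.} I would bound how much mass one step moves out of $\Lambda$. Write $\mu_t = \pi_\Lambda P^t$. Using the identity $\pi_\Lambda \le \pi/\pi(\Lambda)$ and stationarity of $\pi$, I would show that $\mu_t \le \pi/\pi(\Lambda)$ pointwise for every $t$. Then
\begin{align*}
    \mu_{t+1}(\Omega\setminus\Lambda) - \mu_t(\Omega\setminus\Lambda) \le \sum_{\sigma\in\Lambda}\mu_t(\sigma)P(\sigma,\Omega\setminus\Lambda) \le \frac{Q(\Lambda\to\Omega\setminus\Lambda)}{\pi(\Lambda)} = \Phi_M(\Lambda).
\end{align*}
Summing over $t$ and using $\mu_0(\Omega\setminus\Lambda)=0$ gives $\mu_t(\Omega\setminus\Lambda) \le t\,\Phi_M(\Lambda)$. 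Equivalently, I would bound $\TV{\mu_t}{\mu_{t+1}} \le \Phi_M(\Lambda)$ directly, by noting that $\mu_0 P - \mu_0$ is supported on flows across the boundary.

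\textbf{Step 2: compare with $\pi$.} Since $\pi(\Omega\setminus\Lambda) \ge 1/2$, we get
\begin{align*}
    \TV{\mu_t}{\pi} \ge \pi(\Omega\setminus\Lambda) - \mu_t(\Omega\setminus\Lambda) \ge \tfrac12 - t\Phi_M(\Lambda).
\end{align*}
Because $\mu_t$ is a mixture of the $P^t(x,\cdot)$ and total variation is convex, some start $x$ satisfies $\TV{P^t(x,\cdot)}{\pi} \ge 1/2 - t\Phi_M(\Lambda)$. Taking $t < 1/(4\Phi_M(\Lambda))$ makes this exceed $1/4$, so $\tau_{\mix}(M) \ge 1/(4\Phi_M(\Lambda))$. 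Taking the infimum over $\Lambda$ gives the claim.

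\textbf{Main obstacle.} The only delicate point is the monotone domination $\mu_t \le \pi/\pi(\Lambda)$. This follows by induction: $\mu_{t+1}(\tau) = \sum_\sigma \mu_t(\sigma)P(\sigma,\tau) \le \pi(\Lambda)^{-1}\sum_\sigma\pi(\sigma)P(\sigma,\tau) = \pi(\tau)/\pi(\Lambda)$, which uses stationarity. Reversibility is not needed here. The endpoint inequality at the threshold $1/4$ requires a small amount of care, because the definition takes $\tau_{\mix}$ as an infimum with non-strict inequality, but a strict choice of $t$ handles this.
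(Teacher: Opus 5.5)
Your argument is correct. It is essentially the Levin--Peres bottleneck-ratio proof that the paper cites without giving a proof of its own: start from $\pi$ conditioned on $\Lambda$, show that at most $\Phi_M(\Lambda)$ mass leaks out per step, accumulate this to $t\,\Phi_M(\Lambda)$, and compare with $\pi(\Omega\setminus\Lambda)\ge 1/2$. The only difference is cosmetic: you propagate the one-step bound through the invariant $\mu_t \le \pi/\pi(\Lambda)$ and track $\mu_t(\Omega\setminus\Lambda)$ directly, whereas Levin--Peres prove $\TV{\pi_\Lambda P}{\pi_\Lambda} = \Phi_M(\Lambda)$ and then use contraction of total variation under $P$ together with the triangle inequality.
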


\subsection{Abstract polymer model} \label{sec:abstract-polymer-model}

In this part, we introduce the \emph{abstract polymer model}.
The abstract polymer model $\calM = (\calC, \sim)$ consists of the collection $\calC$ of \emph{abstract polymers} and a binary relation ${\sim}\subseteq \calC \times \calC$.
A polymer $\gamma = (V_\gamma, w_\gamma)$ is a subset $V_\gamma \subseteq V$ equipped with a weight function $w_\gamma$.
With abuse of notation, we sometimes use $\gamma$ to denote $V_\gamma$ and set the size $\abs{\gamma} \defeq \abs{V_\gamma}$.
For two polymers $\gamma_1, \gamma_2$, we say $\gamma_1$ is \emph{compatible} with $\gamma_2$ if $\gamma_1 \not\sim \gamma_2$ and otherwise they are \emph{incompatible}.
We denote by $\calK$ the collection of finite subsets $\Gamma$ of $\calC$ such that polymers in $\Gamma$ are mutually compatible.
Then we define the partition function of the abstract polymer model $\calM$ as
\begin{align*}
    \Xi(\calM) \defeq \sum_{\Gamma \in \calK} \prod_{\gamma \in \Gamma} w_{\gamma}.
\end{align*}

\paragraph{\bf Example: abstract polymer models induced by a graph}
Mostly we consider the abstract polymer model induced by a graph $G$.
Then we would define a polymer $\gamma = (V_\gamma, w_\gamma)$ via a connected subgraph $V_\gamma$.
We say two polymers $\gamma_1 \sim \gamma_2$ if $V_{\gamma_1} \cup V_{\gamma_2}$ forms a connected subgraph in $G$.

The famous \Kotecky-Preiss condition in~\cite{KP86} is of great significance to compute $\Xi$.
\begin{condition}[\Kotecky-Preiss condition] \label{cond:Kotecky-Preiss-condition}
    We say $(\calC, \sim)$ satisfies \Kotecky-Preiss condition if there exists a function $g : \calC \to \mathbb R_{\ge 0}$ such that for every $\gamma \in \calC$,
    \begin{align*}
        \sum_{\gamma' \sim \gamma} \abs{w_{\gamma'}} \e^{\abs{\gamma'} + g(\gamma')} \le g(\gamma).
    \end{align*}
\end{condition}

We use the following theorem to give an $\FPTAS$ to $\Xi$ together with~\Cref{cond:Kotecky-Preiss-condition}.
\begin{theorem}[Theorem 8 in~\cite{JKP19}] \label{thm:counting-polymer}
    Fix $\Delta > 0$ and a graph family $\calG$ of maximum degree at most $\Delta$.
    Suppose that the followings hold for the polymer models induced by $\calG$ and a function $g(\cdot)$:
    \begin{enumerate}
        \item There exist two constants $C_1, C_2 > 0$ such that for every connected subgraph $\gamma$, it takes at most $\abs{\gamma}^{C_1} \e^{C_2 \abs{\gamma}}$ time to determine whether $\gamma$ is in the model, and then compute $w_\gamma$ and $g(\gamma)$. \label{item:counting-polymer-enumerate}
        \item There exists some constants $C = C(\Delta) > 0$ such that for every $G \in \calG$ and each polymer $\gamma$ induced by $G$, $g(\gamma) \ge C\abs{\gamma}$.\label{item:counting-polymer-function-bound}
        \item \Cref{cond:Kotecky-Preiss-condition} holds with choice $g$.\label{item:counting-polymer-KP-condition}
    \end{enumerate}
    Then there exists an $\FPTAS$ to approximate the partition function $\Xi$ of the abstract polymer model induced by every graph $G \in \calG$ with running time $\bigO{(n/\eps)^{\tau(\Delta)}}$ for $\tau(\Delta) = (\ln{\Delta} + C_2) / C(\Delta)$.
\end{theorem}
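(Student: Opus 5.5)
The statement above is Theorem 8 of~\cite{JKP19}. The plan is to approximate $\ln \Xi$ by a truncated cluster expansion and to compute that truncation by brute-force enumeration of small connected vertex sets. The key inputs are the convergence and tail bounds that the \Kotecky-Preiss condition (item~\ref{item:counting-polymer-KP-condition}) gives, together with the linear lower bound $g(\gamma) \ge C\abs{\gamma}$ (item~\ref{item:counting-polymer-function-bound}).

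\textbf{Step 1: the expansion and its tail.} A \emph{cluster} is an ordered tuple $\Gamma = (\gamma_1, \ldots, \gamma_k)$ of polymers whose incompatibility graph $H(\Gamma)$ is connected. Its size is $\norm{\Gamma} \defeq \sum_i \abs{\gamma_i}$. Write $\phi(H)$ for the Ursell function of a graph $H$. The cluster expansion reads
\begin{align*}
    \ln \Xi = \sum_{\Gamma} \phi(H(\Gamma)) \prod_{\gamma \in \Gamma} w_\gamma .
\end{align*}
The theorem of \Kotecky{} and Preiss~\cite{KP86} states that \Cref{cond:Kotecky-Preiss-condition} makes this series absolutely convergent, and moreover gives, for every polymer $\gamma$,
\begin{align*}
    \sum_{\Gamma \not\sim \gamma} \abs{\phi(H(\Gamma)) \textstyle\prod_{\gamma' \in \Gamma} w_{\gamma'}} \, \e^{\sum_{\gamma' \in \Gamma} g(\gamma')} \le \abs{\gamma} .
\end{align*}
I would apply this with $\gamma$ a single vertex $v$. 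A polymer consisting of one vertex need not belong to $\calC$, but the KP argument works for any test set, so I only need the inequality for single-vertex test sets. Combined with $g \ge C\abs{\cdot}$, it shows that the clusters touching $v$ with $\norm{\Gamma} \ge m$ contribute at most $\e^{-Cm}$ in absolute value.

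\textbf{Step 2: truncation.} Sum the bound from Step 1 over the $\abs{V} = \bigO{n}$ vertices. Truncating at
\begin{align*}
    m = \left\lceil C^{-1} \ln(\bigO{n}/\eps) \right\rceil
\end{align*}
changes $\ln \Xi$ by at most $\eps/2$. Hence $\wh{Z} = \exp(T_m)$, where $T_m$ is the sum over clusters with $\norm{\Gamma} < m$, is an $\eps$-approximation of $\Xi$.

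\textbf{Step 3: computing $T_m$.} The union of the supports of a cluster is a connected vertex set of size less than $m$. In a graph of maximum degree $\Delta$, the number of connected sets of size $t$ containing a fixed vertex is at most $(\e\Delta)^t$, and they can be listed in time $\exp(\bigO{t})$. For each such set I would do the following:
\begin{enumerate}
    \item list its connected subsets;
    \item test membership in the model and compute $w_\gamma$ for each subset, in time $\abs{\gamma}^{C_1}\e^{C_2\abs{\gamma}}$ by item~\ref{item:counting-polymer-enumerate};
    \item enumerate the tuples of polymers of total size less than $m$ that cover the set;
    \item compute $\phi(H)$ for each tuple in time $\exp(\bigO{m})$, using the inclusion–exclusion / Tutte-polynomial-type algorithm of Björklund et al.
\end{enumerate}
Each cluster is counted exactly once by keying it to its support. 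This enumeration is the step I expect to be the main obstacle: it is where the running time is determined, and the count of tuples must stay $\exp(\bigO{m})$ so that the exponent matches $\tau(\Delta)$. If the exponent does not match, I would instead use the approach of Helmuth, Perkins and Regts. That is, introduce a variable $z$ by setting $w_\gamma \mapsto w_\gamma z^{\abs{\gamma}}$, compute the first $m$ coefficients of $\Xi(z)$ by enumerating compatible families of polymers on small connected sets, and recover the first $m$ coefficients of $\ln \Xi(z)$ via Newton's identities. In either approach the total time is $n \cdot (\e\Delta)^m \e^{C_2 m}\poly{m} = (n/\eps)^{\bigO{(\ln \Delta + C_2)/C}}$, which is the stated $\bigO{(n/\eps)^{\tau(\Delta)}}$.
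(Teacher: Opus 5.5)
The paper gives no proof of this statement; it is imported from \cite{JKP19}, where it is proved along the lines of Helmuth, Perkins and Regts. Your Steps 1--3 follow that proof: the \Kotecky--Preiss tail bound, truncation at $m=\Theta(C^{-1}\ln(n/\eps))$, enumeration of clusters keyed to their connected support, and the Ursell function via Bj\"orklund et al. The genuine gap is the step you flag yourself, and as you set it up it fails. You define clusters as \emph{ordered} tuples and enumerate tuples, and there are not $\e^{O(m)}$ of them. If singletons are polymers (as in the model of \Cref{sec:polymer-count-sparse}), every ordering of the singletons of a connected set $S$ with $|S|=m-1$ is a distinct cluster of size less than $m$. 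So there are at least $(m-1)!=(n/\eps)^{\Theta(C^{-1}\ln\ln(n/\eps))}$ clusters, and the algorithm is only quasi-polynomial. The fix is to enumerate multisets instead. All orderings of a multiset $X$ of polymers have isomorphic incompatibility graphs, so one Ursell evaluation per multiset, multiplied by the number $k!/\prod_\gamma X(\gamma)!$ of orderings, suffices. The number of multisets of polymers inside a fixed connected $S$ with $|S|<m$ and total size $t<m$ is at most $[x^t]\prod_{\gamma\subseteq S}(1-x^{|\gamma|})^{-1}\le \e^{2m}(2\e\Delta)^{t}$. This follows by evaluating at $x=1/(2\e\Delta)$ and using that $S$ contains at most $m(\e\Delta)^{s}$ connected sets of size $s$. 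Your fallback does not work as stated either. The coefficient of $z^k$ in $\Xi(z)$ is a sum over compatible families spread over the whole graph ($n^{\Theta(k)}$ of them), so it cannot be obtained by enumerating inside small connected sets. You would also need the M\"obius inversion $\sum_{T\subseteq S}(-1)^{|S\setminus T|}\ln\Xi_T(z)$, with $\Xi_T$ using only polymers inside $T$, to isolate the contribution of each small connected $S$ to $\ln\Xi(z)$.

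Three smaller points. First, the \Kotecky--Preiss conclusion at a test set $\{v\}$ requires the \Kotecky--Preiss inequality at $\{v\}$, which item~3 supplies only when $\{v\}$ is a polymer. In general, use any polymer $\gamma_v\ni v$ as the test set: every cluster whose support contains $v$ is incompatible with $\gamma_v$, and if no polymer contains $v$, no cluster touches $v$. This bounds the tail by $\sum_v|\gamma_v|\e^{-Cm}\le|V|^2\e^{-Cm}$ and changes $m$ only by a constant factor. Second, your Step 1 uses the \cite{JKP19} form of the condition, with $|\gamma|$ on the right-hand side, whereas \Cref{cond:Kotecky-Preiss-condition} has $g(\gamma)$ there. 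The paper's form implies yours when $g(\gamma)\le|\gamma|$, e.g.\ for $g(\gamma)=|\gamma|/2$ as used later, but not in general. Third, the argument gives $(n/\eps)^{O(\tau(\Delta))}$ rather than the literal exponent $\tau(\Delta)$; the connected-set enumeration alone costs $(\e\Delta)^m=(n/\eps)^{(1+\ln\Delta)/C}$. So your final identification with the stated bound holds only up to a constant factor in the exponent.
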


\section{Anti-ferromagnetic Potts Models on Random Regular Bipartite Graphs} \label{sec:anti-ferromagnetic-Potts}

In this section, we investigate the geometry of the configuration space of the anti-ferromagnetic $q$-Potts model.
As mentioned above, we aim to show that, at sufficiently low temperature, the anti-ferromagnetic $q$-state Potts models on $\calG_{n, \Delta}^{\BIP}$ follows some ordered phases with high probability.
Briefly, for each configuration in $\Omega$, there exists a coloring pattern such that it assigns almost disjoint colors to each side, and most of vertices agree with the pattern under the configuration.
To describe ordered phases, we formalize the definition of \emph{patterns} as~\cite{JKP19}.
\begin{definition}[Patterns] \label{def:pattern}
    For a color set $[q] = \set{1, \ldots, q}$ and two subsets $A, B \subseteq [q]$, we say $(A, B)$ is a \emph{pattern} if $A$ and $B$ form a partition of $[q]$, \IE, $[q] = A \sqcup B$ and $\abs{A} \neq 0$, $\abs{B} \neq 0$.
    We denote by $\PATTERN$ the collection of all patterns.
\end{definition}

Recall that in~\cite{JKP19}, we say a configuration $\sigma \in \Omega$ \emph{agrees} with $(A, B)$ at $v \in V$ if $\sigma(v) \in A$ when $v \in \calL$ or $\sigma(v) \in B$ when $v \in \calR$; otherwise we say $\sigma$ \emph{disagrees} with it at $v$.
And given a pattern $(A, B)$ and a vertex subset $S \subseteq V$, define $\chi_{A, B}(S)$ as the collection of configurations $\sigma \in \Omega$ such that $\sigma$ agrees with $(A, B)$ at every $v \in V \setminus S$ and disagrees with the pattern at $v \in S$. 
We remark here that unlike the case in~\cite{JKP19}, we allow that the configuration is not necessarily a proper coloring.

To capture the geometric properties of the configuration space $\Omega$, define the following rank function $\RANK : \Omega \to \mathbb N$ by
\begin{align*}
    \forall \sigma \in \Omega, \quad \RANK(\sigma) = \inf\set{\ell \in \mathbb N : \exists (A, B) \in \PATTERN, S \subseteq V, \abs{S} = \ell \land \sigma \in \chi_{A, B}(S)}.
\end{align*}
Furthermore, we separate $\Omega$ according to the ranks $\RANK$ as:
\begin{align*}
    \forall \ell \in \mathbb N, \quad \Omega^{\ell} \defeq \set{\sigma \in \Omega \midc \RANK(\sigma) = \ell}
\end{align*}
and define $\Omega^{\le \ell} \defeq \cup_{t \le \ell} \Omega^{t}, \Omega^{> \ell} \defeq \cup_{t > \ell} \Omega^t$ and the partition functions $Z^{\le \ell} \defeq \sum_{\sigma \in \Omega^{\le \ell}} w(\sigma), Z^{>\ell} \defeq \sum_{\sigma \in \Omega^{>\ell}} w(\sigma)$ restricted on them respectively.
For brevity, given a parameter $\theta \in (0, 1)$, we say a configuration $\sigma$ is of a low rank if $\RANK(\sigma) \le \theta n$ and otherwise it has a high rank.
Moreover, for a configuration $\sigma \in \chi_{A, B}(S)$, we say $\sigma$ is \emph{specified by $(A, B)$ at rank $\abs{S}$}.

Given a configuration $\sigma$, we define $\vecx = \vecx(\sigma) \defeq \set{x_i}_{i = 1}^{q}$ and $\vecy = \vecy(\sigma) \defeq \set{y_i}_{i = 1}^{q}$ as
\[
    x_i = \frac{\abs{\sigma^{-1}(i)\cap \calL}}{n}, \qquad y_i = \frac{\abs{\sigma^{-1}(i) \cap \calR}}{n}.
\]
We also write
\[
    x_S = \sum_{i\in S} x_i, \qquad y_S = \sum_{i\in S}y_i
\]
for convenience.
We have the following lemma for the rank of $\sigma$.

\begin{lemma} \label{lem:rank-identity}
    For every coloring $\sigma$, it holds that
    \begin{equation} \label{eq:rank-overlap}
        \frac{\RANK(\sigma)}{n} = \sum_{i = 1}^q \min\set{x_i, y_i} = 1 - \TV{\vecx}{\vecy}.
    \end{equation}
\end{lemma}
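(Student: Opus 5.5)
The plan is to compute the rank exactly by optimizing over patterns. Fix a pattern $(A, B) \in \PATTERN$. Since $\chi_{A,B}(S)$ requires that $\sigma$ disagrees with $(A,B)$ exactly on $S$, the set $S$ is determined by $\sigma$ and $(A,B)$:
\[
    S_{A,B}(\sigma) = (\sigma^{-1}(B) \cap \calL) \sqcup (\sigma^{-1}(A) \cap \calR).
\]
Its size is $n(x_B + y_A)$. Hence
\[
    \RANK(\sigma) = n \min_{(A,B) \in \PATTERN} (x_B + y_A).
\]

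Next, minimize $x_B + y_A = \sum_{i \in B} x_i + \sum_{i \in A} y_i$ over all partitions of $[q]$. If both parts are allowed to be arbitrary, the sum decouples over colors. Each color $i$ independently contributes $y_i$ if placed in $A$ and $x_i$ if placed in $B$, so the unconstrained minimum is $\sum_i \min\{x_i, y_i\}$. It is attained by $A^* = \{i : y_i \le x_i\}$ with ties broken arbitrarily, and $B^* = [q] \setminus A^*$.

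The only obstacle is the requirement $A, B \neq \emptyset$. Suppose $A^* = \emptyset$, so $y_i > x_i$ for every $i$. Then $\sum_i y_i > \sum_i x_i$, which contradicts $\sum_i x_i = \sum_i y_i = 1$. Symmetrically $B^*$ could only be empty if $x_i < y_i$ fails for all $i$ under the strict tie-breaking; by putting tied colors in $B$ when needed, nonemptiness of both parts is guaranteed. Concretely: if $A^* = [q]$ then $y_i \le x_i$ for all $i$, so equality $x = y$ holds and we may move any single color to $B$ without changing the value. Since $q \ge 3 \ge 2$, this keeps $A$ nonempty. So the constrained minimum equals $\sum_i \min\{x_i, y_i\}$, giving the first equality in~\eqref{eq:rank-overlap}.

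For the second equality, use $\min\{a,b\} = \frac{a+b}{2} - \frac{|a-b|}{2}$ together with $\sum_i x_i = \sum_i y_i = 1$:
\[
    \sum_i \min\{x_i, y_i\} = 1 - \tfrac12 \sum_i |x_i - y_i| = 1 - \TV{\vecx}{\vecy}.
\]
The nonemptiness bookkeeping in the pattern definition is the only step needing care; everything else is direct.
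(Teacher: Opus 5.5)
Your proof is correct and follows the paper's argument. You identify the disagreement set for each pattern as having size $n(x_B + y_A)$, minimize colorwise by putting $i$ in $A$ iff $y_i \le x_i$, and repair nonemptiness in the all-tie case ($A^* = [q]$ forces $x_i = y_i$ for all $i$); you then read off $1 - \TV{\vecx}{\vecy}$ from $\min\{a,b\} = \frac{a+b}{2} - \frac{|a-b|}{2}$, with your treatment of the nonemptiness constraint more explicit than the paper's one-line remark (though your intermediate sentence about ``strict tie-breaking'' is muddled and could be dropped).
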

\begin{proof}
Given a pattern $(A, B)\in \PATTERN$, the normalized number of disagreements is $x_B + y_A$. To minimize the number of disagreements, assign color $i$ to $A$ if and only if $y_i \le x_i$, and in the all-tie case split the tied colors between two nonempty palettes. This gives
\[
    \frac{\RANK(\sigma)}n  =\sum_{i=1}^q\min\{x_i,y_i\}\,.
\]
The second equality follows from the property of total variation distance.
\end{proof}

We define the total weight of $\chi_{A, B}(S)$ by
\begin{align*}
    W_{A, B}(S) \defeq \sum_{\sigma \in \chi_{A, B}(S)} w(\sigma).
\end{align*}
A simple but important property of $W_{A, B}(S)$ is the following factorization.
\begin{lemma} \label{lem:2-factorization}
    Fix a pattern $(A, B) \in \PATTERN$.
    For a vertex subset $S \subseteq V$ with its $2$-connected components $S_1, \ldots, S_{\ell}$, it holds that
    \begin{align*}
        \frac{W_{A, B}(S)}{\abs{A}^n \abs{B}^n} = \prod_{i = 1}^{\ell} \frac{W_{A, B}(S_i)}{\abs{A}^n \abs{B}^n}.
    \end{align*}
\end{lemma}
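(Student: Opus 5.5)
The plan is to write $W_{A,B}(S)$ as a product over vertices of local sums and observe that all interaction between different $2$-connected components of $S$ disappears.

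\textbf{Step 1: local structure at a vertex outside $S$.} Any configuration $\sigma\in\chi_{A,B}(S)$ has the following constraints:
\begin{itemize}
\item a vertex $v\in\calL\setminus S$ takes a color in $A$;
\item a vertex $v\in\calR\setminus S$ takes a color in $B$;
\item a vertex of $S\cap\calL$ takes a color in $B$;
\item a vertex of $S\cap\calR$ takes a color in $A$.
\end{itemize}
Since $A\cap B=\emptyset$, an edge between two vertices outside $S$ is never monochromatic. The same holds for an edge between two vertices inside $S$.

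So the only monochromatic edges lie between $S$ and $V\setminus S$. Take such an edge $(u,v)$ with $u\in S\cap\calL$ and $v\in\calR\setminus S$. Both endpoints carry colors in $B$.

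Hence, once $\sigma|_S$ is fixed, the weight factorizes over $v\in V\setminus S$ as
\[
w(\sigma)=\prod_{v\in V\setminus S}\e^{-\beta\,|\{u\in N(v)\cap S:\sigma(u)=\sigma(v)\}|}.
\]
Summing over the free colors of the vertices outside $S$ gives
\[
W_{A,B}(S)=\sum_{\sigma_S}\prod_{v\in V\setminus S} f_v(\sigma_S|_{N(v)\cap S}),
\]
where $f_v$ is the sum over $c\in A$ (for $v\in\calL$) or $c\in B$ (for $v\in\calR$) of $\e^{-\beta\,\#\{u\in N(v)\cap S:\sigma(u)=c\}}$. The sum over $\sigma_S$ ranges over the admissible colors of $S$. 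Every vertex with $N(v)\cap S=\emptyset$ contributes exactly $|A|$ or $|B|$.

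\textbf{Step 2: each outside vertex sees only one component.} This is the key point. If $v\notin S$ had neighbors $u\in S_i$ and $u'\in S_j$ with $i\neq j$, then $\mathrm{dist}_G(u,u')\le 2$. That means $u$ and $u'$ are adjacent in $G^2$, contradicting that $S_i,S_j$ are distinct components of $G^2[S]$.

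Moreover, $S_i$ and $S_j$ have no edges between them in $G$. Therefore the colorings of $S$ split as independent choices $\sigma_{S_1},\dots,\sigma_{S_\ell}$, with no shared constraint or weight.

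Let $N_i$ be the set of vertices outside $S$ with a neighbor in $S_i$. The sets $N_i$ are disjoint, and every outside vertex not in $\bigcup_i N_i$ contributes $|A|$ or $|B|$. This gives
\[
W_{A,B}(S)=|A|^{a_0}|B|^{b_0}\prod_{i=1}^{\ell} T_i,\qquad T_i=\sum_{\sigma_{S_i}}\prod_{v\in N_i} f_v,
\]
where $a_0$ and $b_0$ count the left and right vertices outside $S\cup\bigcup_i N_i$.

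\textbf{Step 3: compare with the single-component case.} Applying the same formula to $S_i$ alone gives
\[
W_{A,B}(S_i)=|A|^{a_i}|B|^{b_i}\,T_i.
\]
Here $a_i$ and $b_i$ count the left and right vertices outside $S_i\cup N_i$. The factor $T_i$ is the same in both formulas, because $N(v)\cap S_i$ for $v\in N_i$ is unchanged.

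Dividing by $|A|^n|B|^n$ gives
\[
\frac{W_{A,B}(S_i)}{|A|^n|B|^n}=\frac{T_i}{|A|^{|\calL\cap(S_i\cup N_i)|}\,|B|^{|\calR\cap(S_i\cup N_i)|}}.
\]
The analogous expression for $S$ is $\prod_i T_i$ over $|A|$ and $|B|$ raised to the counts of $\bigcup_i (S_i\cup N_i)$ on each side.

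The sets $S_i\cup N_i$ are pairwise disjoint: the $S_i$ are disjoint, the $N_i$ are disjoint by Step 2, and $N_j\cap S_i=\emptyset$ because $N_j\subseteq V\setminus S$. So the exponents add up over $i$, and the identity follows.

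\textbf{Main obstacle.} The only real content is Step 2: $2$-connectivity is exactly what makes each outside vertex interact with at most one component. The rest is careful bookkeeping of the powers of $|A|$ and $|B|$.
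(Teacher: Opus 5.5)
Your proof is correct and takes the same route as the paper's. The paper restricts to $\Lambda = S\cup N_G(S)$, notes that only $S$--$(V\setminus S)$ edges can be monochromatic, normalizes by $|A|^{|\Lambda\cap\calL|}|B|^{|\Lambda\cap\calR|}$, and factors over the pairwise $2$-disconnected components; your Step 2 spells out what the paper asserts in one line, namely that each outside vertex sees at most one $S_i$ and that $N_G(S_i)=N_i$ because distinct components share no $G$-edges.
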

\begin{proof}
    For brevity, let $\Lambda = S \cup N_G(S)$.
    Note that only edges between $S$ and $V \setminus S$ contribute to the Gibbs weights of configurations in $\chi_{A, B}(S)$.
    By direct calculation, it holds that
    \begin{align*}
        \frac{W_{A, B}(S)}{\abs{A}^n \abs{B}^n} &= \frac{1}{\abs{A}^n \abs{B}^n} \sum_{\sigma \in \chi_{A, B}(S)} \e^{-\beta m_{G}(\sigma)} \\
        &= \frac{1}{\abs{A}^n \abs{B}^n} \sum_{\sigma_{\Lambda} \in [q]^{\Lambda} : \exists \sigma \in \chi_{A, B}(S), \sigma(\Lambda) = \sigma_{\Lambda}} \e^{-\beta m_G(\sigma_{\Lambda})} \abs{A}^{n - \abs{\Lambda \cap \calL}} \abs{B}^{n - \abs{\Lambda \cap \calR}} \\
        &= \frac{\sum_{\sigma_{\Lambda} \in [q]^{\Lambda} : \exists \sigma \in \chi_{A, B}(S), \sigma(\Lambda) = \sigma_{\Lambda}} \e^{-\beta m_G(\sigma_{\Lambda})}}{\abs{A}^{\abs{\Lambda \cap \calL}} \abs{B}^{\abs{\Lambda \cap \calR}}}.
    \end{align*}
    Thus we obtain that
    \begin{align*}
        \frac{W_{A, B}(S)}{\abs{A}^n \abs{B}^n} = \prod_{i = 1}^{\ell} \frac{W_{A, B}(S_i)}{\abs{A}^n \abs{B}^n}.
    \end{align*}
    since $S_1, \ldots, S_{\ell}$ are pairwise $2$-disconnected.
\end{proof}

The following lemma gives an upper bound to $W_{A, B}(S)$.
\begin{lemma} \label{lem:pattern-weight-bound}
    For every pattern $(A, B) \in \PATTERN$ and a vertex subset $S \subseteq V$, it holds that
    \begin{align*}
        W_{A, B}(S) \le \abs{A}^n \abs{B}^n \ab(\frac{\abs{A}}{\abs{B}})^{\abs{S \cap \calR} - \abs{S \cap \calL}} \ab(1 - \frac{1 - \e^{-\beta}}{\abs{A}})^{\abs{N_G(S) \cap \calL}} \ab(1 - \frac{1 - \e^{-\beta}}{\abs{B}})^{\abs{N_G(S) \cap \calR}}.
    \end{align*}
\end{lemma}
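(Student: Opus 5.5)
We bound $W_{A,B}(S)$ by discarding every monochromatic-edge penalty except those on edges between $N_G(S)$ and $S$, and then counting colorings vertex by vertex. Write $S_L = S\cap\calL$, $S_R = S\cap\calR$, $N_L = N_G(S)\cap\calL$, $N_R = N_G(S)\cap\calR$. Note that $N_L \subseteq N_G(S_R)$, so $N_L$ is disjoint from $S$, and $N_L$ must agree with the pattern; similarly for $N_R$. For $\sigma\in\chi_{A,B}(S)$ we have the constraints $\sigma(S_L)\subseteq B$, $\sigma(S_R)\subseteq A$, $\sigma(\calL\setminus S_L)\subseteq A$, and $\sigma(\calR\setminus S_R)\subseteq B$. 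Since $\e^{-\beta m_G(\sigma)}\le 1$, we have $w(\sigma)\le \prod_{v\in N_G(S)} \e^{-\beta\, m_v(\sigma)}$, where $m_v(\sigma)$ is the number of edges from $v$ to $S$ that are monochromatic under $\sigma$. Each such edge is counted once, because both endpoints of a counted edge cannot lie in $N_G(S)$: $N_G(S)$ is disjoint from $S$.

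Fix the coloring $\sigma(S)$ and sum over the remaining vertices independently. Vertices in $V\setminus(S\cup N_G(S))$ contribute at most $|A|$ each on the left and $|B|$ each on the right. For $v\in N_L$, sum over $\sigma(v)\in A$:
\[
\sum_{c\in A} \e^{-\beta\,|\{u\in N_G(v)\cap S_R:\sigma(u)=c\}|} .
\]
Since $v\in N_G(S)$, it has at least one neighbour $u\in S_R$, and $\sigma(u)\in A$. Bounding every factor by $1$ except the one at $c=\sigma(u)$ gives at most $|A|-1+\e^{-\beta} = |A|\bigl(1-\frac{1-\e^{-\beta}}{|A|}\bigr)$. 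The same argument gives $|B|\bigl(1-\frac{1-\e^{-\beta}}{|B|}\bigr)$ for $v\in N_R$. Summing over $\sigma(S)$ then contributes $|B|^{|S_L|}|A|^{|S_R|}$.

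Multiplying everything together,
\[
W_{A,B}(S)\le |B|^{|S_L|}|A|^{|S_R|}\,|A|^{n-|S_L|}|B|^{n-|S_R|}\Bigl(1-\tfrac{1-\e^{-\beta}}{|A|}\Bigr)^{|N_L|}\Bigl(1-\tfrac{1-\e^{-\beta}}{|B|}\Bigr)^{|N_R|},
\]
and this rearranges to $|A|^n|B|^n(|A|/|B|)^{|S_R|-|S_L|}$ times the two neighbourhood factors, which is the claimed bound. The only delicate point is checking that $N_G(S)$ is disjoint from $S$ and that every $v \in N_G(S)$ has a neighbour in $S$ on the opposite side whose colour lies in $v$'s own palette. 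This is exactly what lets the per-vertex sums factorize once $\sigma(S)$ is fixed.
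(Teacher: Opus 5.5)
Your proof is correct and follows the paper's argument. You fix the colouring on $S$, bound each vertex of $N_G(S)$ by $|A|-1+\e^{-\beta}$ (resp.\ $|B|-1+\e^{-\beta}$) using one monochromatic edge to a designated neighbour in $S$, and colour the remaining vertices freely from the pattern. Your per-vertex factorisation is the summed form of the paper's enumeration over the set $T\subseteq N_G(S)$ of vertices copying a neighbour's colour, and it makes explicit the choice of neighbour that the paper leaves implicit.
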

\begin{proof}
    The proof is similar to~\cite[Lemma 27]{JKP19} with some modifications to adapt the Potts model.
    We calculate $W_{A, B}(S)$ by the following steps using the multiplication rule.
    \begin{itemize}
        \item At first we consider all partial colorings on $S$ (there are at most $\abs{A}^{\abs{S \cap \calR}} \abs{B}^{\abs{S \cap \calL}}$ partial colorings of weight $1$);
        \item secondly we enumerate the vertices $T \subseteq N_G(S)$ with the same colors as their neighbors in $S$ (they contribute $\e^{-\beta\abs{E_G(S, T)}} \le \e^{-\beta\abs{T}}$);
        \item then we consider each vertex $v \in N_G(S) \setminus T$ (there are at most $(\abs{A} - 1)^{\abs{(N_G(S) \cap \calL) \setminus T}} (\abs{B} - 1)^{\abs{(N_G(S) \cap \calR) \setminus T}}$ partial colorings of weight $1$);
        \item lastly we calculate the number of colorings on $V \setminus (S \cup N_G(S))$; there are
        \begin{align*}
            \abs{A}^{n - \abs{S \cap \calL} - \abs{N_G(S) \cap \calL}} \abs{B}^{n - \abs{S \cap \calR} - \abs{N_G(S) \cap \calR}}
        \end{align*}
        partial colorings contributing $1$.
    \end{itemize}
    We combine all steps and arrange it to obtain that
    \begin{align*}
        W_{A, B}(S) &\le \abs{A}^n \abs{B}^n \ab(\frac{\abs{A}}{\abs{B}})^{\abs{S \cap \calR} - \abs{S \cap \calL}} \ab(\frac{\e^{-\beta} + \abs{A} - 1}{\abs{A}})^{\abs{N_{G}(S) \cap \calL}} \ab(\frac{\e^{-\beta} + \abs{B} - 1}{\abs{B}})^{\abs{N_{G}(S) \cap \calR}} \\
        &= \abs{A}^n \abs{B}^n \ab(\frac{\abs{A}}{\abs{B}})^{\abs{S \cap \calR} - \abs{S \cap \calL}} \ab(1 - \frac{1 - \e^{-\beta}}{\abs{A}})^{\abs{N_G(S) \cap \calL}} \ab(1 - \frac{1 - \e^{-\beta}}{\abs{B}})^{\abs{N_G(S) \cap \calR}}.
    \end{align*}
    Thus we conclude the lemma.
\end{proof}

We give the following upper bound for $W_{A, B}(S)$ at low temperature when $\abs{S}$ is bounded.
\begin{lemma} \label{lem:weight-upper-bound}
    Given positive integers $q, \Delta \ge 3$ and a real number $\beta > 0$ satisfying
    \begin{align*}
        \ParameterCondition
    \end{align*}
    and every $\delta > 0$, the following holds with high probability over $G = (V = \calL \cup \calR, E) \sim \calG_{n, \Delta}^{\BIP}$.
    For every pattern $(A, B) \in \PATTERN$ and every vertex subset $S \subseteq V$ of size $\abs{S} \le \theta n$ with $\theta = 1/(6\Delta)$,
    \begin{align*}
        \frac{W_{A, B}(S)}{\abs{A}^n \abs{B}^n} \le \e^{-\kappa \abs{S}}
    \end{align*}
    where
    \begin{align*}
        \kappa = -\ab(\frac{\Delta^2}{4(\Delta - 1) + (\Delta - 2)^2 \theta} - 1)\ln{\ab(1 - \frac{1 - \e^{-\beta}}{q - 1})} - \ln{(q - 1)} - \delta.
    \end{align*}
\end{lemma}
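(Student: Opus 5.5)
We will combine the pointwise bound of \Cref{lem:pattern-weight-bound} with the vertex expansion guaranteed by \Cref{cor:random-regular-bipartite-expansion}, applied with a small $\zeta>0$ chosen in terms of $\delta$. We work on the high-probability event of that corollary, so every $S$ with $|S|\le\theta n$ is covered simultaneously. Write $S_L = S\cap\calL$ and $S_R = S\cap\calR$, and set $a=|A|$, $b=|B|$, so that $a+b=q$ and $1\le a,b\le q-1$. Since $N_G(S)\cap\calR\supseteq N_G(S_L)$ and $N_G(S)\cap\calL\supseteq N_G(S_R)$, \Cref{lem:pattern-weight-bound} gives
\begin{align*}
    \frac{W_{A,B}(S)}{a^n b^n} \le \ab(\frac{a}{b})^{|S_R|-|S_L|}\ab(1-\frac{1-\e^{-\beta}}{a})^{|N_G(S_R)|}\ab(1-\frac{1-\e^{-\beta}}{b})^{|N_G(S_L)|}.
\end{align*}
By \Cref{lem:2-factorization} we could first reduce to $2$-connected $S$, but this is not needed: the bound above is already multiplicative in $S_L$ and $S_R$. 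It therefore suffices to treat the two factors separately, one depending only on $S_R$ and one only on $S_L$.

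For the $S_R$ factor we bound $a/b\le q-1$ and $1-(1-\e^{-\beta})/a \le 1-(1-\e^{-\beta})/(q-1)$; the latter is less than $1$, so a lower bound on $|N_G(S_R)|$ suffices. Since $|S_R|\le\theta n$, \Cref{cor:random-regular-bipartite-expansion} gives
\begin{align*}
    |N_G(S_R)| \ge \frac{\Delta^2}{(2\sqrt{\Delta-1}+\zeta)^2+(\Delta^2-(2\sqrt{\Delta-1}+\zeta)^2)\theta}\,|S_R|.
\end{align*}
As $\zeta\to0$ the denominator tends to $4(\Delta-1)+(\Delta-2)^2\theta$, using $\Delta^2-4(\Delta-1)=(\Delta-2)^2$. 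We choose $\zeta$ small enough that the resulting loss in the exponent is at most $\delta|S_R|$; this is where $\delta$ enters. We then use only $|N_G(S_R)|-|S_R|$ of the neighbourhood factors, each at most $1$. The $|S_R|$ remaining factors, written as $\exp(|S_R|\ln(\cdot))$ with a positive coefficient, are absorbed against $(a/b)^{|S_R|}$ via the term $-\ln(q-1)$. (Alternatively, keep all $|N_G(S_R)|$ factors; either version yields a bound of at most $\e^{-\kappa|S_R|}$.) The $S_L$ factor is handled symmetrically, using $(b/a)^{|S_L|}\le(q-1)^{|S_L|}$ and the bound with $b\le q-1$. Multiplying the two factors gives $\e^{-\kappa(|S_L|+|S_R|)}=\e^{-\kappa|S|}$.

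The main obstacle is the precise bookkeeping. We must check that the exponent $\frac{\Delta^2}{4(\Delta-1)+(\Delta-2)^2\theta}-1$ matches the expansion ratio minus the one factor we discard, and that the uniform bound over $a,b\in[1,q-1]$ is valid. This uniform bound uses the monotonicity of $1-(1-\e^{-\beta})/a$ in $a$ and the fact that $a/b\le q-1$. If bounding $|N_G(S)\cap\calL|$ below by $|N_G(S_R)|$ turns out to be lossy, the plan still holds, because dropping factors that are at most $1$ only weakens the bound in the safe direction. The parameter condition $\ParameterCondition$ is not needed for the inequality itself; it only ensures that $\kappa>0$, which the later sections use.
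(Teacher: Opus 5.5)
Your proof has a genuine gap at its first step: the containment $N_G(S)\cap\calL\supseteq N_G(S_R)$ (and its mirror image) is false. Since $N_G(S)$ excludes $S$ itself, in fact $N_G(S)\cap\calL=N_G(S_R)\setminus S_L$, so every vertex of $S_L$ adjacent to $S_R$ is lost. The bases $1-(1-\e^{-\beta})/a$ and $1-(1-\e^{-\beta})/b$ are less than $1$. Putting the larger exponent $\abs{N_G(S_R)}$ on them therefore makes your first display \emph{stronger} than \Cref{lem:pattern-weight-bound}, not weaker; your closing remark about the ``safe direction'' has this backwards. The display is in fact false. Take $r\in\calR$ lying on no $4$-cycle and $S=\{r\}\cup N_G(r)$. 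Then $N_G(S)\cap\calL=\emptyset$, and with $m=\abs{N_G(S)\cap\calR}$ a direct count gives exactly $W_{A,B}(S)/(a^nb^n)=(a/b)^{1-\Delta}\bigl(1-(1-\e^{-\beta})/b\bigr)^{m}$. Your right-hand side is this quantity times $\bigl(1-(1-\e^{-\beta})/a\bigr)^{\Delta}\bigl(1-(1-\e^{-\beta})/b\bigr)<1$. The same example shows that the bound is not multiplicative in $S_L$ and $S_R$. With the correct neighbourhoods your ``$S_R$ factor'' equals $a/b$, which is at least $1$ when $a\ge b$, so it cannot be bounded by $\e^{-\kappa\abs{S_R}}<1$. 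Only the product of the two factors is small.

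The repair is to pool the two sides, which is what the paper does. This pooling is also where the $-1$ in $\kappa$ actually comes from; it is not an optional discard of $\abs{S_R}$ factors. Let $K_\zeta$ be the expansion ratio from \Cref{cor:random-regular-bipartite-expansion} at density $\theta$. Then $\abs{N_G(S)\cap\calL}\ge K_\zeta\abs{S_R}-\abs{S_L}$ and $\abs{N_G(S)\cap\calR}\ge K_\zeta\abs{S_L}-\abs{S_R}$. Neither inequality gives the per-side bound you want, but their sum gives $\abs{N_G(S)}\ge(K_\zeta-1)\abs{S}$. To use only this total, first bound both bases by the common value $1-(1-\e^{-\beta})/(q-1)$, and bound $(a/b)^{\abs{S_R}-\abs{S_L}}\le(q-1)^{\abs{S}}$. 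This yields $W_{A,B}(S)/(a^nb^n)\le\exp\bigl(\abs{S}(\ln(q-1)+(K_\zeta-1)\ln(1-(1-\e^{-\beta})/(q-1)))\bigr)\le\e^{-\kappa\abs{S}}$ once $\zeta$ is small enough in terms of $\delta$. With this change your argument coincides with the paper's. The rest of your plan is sound: the choice of $\zeta$ by continuity, the uniform bounds over $a,b\in[1,q-1]$, and your observation that the parameter condition is needed only to make $\kappa>0$.
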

\begin{proof}
    Assume that the statement in~\Cref{cor:random-regular-bipartite-expansion} holds for some real number $\zeta \in (0, 1)$ decided later (this event holds with high probability).

    Fix $\delta > 0$.
    Set two parameters
    \begin{align*}
        K_{\zeta} = \frac{\Delta^2}{(2\sqrt{\Delta - 1} + \zeta)^2 + (\Delta^2 - (2\sqrt{\Delta - 1} + \zeta)^2) \theta} \quad \text{and} \quad K_0 = \frac{\Delta^2}{4(\Delta - 1) + (\Delta - 2)^2 \theta}.
    \end{align*}
    We first verify that $K_0 > 1$.
    In fact, by direct calculation,
    \begin{align*}
        \Delta^2 - \ab(4(\Delta - 1) + (\Delta - 2)^2/(6\Delta)) = \frac{(\Delta - 2)^2(6\Delta - 1)}{6\Delta} > 0
    \end{align*}
    and thus we obtain that $K_0 > 1$.
    Then we can pick a small constant $\eta > 0$ such that
    \[
        \eta < K_0 - 1, \quad -\eta \ln{\ab(1 - \frac{1 - \e^{-\beta}}{q - 1})} < \delta.
    \]
    By continuity, for fixed sufficiently small real number $\zeta > 0$, it holds that $K_{\zeta} \ge K_0 - \eta > 1$.
    For every subset vertex $S \subseteq V$ of size $\abs{S} \le \theta n$, by~\Cref{cor:random-regular-bipartite-expansion},
    \begin{align*}
        \abs{N_G(S) \cap \calL} \ge K_{\zeta} \abs{S \cap \calR} - \abs{S \cap \calL}, \\
        \abs{N_G(S) \cap \calR} \ge K_{\zeta} \abs{S \cap \calL} - \abs{S \cap \calR}.
    \end{align*}
    Then it holds that
    \begin{align*}
        \abs{N_G(S)} = \abs{N_G(S) \cap \calL} + \abs{N_G(S) \cap \calR} \ge (K_{\zeta} - 1) \abs{S}.
    \end{align*}
    By~\Cref{lem:pattern-weight-bound} together with trivial bounds $\abs{A}/\abs{B} \in [1/(q - 1), q - 1], \abs{A} \le q - 1, \abs{B} \le q - 1$ and $\abs{T \cap W} \le \abs{T}$ for every vertex subsets $T, W \subseteq V$,
    \begin{align*}
        \frac{W_{A, B}(S)}{\abs{A}^n \abs{B}^n} &\le (q - 1)^{\abs{S}} \ab(1 - \frac{1 - \e^{-\beta}}{q - 1})^{\abs{N_G(S)}} \\
        &\le \exp\ab(\abs{S}\ab(\ln{(q - 1)} - (K_{\zeta} - 1) \ln \ab(1 - \frac{1 - \e^{-\beta}}{q - 1}))) \\
        &\le \e^{-\kappa \abs{S}}.
    \end{align*}
    Thus we conclude the upper bound.
\end{proof}

The following inequality plays a crucial role in our analysis.
\begin{proposition} \label{prop:KP-parameter}
    Under the assumption $\ParameterCondition$ with choice $\theta = 1/(6\Delta)$, it holds that
    \begin{align} \label{eq:KP-parameter}
        -\ab(\frac{\Delta^2}{4(\Delta - 1) + (\Delta - 2)^2 \theta} - 1) \ln{\ab(1 - \frac{1 - \e^{-\beta}}{q - 1})} - \ln{(q - 1)} > \frac{3}{2} + \ln{\ab(\e \Delta^2 + 2(\Delta^2 + 1))}.
    \end{align}
    Moreover, the gap between two factors is at least $1/4$.
\end{proposition}
\begin{proof}
    Set
    \[
        K_0 = \frac{\Delta^2}{4(\Delta - 1) + (\Delta - 2)^2 \theta}.
    \]
    By $\theta = 1/(6\Delta)$, note that
    \begin{align*}
        K_0 = \frac{\Delta^2}{4(\Delta - 1) + (\Delta - 2)^2 / (6\Delta)} \ge \frac{\Delta^2}{25\Delta/6} = \frac{6\Delta}{25}.
    \end{align*}
    Then with $-\ln{(1 - x)} \ge x$ on $[0, 1]$ and $\ParameterCondition$, it holds that
    \begin{align*}
        -\ab(K_0 - 1) \ln{\ab(1 - \frac{1 - \e^{-\beta}}{q - 1})} - \ln{(q - 1)} &\ge \ab(\frac{6\Delta}{25} - 1) \frac{1 - \e^{-\beta}}{q - 1} - \ln{(q - 1)} \\
        &\ge \frac{54q}{25(q - 1)} \ln{\frac{q\Delta}{1 - \e^{-\beta}}} - \frac{1}{2} - \ln{(q - 1)}.
    \end{align*}
    On the other hand,
    \begin{align*}
        \frac{3}{2} + \ln{\ab(\e \Delta^2 + 2(\Delta^2 + 1))} &= \frac{3}{2} + 2\ln{\Delta} + \ln{(\e + 2 + 2/\Delta^2)} \\
        &\le 2\ln{\Delta} + \frac{3}{2} + \ln{\ab(\e + \frac{17}{8})}.
    \end{align*}
    Then the different between L.H.S. of~\eqref{eq:KP-parameter} and R.H.S. of~\eqref{eq:KP-parameter} is minimized at $\Delta = 4$ and at least
    \begin{align*}
        f(q) = \frac{54q}{25(q - 1)}\ln{q} + \ab(\frac{54q}{25(q - 1)} - 2)\ln{4} - 2 - \ln{(q - 1)} - \ln{\ab(\e + \frac{17}{8})}.
    \end{align*}
    The derivative of $f$ on $[3, +\infty)$ is
    \begin{align*}
        f'(x) = \frac{(54/25 - 1)(x - 1) - (54/25)\ln{(4x)}}{(x - 1)^2}.
    \end{align*}
    The derivative of the numerator of $f'(q)$ is positive on $[3, +\infty)$ and thus the numerator of $f'(x)$ is strictly increasing.
    By direct numerical calculation, we have $f'(7) < 0$ and $f'(8) > 0$, meaning that the minimum of $f(q)$ is $\min\set{f(7), f(8)}$.
    Since $f(7) > 0.2552$ and $f(8) > 0.2593$, we conclude the proposition.
\end{proof}

\subsection{Tail bound at high ranks}

The most important geometric property for anti-ferromagnetic Potts models on random regular bipartite graphs at low temperature is the \emph{tail bound at high ranks}.
\begin{lemma} \label{lem:high-rank-tail-bound}
    Fix positive integers $q \ge 3, \Delta \ge 4$ and a real number $\beta > 0$ satisfying that
    \begin{align*}
        \ParameterCondition.
    \end{align*}
    There exists a constant $C > 0$ such that with high probability over $G \sim \calG_{n, \Delta}^{\BIP}$,
    \begin{align*}
        Z^{> \theta n} \le \e^{-Cn} Z_{G, q}(\beta)
    \end{align*}
    with the choice of $\theta = \frac{1}{6\Delta}$.
\end{lemma}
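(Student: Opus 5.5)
We would prove the lemma by a first-moment (annealed) computation over the configuration model, combined with a lower bound on $Z_{G,q}(\beta)$ that holds deterministically. By \Cref{prop:configuration-model-to-random-regular-bipartite-graph} it suffices to work with $G\sim\calG_{n,\Delta}^{\CM}$. For the lower bound we fix any pattern $(A,B)$ with $\abs{A}=\lfloor q/2\rfloor$ and use $Z_{G,q}(\beta)\ge W_{A,B}(\emptyset)=\abs{A}^n\abs{B}^n$. On the other side we write
\[
Z^{>\theta n}=\sum_{\vecx,\vecy}\ \sum_{\sigma:\vecx(\sigma)=\vecx,\vecy(\sigma)=\vecy}\e^{-\beta m_G(\sigma)},
\]
where the outer sum ranges over the polynomially many pairs of type vectors with $\sum_i\min\{x_i,y_i\}>\theta$; by \Cref{lem:rank-identity} these are exactly the high-rank types. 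By Markov's inequality, and a union bound over the $\poly{n}$ types, it then suffices to show $\E[Z_{\vecx,\vecy}]\le \e^{-2Cn}\abs{A}^n\abs{B}^n$ for each high-rank type $(\vecx,\vecy)$.

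\textbf{Expectation for a fixed type.} Fix $\sigma$ of type $(\vecx,\vecy)$. The $\Delta$ perfect matchings are independent. In one uniform matching, the number $M$ of monochromatic edges has a law computable exactly, and its moment generating function satisfies
\[
\E[\e^{-\beta M}]\le \exp\bigl(n\,\Phi(\vecx,\vecy)+o(n)\bigr),\qquad \Phi=\max_{\vecz}\bigl[H(\vecz)-H(\vecx)-H(\vecy)-\beta\textstyle\sum_i z_{ii}\bigr].
\]
Here $\vecz$ ranges over couplings (joint distributions on $[q]^2$) with marginals $\vecx,\vecy$, and $H$ is the Shannon entropy. This is the standard Stirling / large-deviation estimate for the number of matchings realising a prescribed color-pair count matrix. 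Multiplying by the multinomial count $\e^{n(H(\vecx)+H(\vecy))+o(n)}$ of $\sigma$'s of the given type, we get
\[
\frac1n\ln\E[Z_{\vecx,\vecy}]\le H(\vecx)+H(\vecy)+\Delta\,\Phi(\vecx,\vecy)+o(1)=:F(\vecx,\vecy)+o(1).
\]
The goal is therefore $F(\vecx,\vecy)\le \ln(\abs{A}\abs{B})-2C$ whenever $\sum_i\min\{x_i,y_i\}>\theta$.

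\textbf{Bounding $F$.} We bound $\Phi$ by replacing $\e^{-\beta}$ with $1-(1-\e^{-\beta})$ and using the elementary bound $\E[\e^{-\beta M}]\le \prod(\ldots)$, which gives
\[
\Phi\le \ln\Bigl(1-(1-\e^{-\beta})\textstyle\sum_i x_iy_i\Bigr)\le -(1-\e^{-\beta})\sum_i x_iy_i .
\]
Morally this is the annealed version of the multiplication-rule argument in \Cref{lem:pattern-weight-bound}; we would verify the replacement with a Chernoff-type negative-association bound for matchings. By Cauchy–Schwarz and $\min\{x_i,y_i\}\le\sqrt{x_iy_i}$,
\[
\sum_i x_iy_i\ \ge\ \frac{\bigl(\sum_i\min\{x_i,y_i\}\bigr)^2}{q}\ \ge\ \frac{\theta^2}{q}.
\]
At the same time $H(\vecx)+H(\vecy)\le 2\ln q$. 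This yields $F\le 2\ln q-\Delta(1-\e^{-\beta})\theta^2/q$. With $\theta=1/(6\Delta)$ this is too weak, since the penalty is only of order $1/\Delta$. So we must couple the two terms: when the overlap $\sum_i x_iy_i$ is small, the entropy must be close to that of a pattern.

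Concretely, we would split the colors into $A=\{i:x_i\ge y_i\}$ and $B=[q]\setminus A$. We bound $H(\vecx)\le H(x_A,x_B)+x_A\ln\abs{A}+x_B\ln\abs{B}$, and similarly for $\vecy$. Writing $r=x_B+y_A>\theta$, this gives
\[
H(\vecx)+H(\vecy)\le\ln(\abs{A}\abs{B})+2H(r)+r\ln(q-1).
\]
Against this we use $\sum_i x_iy_i\ge c\,r$ for a suitable constant $c$, say $c\ge 1/(2q)$ in the high-rank regime; this is what the main step below must establish. The resulting inequality $2H(r)+r\ln q\le \Delta(1-\e^{-\beta})c\,r$ for $r\ge\theta$ would follow from $\ParameterCondition$, since $H(r)/r\le \ln(\e/r)\le\ln(6\e\Delta)$.

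\textbf{Main obstacle.} The hard step is the linear lower bound $\sum_i x_iy_i\gtrsim r/q$. This is false in general: one color class can carry the whole overlap with tiny mass on the other side. The remedy is to optimize the choice of $A$ jointly, using that each color $i$ contributes $\min\{x_i,y_i\}\cdot\max\{x_i,y_i\}$. Colors with small $\max\{x_i,y_i\}$ contribute little to the entropy excess, and we would charge their contribution to the $\ln(q-1)$ term, via a case analysis on $\max\{x_i,y_i\}\ge 1/(3q)$. If the annealed bound turns out too lossy near $r\approx\theta$, we would fall back on \Cref{cor:random-regular-bipartite-expansion} through an expander-mixing estimate $e_G(S,T)\ge \Delta\abs{S}\abs{T}/n-\lambda\sqrt{\abs{S}\abs{T}}$ for the monochromatic edges.
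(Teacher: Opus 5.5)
Your skeleton matches the paper's proof. It runs a first moment over $\calG_{n,\Delta}^{\CM}$ and uses the per-configuration bound $\mathbb{E}_G[\e^{-\beta m_G(\sigma)}]\le \mathrm{poly}(n)\,\e^{-\Delta n(1-\e^{-\beta})\langle\vecx,\vecy\rangle}$. The paper gets that bound by data processing onto the indicator of $i=j$ plus the Gibbs variational formula; your negative-association route for a uniform matching also works. It then uses the lower bound $Z_{G,q}(\beta)\ge(\lfloor q/2\rfloor\lceil q/2\rceil)^n$ and Markov with a union bound over types. You also correctly see that $2\ln q-\Delta(1-\e^{-\beta})\theta^2/q$ is useless and that entropy and overlap must be coupled.

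\textbf{The gap.} That coupling is the whole difficulty of the lemma, and your proposal does not supply it. The chain you actually write fails. The step $H(\vecx)\le H(x_A,x_B)+x_A\ln\abs{A}+x_B\ln\abs{B}$ discards the non-uniformity of $\vecx$ on $A$ (and of $\vecy$ on $B$), which is exactly the slack needed.

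Take $q=3$, $\vecx=(r,1-r,0)$, $\vecy=(r,0,1-r)$. The rank is $rn$, $A=\{1,2\}$, $B=\{3\}$, and $\langle\vecx,\vecy\rangle=r^2$. Your bound gives $\ln 2+2H(r)+r\ln 2$, while the truth is $H(\vecx)+H(\vecy)=2H(r)$. Your required inequality becomes $2H(r)+r\ln 2\le\Delta(1-\e^{-\beta})r^2$. At $r=\theta=1/(6\Delta)$ this is false: $2H(\theta)\ge\ln(6\Delta)/(3\Delta)$, while the right side is at most $1/(36\Delta)$. Yet $F$ itself is comfortably below $\ln 2$ here.

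The sketched remedy does not close this. The term $r\ln(q-1)$ is part of what must be dominated, so charging to it supplies no slack. What is needed is a quantitative entropy deficit for colors that are light on both sides (effectively dropping a color costs about $2/q$ against the balanced pattern), traded off simultaneously against their overlap. You have not shown that a fixed threshold such as $\max\{x_i,y_i\}\ge 1/(3q)$ achieves this under $\ParameterCondition$.

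The expander-mixing fallback is also ineffective at this scale. For $\abs{S},\abs{T}\approx\theta n$, the error $\lambda\sqrt{\abs{S}\abs{T}}\approx 2\sqrt{\Delta}\,\theta n$ exceeds the main term $\Delta\theta^2 n$.

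\textbf{The missing ingredient.} It is the paper's \Cref{lem:balanced-profile-inequality}: for $\kappa\ge 16\lceil q/2\rceil$,
\[
H(\vecx)+H(\vecy)-\ln(\lfloor q/2\rfloor\lceil q/2\rceil)\le\kappa\langle\vecx,\vecy\rangle+2\e^{-\kappa/(4\lceil q/2\rceil)} .
\]
The proof sets $z_i=(x_i+y_i)/2$ and $\xi_i=x_i/(x_i+y_i)$. It chooses a balanced pattern $(A,B)$ adapted to $\{i:\xi_i\ge 1/2\}$. It then decomposes $\frac12\bigl(H(\vecx)+H(\vecy)-\ln(\abs{A}\abs{B})\bigr)$ into per-color binary-entropy terms minus $\mathrm{KL}(\vecz\,\|\,\boldsymbol{\zeta})$, where $\boldsymbol{\zeta}$ is the color profile of the balanced pattern. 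Each per-color term is bounded by the binary Gibbs variational inequality against $\kappa z_i r_i$, with $r_i=\min\{\xi_i,1-\xi_i\}$ and $\langle\vecx,\vecy\rangle=4\sum_i z_i^2 r_i(1-r_i)$. The KL term is precisely the entropy deficit that absorbs the residual from light colors, those with $\kappa z_i$ small.

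Apply it with $\kappa=\frac34\Delta(1-\e^{-\beta})$. This leaves a penalty $\frac14\Delta(1-\e^{-\beta})\langle\vecx,\vecy\rangle\ge\frac{1-\e^{-\beta}}{144q\Delta}$, against an error that is exponentially small in $\Delta(1-\e^{-\beta})/q$. So the quadratic Cauchy--Schwarz bound $\langle\vecx,\vecy\rangle\ge\theta^2/q$ you already have suffices, and no linear lower bound on the overlap is ever needed.
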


To show the tail bound, we need the following inequality.
\begin{lemma} \label{lem:balanced-profile-inequality}
    For all probability vectors $\vecx, \vecy \in \Delta_q$ and every $\kappa \ge 16\ceil{q/2}$,
    \begin{align*}
        H(\vecx) + H(\vecy) - \ln{(\floor{q/2} \ceil{q/2})} \le \kappa \inner{\vecx}{\vecy} + 2\e^{-\kappa/(4\ceil{q/2})}.
    \end{align*}
\end{lemma}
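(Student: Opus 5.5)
The plan is to decouple $\vecx$ and $\vecy$ with the Gibbs variational principle and then do a one-vector case analysis. Throughout, $H$ is the Shannon entropy $-\sum_i x_i\ln x_i$; the inequality only makes sense with this sign. Write $a=\floor{q/2}$ and $c=\ceil{q/2}$, so the target constant is $\ln(ac)$.

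\textbf{Step 1: eliminate $\vecx$.} For fixed $\vecy$, the Gibbs variational principle gives
\begin{align*}
    \max_{\vecx\in\Delta_q}\bigl(H(\vecx)-\kappa\inner{\vecx}{\vecy}\bigr)=\ln\sum_{i=1}^q \e^{-\kappa y_i}.
\end{align*}
So it suffices to prove, for every $\vecy\in\Delta_q$,
\begin{align*}
    \Psi(\vecy)\defeq H(\vecy)+\ln\sum_{i=1}^q\e^{-\kappa y_i}\le \ln(ac)+2\e^{-\kappa/(4c)}.
\end{align*}
The extremal picture is $\vecy$ uniform on a set $B$ of size $a$ or $c$ and $\vecx$ concentrated on the complement $A$. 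This gives roughly $\ln|A|+\ln|B|\le\ln(ac)$, with error $|B|\e^{-\kappa/|B|}/|A|$ coming from the colors of $B$. Since $|B|\le c$, that error is far below the allowance $2\e^{-\kappa/(4c)}$.

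\textbf{Step 2: split the colors by a threshold.} Set $\tau=1/(4c)$. Let $S=\set{i: y_i<\tau}$ with $s=|S|$, let $L=[q]\setminus S$, and let $m=y_S\le s\tau\le q/(4c)\le 1/2$.
\begin{itemize}
    \item For $i\in L$ we have $\e^{-\kappa y_i}\le\e^{-\kappa\tau}$. Hence $\ln\sum_i\e^{-\kappa y_i}\le \ln s+\ln\sum_{i\in S}\e^{-\kappa y_i}/s$ plus an error of at most $(q-s)\e^{-\kappa/(4c)}/s$. When $s\ge q/2$ this error is at most $2\e^{-\kappa/(4c)}$.
    \item By the chain rule, $H(\vecy)\le H(m)+m\ln s+(1-m)\ln(q-s)$.
\end{itemize}
If $m=0$, then $\Psi\le \ln s+\ln(q-s)+\text{error}\le\ln(ac)+\text{error}$, because $s(q-s)\le ac$ for integers. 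So the whole difficulty is to absorb the mass $m$ that $\vecy$ places on $S$.

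\textbf{Step 3: trade entropy against the exponential sum (main obstacle).} Putting mass on $S$ raises $H(\vecy)$ by roughly $H(m)+m\ln\frac{s}{q-s}$. It must lower $\ln\sum_{i\in S}\e^{-\kappa y_i}$ by a comparable amount.
\begin{itemize}
    \item I would not use the chord bound on $[0,\tau]$. That bound loses the $\kappa$-dependence, so it gives only a $\kappa$-free linear penalty in $m$, which cannot beat $H(m)\approx m\ln(1/m)$ for small $m$.
    \item Instead I would apply Step 1 in reverse: $\ln\sum_{i\in S}\e^{-\kappa y_i}=\max_{\vecx'\in\Delta(S)}\bigl(H(\vecx')-\kappa\inner{\vecx'}{\vecy_S}\bigr)$. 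Then use concavity together with symmetry under permuting colors. This should show that at a maximizer of $\Psi$, the coordinates of $\vecy$ take at most two distinct values: a value near $0$ on $S$ and a common value on $L$.
    \item That reduces $\Psi$ to an explicit function of $(s,m)$. Its stationarity condition forces the per-coordinate mass on $S$ to satisfy roughly $\kappa\, m/s\approx\ln(1/m)$. Evaluating there shows the net gain over $\ln s+\ln(q-s)$ is $\bigO{\e^{-\kappa/(4c)}}$, using $\kappa\ge 16c$ to keep the relevant logarithms in range.
\end{itemize}

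\textbf{Step 4: small $S$.} The remaining case is $s<q/2$, where the term $(q-s)/s$ in Step 2 is too large. Here I would swap the roles of $\vecx$ and $\vecy$ and apply the whole argument to $\vecx$ instead. Alternatively, I would choose the threshold adaptively so that the colors where $y_i$ is small number at least $\floor{q/2}$; in the extremal configuration the set $A$ has size at least $a$, so such a choice should exist. The constant $16$ in $\kappa\ge16\ceil{q/2}$ and the factor $2$ in front of the error term leave ample slack for the crude estimates in both cases.
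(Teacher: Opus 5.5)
Your Step~1 is a correct and clean reduction. By the Gibbs variational principle, the lemma is equivalent to $\Psi(\vecy)=H(\vecy)+\ln\sum_i\e^{-\kappa y_i}\le\ln(ac)+2\e^{-\kappa/(4c)}$ for all $\vecy\in\Delta_q$. What follows, however, is a plan rather than a proof, and the decisive Step~3 relies on a mechanism that does not work. $\Psi$ is a concave function plus a convex one: your own ``Step~1 in reverse'' writes $\ln\sum_i\e^{-\kappa y_i}$ as a maximum of affine functions of $\vecy$. So there is no concavity to combine with symmetry. Any such argument must in fact fail globally. If concavity plus permutation symmetry applied, the uniform vector would be a maximizer, yet it gives $\Psi=2\ln q-\kappa/q\le 2\ln q-8<\ln(ac)$; the true maximizers are the asymmetric, pattern-like vectors. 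Restricting to $S$ or $L$ does not help. Equalizing coordinates there raises $H$ but lowers the Schur-convex sum $\sum_i\e^{-\kappa y_i}$, so the two effects compete. Moreover, the stationarity condition $\ln y_i+(\kappa/Z)\e^{-\kappa y_i}=\mathrm{const}$ (with $Z=\sum_j\e^{-\kappa y_j}$) involves a non-monotone function of $y_i$. Its derivative $1/t-(\kappa^2/Z)\e^{-\kappa t}$ changes sign twice in your parameter range, so critical points can carry three distinct values, not two. Even granting a two-value reduction, you never carry out the evaluation ``$\kappa m/s\approx\ln(1/m)$, net gain $O(\e^{-\kappa/(4c)})$''. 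The lemma also needs the explicit constant $2$, not an $O(\cdot)$ bound.

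Steps~2 and~4 have further holes. In Step~2, $\ln(u+v)\le\ln u+v/u$ with $u=\sum_{i\in S}\e^{-\kappa y_i}$ and $v\le(q-s)\e^{-\kappa/(4c)}$ gives the error $v/u$, not $v/s$. Since $u<s$ as soon as $m>0$, and $u$ can be as small as about $s\e^{-\kappa\tau}$, your error bound is invalid exactly in the regime Step~3 must handle. So Step~2 does not decouple the problem as claimed. In Step~4, there is no $\vecx$ left after Step~1. ``Swapping roles'' means going back to the two-vector inequality and eliminating $\vecy$ instead. That helps only if one of $\vecx,\vecy$ has at least $q/2$ coordinates below $\tau$, and nothing excludes both having fewer. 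In that case pigeonhole gives just one color with $x_i,y_i\ge\tau$, i.e.\ $\kappa\inner{\vecx}{\vecy}\ge\kappa/(16c^2)$. This can be as small as $1/c$, so by itself it does not pay for $H(\vecx)+H(\vecy)-\ln(ac)$, which can be about $2\ln 2$. The ``adaptive threshold'' alternative is not specified.

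The paper sidesteps all of this by never optimizing out one vector. With $z_i=(x_i+y_i)/2$ and $\xi_i=x_i/(x_i+y_i)$, it assigns each color to the side where it predominantly lives, giving a balanced pattern $(A,B)$ with $|A|\in\{a,c\}$. It then rewrites $H(\vecx)+H(\vecy)-\ln(|A||B|)$ as per-color binary-entropy terms minus $2\KL{\vecz}{\boldsymbol{\zeta}}$, where $\zeta_i=1/(2|A|)$ on $A$ and $1/(2|B|)$ on $B$. Each color's term is bounded by a binary Gibbs inequality against its overlap $x_iy_i=4z_i^2r_i(1-r_i)$, and the KL term absorbs the residual. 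No structural information about maximizers is needed.
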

\begin{proof}
    Take $\vecz \in \Delta_q$ and $\boldsymbol{\xi} \in \mathbb{R}^q$ as
    \[
        z_i = \frac{x_i + y_i}{2}, \quad \xi_i = \frac{x_i}{x_i + y_i}
    \]
    with convention $\xi_i = 0$ if $x_i + y_i = 0$.
    We construct a pattern $(A, B) \in \PATTERN$ as following: let $J = \set{i \in [q] \;:\; \xi_i \ge 1/2}$.
    If $\abs{J} \le \floor{q/2}$, construct an arbitrary $A$ such that $A \supseteq J$ and $\abs{A} = \floor{q/2}$; otherwise choose $A \subseteq J$ with $\abs{A} = \ceil{q/2}$.
    Let $B = [q] \setminus A$.
    Define two vectors $\vece \in \mathbb{R}^q$ and $\boldsymbol{\zeta} \in \Delta_q$ as
    \[
        e_i = \begin{cases}
            1 - \xi_i & i \in A \\
            \xi_i & i \in B
        \end{cases}\;,
        \quad
        \zeta_i = \begin{cases}
            \frac{1}{2\abs{A}} & i \in A \\
            \frac{1}{2\abs{B}} & i \in B
        \end{cases}\;.
    \]
    Then by $-x_i \ln{x_i} - y_i\ln{y_i} = 2z_i H(\xi_i) - 2z_i\ln{z_i}$, it holds that
    \begin{align*}
        \frac{H(\vecx) + H(\vecy) - \log{\abs{A}\abs{B}}}{2} &= \sum_{i \in A} z_i \ab(H(e_i) + e_i \ln{(\abs{A}/\abs{B})}) \\
        &+ \sum_{i \in B} z_i \ab(H(e_i) - e_i \ln{(\abs{A}/\abs{B})}) + \KL{\vecz}{\boldsymbol{\zeta}}.
    \end{align*}
    Set $r_i = \min\set{e_i, 1 - e_i}$.
    It is direct to see
    \begin{align*}
         \frac{H(\vecx) + H(\vecy) - \log{\abs{A}\abs{B}}}{2} &\le \sum_{i \in [q]} z_i\ab(H(r_i) + r_i \ln{(\ceil{q/2}/\floor{q/2})}) - \KL{\vecz}{\boldsymbol{\zeta}}.
    \end{align*}
    By the binary Gibbs variational inequality,
    \begin{align*}
        H(r) + r \ln{(\floor{q/2}\ceil{q/2})} \le \ln{\ab(1 + \e^{\ln{(\floor{q/2}\ceil{q/2})} - \kappa z})} + \kappa z r.
    \end{align*}
    Then together with $\inner{\vecx}{\vecy} = 4\sum_{i \in [q]} z_i^2 r_i(1 - r_i)$,
    \begin{align*}
         \frac{H(\vecx) + H(\vecy) - \log{\abs{A}\abs{B}}}{2} \le \frac{\kappa}{2} \inner{\vecx}{\vecy} + \sum_{i \in [q]} \ab(z_i \ln{\ab(1 + \frac{\ceil{q/2}}{\floor{q/2}}\e^{-\kappa z_i})} - \zeta_i \varphi(z_i/\zeta_i))
    \end{align*}
    where $\varphi(x) = x\ln{x} - x + 1$.

    Define the residual quantity \[
        R \defeq \sum_{i \in [q]} \ab(z_i \ln{\ab(1 + \frac{\ceil{q/2}}{\floor{q/2}}\e^{-\kappa z_i})} - \zeta_i \varphi(z_i/\zeta_i))
    \]
    Now we show $R \le \e^{-\kappa/(4\ceil{q/2})}$.
    Set $w_i = z_i / \zeta_i$, $\lambda_i = \kappa \zeta_i \ge 8$.
    It holds that
    \[
        R = \sum_{i \in [q]} \zeta_i \ab(w_i \ln{\ab(1 + \frac{\ceil{q/2}}{\floor{q/2}} \e^{-\lambda_i w_i})} - \varphi(w_i))
    \]
    Since $\ceil{q/2}/\floor{q/2} \le 2$ for $q \ge 3$ and $\ln{(1 + x)} \le x$,
    \[
        w \ln{\ab(1 + \frac{\ceil{q/2}}{\floor{q/2}} \e^{-\lambda w})} - \varphi(w) \le 2w\e^{-\lambda w} - \varphi(w).
    \]
    For $\lambda \ge 8$, the function $w \mapsto 2w\e^{-\lambda w} - \varphi(w)$ is negative on $[0, 1/2]$ and $2w\e^{-\lambda w} - \varphi(w) \le 2 (1/2) \e^{-\lambda/2}$ for $w \ge 1/2$.
    Then it holds that
    \begin{align*}
        R \le \sum_{i \in [q]} \zeta_i \e^{-\lambda_i/2} \le \e^{-\kappa/(2\max{\abs{A}, \abs{B}})} = \e^{-\kappa/(4\ceil{q/2})}
    \end{align*}
    by $\lambda_i \ge \kappa/(2\max{\abs{A}, \abs{B}})$ and $\sum_{i \in [q]} \zeta_i = 1$.
    Thus we conclude the lemma.
\end{proof}

Now we are ready to show~\Cref{lem:high-rank-tail-bound}.
\begin{proof}[Proof of~\Cref{lem:high-rank-tail-bound}]
    We work on the configuration model to show that with high probability over $G \sim \calG_{n, \Delta}^{\CM}$,
    \begin{align} \label{eq:configuration-model-tail-bound}
        \Pr[\sigma \sim \mu_{G, q; \beta}]{\RANK(\sigma) > \theta n} \le \e^{-Cn}
    \end{align}
    for some constant $C > 0$ and conclude the lemma by~\Cref{prop:configuration-model-to-random-regular-bipartite-graph}.

    Fix a configuration $\sigma : V \to [q]$ and thus the probability vector $\vecx, \vecy$.
    For convenience, set $N = \Delta n$.
    Let $\vecpi = (\pi_{ij})_{i, j \in [q]}$ be the matrix meaning that there are $n_{ij} = N \pi_{ij}$ edges between color $i$ to color $j$ in $E$ from $\calL$ to $\calR$.
    Then it holds that
    \begin{align*}
        \Pr{\vecpi} = \frac{\prod_{i} (N x_i)! \prod_{j} (N y_j)!}{N! \prod_{i, j} (N \pi_{ij})!}.
    \end{align*}
    Using the inequality for every probability $\vecp \in \Delta_d$,
    \begin{align*}
        (M + 1)^{-1} \e^{M H(\vecp)} \le \binom{M}{Mp_1, \ldots, M p_d} \le \e^{M H(\vecp)},
    \end{align*}
    we obtain that
    \begin{align*}
        \Pr{\vecpi} \le (N + 1)^q \exp\ab(-N(H(\vecpi) - H(\vecx) - H(\vecy))) = (N + 1)^q \e^{-N\KL{\vecpi}{\vecx \otimes \vecy}}.
    \end{align*}
    Consider the data processing inequality under the mapping $(i, j) \mapsto \id{i = j}$.
    We obtain that
    \begin{align*}
        \KL{\vecpi}{\vecx \otimes \vecy} \ge \KL{\Ber(r)}{\Ber(s)}
    \end{align*}
    where $r = \sum_{i} \pi_{ii}$ and $s = \sum_{i} x_i y_i$.
    The Gibbs variatinal formula shows that for every $r \in [0, 1]$ and $\beta > 0$,
    \begin{align*}
        \KL{\vecpi}{\vecx \otimes \vecy} + r \beta &\ge \min_{0 \le x \le 1}\set{\KL{\Ber(x)}{\Ber(s)} + x\ln{\beta}} \\
        &= -\ln{\ab(1 - s + s \e^{-\beta})} \\
        &\le (1 - \e^{-\beta}) s.
    \end{align*}
    At $\beta = \infty$, it must hold that $r = 0$ (otherwise $\sigma$ is not a proper coloring).
    Then it holds that
    \begin{align*}
        \KL{\vecpi}{\vecx \otimes \vecy} \ge \KL{\Ber(0)}{\Ber(s)} = -\ln{(1 - s)} \ge (1 - \e^{-\beta}) s.
    \end{align*}
    We mark here that when $s = 1$, the argument still holds since $\sigma$ is colored by only one color.

    Now we compute the expectation of the weight $\sigma$.
    Note that there are at most $(N + 1)^{(q - 1)^2}$ possible feasible $\vecpi$ and the weight of $\sigma$ corresponding to $\vecpi$ is $\e^{-\beta N r}$.
    Then,
    \begin{align*}
        \E[G \sim \calG_{n, \Delta}^{\CM}]{\e^{-\beta m_G(\sigma)}} &\le (N + 1)^{q + (q - 1)^2} \e^{-N(1 - \e^{-\beta}) \inner{\vecx}{\vecy}} \\
        &\le (\Delta n + 1)^{q^2} \exp\ab(-\Delta n \ab(1 - \e^{-\beta})\inner{\vecx}{\vecy}).
    \end{align*}

    Note that there are at most $(n + 1)^{2(q - 1)}$ possible $\vecx, \vecy$ and each pair of $\vecx, \vecy$ can correspond to at most $\e^{n(H(\vecx) + H(\vecy))}$ configurations.
    It holds that
    \begin{align*}
        \E[\calG_{n, \Delta}^{\CM}]{Z^{\ge \theta n}} &\le (n + 1)^{2(q - 1)} (\Delta n + 1)^{q^2} \max_{\RANK > \theta n} \set{\exp\ab(n(H(\vecx) + H(\vecy) - \Delta (1 - \e^{-\beta})\inner{\vecx}{\vecy}))}.
    \end{align*}
    Apply~\Cref{lem:balanced-profile-inequality} for $\kappa = \frac{3\Delta(1 - \e^{-\beta})}{4} \ge 16\ceil{q/2}$ by assumption, and we obtain that
    \begin{align*}
        H(\vecx) + H(\vecy) - \Delta(1 - \e^{-\beta}) \inner{\vecx}{\vecy} &\le \ln{(\floor{q/2} \ceil{q/2})} - \frac{\Delta(1 - \e^{-\beta})}{4}\inner{\vecx}{\vecy} + 2\exp\ab(-\frac{\Delta(1 - \e^{-\beta})}{16\ceil{q/2}}) \\
        &\le \ln{(\floor{q/2} \ceil{q/2})} - \frac{\Delta(1 - \e^{-\beta})}{4} \frac{\theta^2}{q} + 2\exp\ab(-\frac{\Delta(1 - \e^{-\beta})}{16\ceil{q/2}}).
    \end{align*}
    Set
    \[
        C \defeq \frac{1 - \e^{-\beta}}{144q \Delta} - 2\exp\ab(-\frac{\Delta(1 - \e^{-\beta})}{16\ceil{q/2}}).
    \]
    Then it holds that
    \[
        \E[\calG_{n, \Delta}^{\CM}]{Z^{\ge \theta n}} \le (n + 1)^{2(q - 1)} (\Delta n + 1)^{q^2} (\floor{q/2}\ceil{q/2})^{n} \e^{-C n}
    \]
    and by Markov inequality and a crude estimation to $Z_{G, q}(\beta) \ge (\floor{q/2}\ceil{q/2})^n$, we prove~\eqref{eq:configuration-model-tail-bound} and thus conclude the lemma if $C > 0$.
    In fact, by $\ParameterCondition$ and direct numerical calculation, it holds that
    \begin{align*}
        2\exp\ab(-\frac{\Delta(1 - \e^{-\beta})}{16\ceil{q/2}}) \le \exp\ab(-\frac{81}{32} \ln{\frac{q\Delta}{1 - \e^{-\beta}}}) \le \exp\ab(-\ln{\frac{288 q\Delta}{1 - \e^{-\beta}}}).
    \end{align*}
    Therefore we can obtain that $C > 0$.
\end{proof}

\section{Slow Mixing of Glauber Dynamics} \label{sec:slow-mixing-theorem}

In this section, we show how to prove the torpid mixing of the single-site Glauber dynamics $P^{\GD}$ for anti-ferromagnetic $q$-Potts models on random regular bipartite graphs $\calG_{n, \Delta}^{\BIP}$ (\IE, \Cref{thm:Glauber-dynamics-slow-mixing}).
The key ingredient is to pick an appropriate subset of $\Omega$ and to show that is has an exponentially small conductance.

Fix $\theta = 1/(6\Delta)$.
For a pattern $(A, B) \in \PATTERN$, set $\Omega_{A, B}^{\LOW}$ collecting all configurations specified by $(A, B)$ at low ranks:
\[
    \Omega_{A, B}^{\LOW} \defeq \bigcup_{S \subseteq V : \abs{S} \le \theta n} \chi_{A, B}(S).
\]
Moreover, define the inner-boundary of $\Omega_{A, B}^{\LOW}$ as
\begin{align} \label{eq:inner-boundary}
    \calB_{A, B} = \set{\sigma \in \Omega_{A, B}^{\LOW} \mid \exists S \subseteq V, \abs{S} = \theta n, \sigma \in \chi_{A, B}(S)}.
\end{align}
We first show how to pick a proper pattern $(A, B)$ and then derive a small conductance of $\Omega_{A, B}^{\LOW}$.
\begin{proposition} \label{prop:pattern-choice}
    Assume that the statement in~\Cref{lem:high-rank-tail-bound} holds.
    There exists a pattern $(A, B) \in \PATTERN$ such that for all sufficiently large positive integer $n$,
    \[
        \frac{1}{2^{q + 1} - 4} \le \mu\ab(\Omega_{A, B}^{\LOW}) \le \frac{1}{2}.
    \]
\end{proposition}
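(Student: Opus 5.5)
The plan is to cover almost all of $\Omega$ by the sets $\Omega^{\LOW}_{A,B}$, get the lower bound by pigeonhole, and get the upper bound from the symmetry of the model under permutations of colors.

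\emph{Counting patterns.} The number of patterns is $\abs{\PATTERN} = 2^q - 2$. This counts ordered pairs $(A,B)$ with $A$ and $B$ nonempty and complementary.

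\emph{Lower bound.} By \Cref{lem:high-rank-tail-bound}, $\mu(\Omega^{>\theta n}) \le \e^{-Cn}$. Every configuration of rank at most $\theta n$ lies in $\chi_{A,B}(S)$ for some pattern and some $S$ with $\abs{S}\le\theta n$. Hence
\[
\bigcup_{(A,B)\in\PATTERN}\Omega^{\LOW}_{A,B}\supseteq\Omega^{\le\theta n},
\]
and therefore $\sum_{(A,B)}\mu(\Omega^{\LOW}_{A,B})\ge 1-\e^{-Cn}$. For large $n$ the right-hand side is at least $1/2$. By pigeonhole, some pattern satisfies
\[
\mu(\Omega^{\LOW}_{A,B})\ge \frac{1}{2(2^q-2)}=\frac{1}{2^{q+1}-4}.
\]
The bound is stated in exactly this form, which suggests this is the intended route.

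\emph{Upper bound.} Any permutation $\pi$ of $[q]$ preserves $\mu$, because it preserves which edges are monochromatic. It maps $\Omega^{\LOW}_{A,B}$ bijectively onto $\Omega^{\LOW}_{\pi(A),\pi(B)}$. So $\mu(\Omega^{\LOW}_{A,B})$ depends only on the size $\abs{A}=a$.

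Next I claim that distinct patterns have disjoint low-rank sets when $\theta$ is small. Suppose $\sigma$ lies in both $\Omega^{\LOW}_{A,B}$ and $\Omega^{\LOW}_{A',B'}$ with $(A,B)\ne(A',B')$. Pick a color $i$ in $A\cap B'$, or symmetrically one in $B\cap A'$.
\begin{itemize}
\item A left vertex colored $i$ agrees with $(A,B)$ but disagrees with $(A',B')$.
\item A right vertex colored $i$ disagrees with $(A,B)$.
\end{itemize}
This alone does not give a contradiction, since $i$ may simply be rare under $\sigma$. So disjointness is false in general. The fallback is that the pattern $(A,B)$ with $a=1$ has mass at most about $1/q$ of the total by symmetry, which is at most $1/2$ up to overlaps. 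Overlaps only reduce the disjoint-union argument, so this needs a union-type bound rather than disjointness; see the obstacle below.

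\emph{Assembling the two bounds.} Rather than proving disjointness, I will pick the pattern from a single symmetry class and compare it with its images. Among the sizes $a$, pigeonhole gives a class $\{(A,B):\abs{A}=a\}$ whose total mass is at least $1/2$. Each member then has mass at least $1/(2\binom{q}{a})\ge 1/(2^{q+1}-4)$.

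For the upper bound I need one member of that class to have mass at most $1/2$. Distinct members $(A,B)$ and $(A',B')$ of the same class are related by a permutation swapping some color $i\in A$ with some $j\in B$. Write $\tau$ for the induced map on configurations. Then
\[
\mu(\Omega^{\LOW}_{A,B})=\mu(\tau\Omega^{\LOW}_{A,B}),
\]
and I need $\Omega^{\LOW}_{A,B}\cap\tau\Omega^{\LOW}_{A,B}$ to have small measure. If both sets had mass above $1/2$, their intersection would have mass above $0$, and in fact at least $2m-1$ where $m$ is the common mass.

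\emph{Main obstacle.} The hardest step is showing that configurations lying near two different patterns, within rank $\theta n$ of each, carry negligible weight. I would attack this by bounding $W_{A,B}(S)$ using expansion, in the spirit of \Cref{lem:weight-upper-bound}. If that fails, the last resort is to verify $\mu(\Omega^{\LOW}_{A,B})\le 1/2$ directly for $a\ne q-a$, using the swap $(A,B)\mapsto(B,A)$ combined with left–right symmetry of $\calG^{\CM}$.
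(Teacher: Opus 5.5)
Your lower bound is the paper's pigeonhole argument and is fine. Your later claim that some size class $\{(A,B) : \abs{A}=a\}$ has total mass at least $1/2$ is unjustified: pigeonhole over the $q-1$ sizes only gives $(1-\e^{-Cn})/(q-1)$. You do not need that claim, though.

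The genuine gap is the upper bound, which must hold for the \emph{same} pattern that pigeonhole selects. You never prove $\mu(\Omega^{\LOW}_{A,B})\le 1/2$; you leave it as the ``main obstacle'' and offer two routes.
\begin{itemize}
\item An expansion-based weight bound on configurations close to two patterns. This is not carried out.
\item Combining $(A,B)\mapsto(B,A)$ with left--right symmetry of $\calG^{\CM}_{n,\Delta}$. This fails outright. Here $\mu=\mu_{G,q;\beta}$ is the Gibbs measure of one fixed graph $G$, which has no left--right symmetry. Symmetry of the random-graph law says nothing about $\mu(\Omega^{\LOW}_{A,B})$ versus $\mu(\Omega^{\LOW}_{B,A})$ for a given $G$.
\end{itemize}
Comparing $(A,B)$ with its image under a single transposition $i\leftrightarrow j$ is also the wrong choice, since those two low-rank sets genuinely overlap.

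The missing observation is to compare $(A,B)$ with a colour-permuted copy whose palette on one side is \emph{disjoint} from the original. Since $\abs{A}+\abs{B}=q$, either $\abs{A}\le q/2$ or $\abs{B}\le q/2$. In the first case, pick a permutation $\pi$ of $[q]$ with $\pi(A)\subseteq B$. Any $\sigma\in\Omega^{\LOW}_{A,B}\cap\Omega^{\LOW}_{\pi(A),\pi(B)}$ would have at least $(1-\theta)n$ vertices of $\calL$ coloured from $A$ and at least $(1-\theta)n$ coloured from $\pi(A)$. This is impossible, because $2(1-\theta)n>n$ for $\theta=1/(6\Delta)<1/2$. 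The case $\abs{B}\le q/2$ is the same argument on $\calR$. Since $\pi$ maps $\Omega^{\LOW}_{A,B}$ onto $\Omega^{\LOW}_{\pi(A),\pi(B)}$ and preserves $\mu$, the two disjoint sets have equal measure. Hence $\mu(\Omega^{\LOW}_{A,B})\le 1/2$ for \emph{every} pattern, using neither expansion nor the tail bound. This is what the paper's one-line appeal to ``symmetry of colors and $\theta\le 1/24$'' means.

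Your $a=1$ class was already of this type: the sets $\Omega^{\LOW}_{\{i\},[q]\setminus\{i\}}$ are pairwise disjoint, so no overlap estimate was needed. Your overlap route could also be completed via \Cref{prop:pattern-intersection} plus Bonferroni over the $\binom{q}{a}\ge 3$ equal-mass patterns of a class, but that is far heavier than needed.
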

\begin{proof}
    For every pattern $(A, B) \in \PATTERN$, the upper bound comes directly from the symmetry of colors and $\theta = 1/(6\Delta) \le 1/24$.
    To show the lower bound, by~\Cref{lem:high-rank-tail-bound}, it holds that
    \[
        \mu\ab(\bigcup_{(A, B) \in \PATTERN} \Omega_{A, B}^{\LOW}) \ge 1 - \e^{-Cn}
    \]
    for some constant $C > 0$.
    Pick the pattern $(A, B) \in \PATTERN$ maximizing $\mu(\Omega_{A, B}^{\LOW})$.
    By a union bound,
    \begin{align*}
        \mu\ab(\Omega_{A, B}^{\LOW}) &\ge \frac{1}{\abs{\PATTERN}} \mu\ab(\bigcup_{(A, B) \in \PATTERN} \Omega_{A, B}^{\LOW}) \\
        &\ge \frac{1 - \e^{-Cn}}{2^{q} - 2} \\
        &\ge \frac{1}{2^{q + 1} - 4}.
    \end{align*}
    Thus we conclude the proposition.
\end{proof}

The following lemma shows that when the vertex expansion holds, the inner boundary $\calB_{A, B}$ has a small Gibbs weight.
\begin{lemma} \label{lem:small-inner-boundary}
    Assume that the statement in~\Cref{lem:weight-upper-bound} holds.
    There exists a constant $C > 0$ such that
    \begin{align*}
        \mu\ab(\calB_{A, B}) \le \e^{-C n}
    \end{align*}
    for all sufficiently large positive integer $n$.
\end{lemma}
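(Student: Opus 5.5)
Write $\calB_{A, B} = \bigcup_{S \subseteq V : \abs{S} = \theta n} \chi_{A, B}(S)$, which is exactly the definition in~\eqref{eq:inner-boundary}. The plan is to bound $w(\calB_{A, B})$ above by a union bound over all $S$ of size $\theta n$, using the exponential decay of $W_{A,B}(S)$ from \Cref{lem:weight-upper-bound}. We then compare with a crude lower bound on $Z_{G,q}(\beta)$: the weight of $\chi_{A,B}(\emptyset)$ alone is $\abs{A}^n\abs{B}^n$, because every configuration agreeing with the pattern everywhere has no monochromatic edge. Hence
\begin{align*}
    \mu(\calB_{A, B}) \le \sum_{S : \abs{S} = \theta n} \frac{W_{A, B}(S)}{Z_{G,q}(\beta)} \le \sum_{S : \abs{S} = \theta n} \frac{W_{A, B}(S)}{\abs{A}^n \abs{B}^n} \le \binom{2n}{\theta n} \e^{-\kappa \theta n}.
\end{align*}
For the last step we apply \Cref{lem:weight-upper-bound} with $\abs{S} = \theta n \le \theta n$; we may assume $\theta n$ is an integer, or else the boundary is empty.

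Next, bound the binomial coefficient by $\binom{2n}{\theta n} \le (2\e/\theta)^{\theta n} = (12\e\Delta)^{\theta n}$. This gives
\[
    \mu(\calB_{A,B}) \le \exp\ab(-\theta n\ab(\kappa - \ln(12\e\Delta))),
\]
so it suffices to show $\kappa > \ln(12\e\Delta)$ for a suitable small $\delta$ in \Cref{lem:weight-upper-bound}. This is where \Cref{prop:KP-parameter} enters. It gives that $\kappa + \delta$ exceeds $\frac32 + \ln(\e\Delta^2 + 2(\Delta^2+1))$, and in fact exceeds it by at least $1/4$. Choosing $\delta = 1/8$, we get
\[
    \kappa \ge \tfrac{3}{2} + \ln\ab((\e+2)\Delta^2) + \tfrac18 .
\]
This is larger than $\ln(12\e\Delta) = 1 + \ln 12 + \ln \Delta$ because $\Delta \ge 4$: indeed $\ln((\e+2)\Delta^2) - \ln(12\Delta) = \ln((\e+2)\Delta/12) \ge \ln(18.8/12) > 0$, and $\frac32 + \frac18 > 1$. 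Taking $C = \theta(\kappa - \ln(12\e\Delta)) > 0$, which depends only on $q, \Delta, \beta$, finishes the proof.

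The main obstacle is only bookkeeping: we must make sure that the high-probability event of \Cref{lem:weight-upper-bound} is fixed with a concrete $\delta$ chosen before $n$, so that $C$ is a genuine constant. We must also check that the entropy factor $\binom{2n}{\theta n}$ is dominated by the decay rate $\kappa$. The latter is exactly what the numerical margin in \Cref{prop:KP-parameter} provides, so no new probabilistic input is needed.
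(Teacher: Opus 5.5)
Your proof is correct and follows the paper's argument essentially step for step. Both proofs take a union bound over the $\binom{2n}{\theta n} \le (12\e\Delta)^{\theta n}$ sets $S$ of size $\theta n$, bound each term by the decay $\e^{-\kappa\abs{S}}$ from \Cref{lem:weight-upper-bound}, compare against the crude bound $Z_{G,q}(\beta) \ge \abs{A}^n\abs{B}^n$, and invoke \Cref{prop:KP-parameter} to get $\kappa > \ln(12\e\Delta)$. Your concrete choice $\delta = 1/8$ and the explicit check $\ln((\e+2)\Delta/12) > 0$ for $\Delta \ge 4$ verify the final positivity of $C$, which the paper only asserts.
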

\begin{proof}
    By~\Cref{lem:weight-upper-bound} with a small $\delta > 0$ and~\Cref{prop:KP-parameter}, for every $S \subseteq V$ of size $\theta n$, it holds that
    \begin{align*}
        \frac{W_{A, B}(S)}{\abs{A}^n \abs{B}^n} \le \e^{-\rho\abs{S}}
    \end{align*}
    where $\rho > \frac{3}{2} + \ln{(\e\Delta^2 + 2(\Delta^2 + 1))}$.
    Then it holds that
    \begin{align*}
        \sum_{\sigma \in \calB_{A, B}} w(\sigma) &= \abs{A}^n \abs{B}^n \sum_{S \subseteq V : \abs{S} = \theta n} \frac{W_{A, B}(S)}{\abs{A}^n \abs{B}^n} \\
        &\le \abs{A}^n \abs{B}^n \binom{2n}{\theta n} \e^{-\rho \theta n} \\
        &\le \abs{A}^n \abs{B}^n \ab(\frac{2\e n}{\theta n})^{\theta n} \e^{-\rho \theta n} \\
        &\le \abs{A}^n \abs{B}^n \e^{-C n} \\
        &\le \e^{-Cn} Z_{G, q}(\beta)
    \end{align*}
    where $C \ge \frac{\theta}{4}(\rho - \ln{(2\e/\theta)}) > \frac{\theta}{4}\ab(\frac{3}{2} + \ln{(\e\Delta^2 + 2(\Delta^2 + 1))} - \ln{(2\e/\theta)}) > 0$.
\end{proof}

The choice of $(A, B)$ in~\Cref{prop:pattern-choice} together with~\Cref{lem:small-inner-boundary} implies a small conductance.
\begin{lemma} \label{lem:small-conductance}
    Assume that statements in~\Cref{lem:high-rank-tail-bound,lem:weight-upper-bound} hold.
    Let $(A, B)$ be the pattern in~\Cref{prop:pattern-choice}.
    Then there exists a constant $C > 0$ such that
    \[
        \Phi_{P^{\GD}}\ab(\Omega_{A, B}^{\LOW}) \le (2^{q + 1} - 4)\e^{-Cn}
    \]
    for all sufficiently large positive integer $n$.
\end{lemma}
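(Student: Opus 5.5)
The plan is to bound the ergodic flow out of $\Omega_{A, B}^{\LOW}$ by the Gibbs weight of its inner boundary $\calB_{A, B}$, and then combine \Cref{lem:small-inner-boundary} with the lower bound on $\mu(\Omega_{A, B}^{\LOW})$ from \Cref{prop:pattern-choice}.

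\textbf{Step 1: only boundary states can leave.} Take $\sigma \in \Omega_{A, B}^{\LOW}$ and $\tau \notin \Omega_{A, B}^{\LOW}$ with $P^{\GD}(\sigma, \tau) > 0$. Then $\sigma$ and $\tau$ differ at a single vertex $v$. Let $D(\sigma)$ be the set of vertices where $\sigma$ disagrees with $(A, B)$. Since $\chi_{A, B}(S)$ requires disagreement exactly on $S$, we have $\sigma \in \chi_{A, B}(D(\sigma))$, and $\sigma \in \Omega_{A, B}^{\LOW}$ means $\abs{D(\sigma)} \le \theta n$. A single-site change alters $\abs{D}$ by at most one. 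So $\tau \notin \Omega_{A, B}^{\LOW}$ forces $\abs{D(\tau)} = \abs{D(\sigma)} + 1 > \theta n$, hence $\abs{D(\sigma)} \ge \theta n - 1 + \text{(something)} $; precisely, $\abs{D(\sigma)} = \floor{\theta n}$. Therefore $\sigma \in \calB_{A, B}$, reading the definition~\eqref{eq:inner-boundary} with $\abs{S} = \floor{\theta n}$ (the integrality convention implicit in the paper).

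\textbf{Step 2: bound the flow.} Using $\sum_\tau P^{\GD}(\sigma, \tau) \le 1$,
\[
Q\ab(\Omega_{A, B}^{\LOW} \to \Omega \setminus \Omega_{A, B}^{\LOW}) \le \sum_{\sigma \in \calB_{A, B}} \mu(\sigma) = \mu(\calB_{A, B}) \le \e^{-Cn}
\]
by \Cref{lem:small-inner-boundary}.

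\textbf{Step 3: divide.} By \Cref{prop:pattern-choice}, $\mu(\Omega_{A, B}^{\LOW}) \ge 1/(2^{q+1} - 4)$. Dividing the flow bound by this gives
\[
\Phi_{P^{\GD}}(\Omega_{A, B}^{\LOW}) \le (2^{q+1} - 4)\e^{-Cn}.
\]

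\textbf{Main obstacle: Step 1.} The bookkeeping there must be exact. The disagreement set is determined by $\sigma$ and the pattern alone, which is what makes $\Omega_{A, B}^{\LOW}$ equal to $\set{\sigma : \abs{D(\sigma)} \le \theta n}$. The rounding of $\theta n$ must also match the one used in \Cref{lem:small-inner-boundary}; its proof works verbatim for $\abs{S} = \floor{\theta n}$, changing $C$ only slightly.
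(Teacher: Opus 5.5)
Your proof is correct and follows the paper's argument: only states in the inner boundary $\calB_{A,B}$ can leave $\Omega_{A,B}^{\LOW}$ in one Glauber step, so the ergodic flow is at most $\mu(\calB_{A,B}) \le \e^{-Cn}$ by \Cref{lem:small-inner-boundary}, and dividing by the lower bound $1/(2^{q+1}-4)$ on $\mu(\Omega_{A,B}^{\LOW})$ from \Cref{prop:pattern-choice} gives the claim. You also make explicit two points the paper leaves implicit: the disagreement set $D(\sigma)$ is uniquely determined, and the boundary size should be read as $\abs{S} = \floor{\theta n}$.
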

\begin{proof}
    For a transition $\sigma \to \tau$ where $\sigma \in \Omega_{A, B}^{\LOW}$ and $\tau \notin \Omega_{A, B}^{\LOW}$ by $P^{\GD}$, since the transition changes only one coordinate, it must hold that $\sigma \in \calB_{A, B}$.
    By definition of $Q$, it holds that
    \begin{align*}
        Q\ab(\Omega_{A, B}^{\LOW} \to \Omega \setminus \Omega_{A, B}^{\LOW}) &= \sum_{\sigma \in \calB_{A, B}} \mu(\sigma) P^{\GD}\ab(\sigma \to \Omega \setminus \Omega_{A, B}^{\LOW}) \le \mu\ab(\calB_{A, B}).
    \end{align*}
    Thus we have
    \begin{align*}
        \Phi_{P^{\GD}}\ab(\Omega_{A, B}^{\LOW}) &\le \frac{\mu\ab(\calB_{A, B})}{\mu\ab(\Omega_{A, B}^{\LOW})}.
    \end{align*}
    By the lower bound in~\Cref{prop:pattern-choice} and the upper bound for $\mu(\calB_{A, B})$ in~\Cref{lem:small-inner-boundary}, we conclude the lemma.
\end{proof}

\begin{proof}[Proof of~\Cref{thm:Glauber-dynamics-slow-mixing}]
    Assume that statements in~\Cref{lem:high-rank-tail-bound,lem:weight-upper-bound} hold (the event occurs with high probability).
    Pick the pattern $(A, B) \in \PATTERN$ in~\Cref{prop:pattern-choice} and set $\Lambda = \Omega_{A, B}^{\LOW}$.
    By~\Cref{prop:pattern-choice,lem:small-conductance}, it holds that $\mu(\Lambda) \le 1/2$ and $\Phi_{P^{\GD}}(\Lambda) \le \e^{-Cn}$ with some factor $C = C(q, \Delta, \beta)$ for all sufficiently large integer $n$.
    Hence $\Phi_{P^{\GD}} \le \Phi_{P^{\GD}}(\Lambda) \le \e^{-Cn}$ and by~\Cref{lem:slow-mixing-time}, \Cref{thm:Glauber-dynamics-slow-mixing} is concluded.
\end{proof}

\section{Approximately Counting Anti-ferromagnetic Potts Models} \label{sec:counting-theorem}

In this section, we design an $\FPTAS$ for counting the partition function of anti-ferromagnetic Potts models on random bipartite regular graphs (\Cref{thm:counting-results}).
To give an approximation to $Z_{G, q}(\beta)$, we apply the following ideas which can be viewed as a generalization of the algorithm to count the number proper colorings in~\cite{JKP19}.
\begin{enumerate}
    \item first, by~\Cref{lem:high-rank-tail-bound}, with high probability the major weight of $Z_{G, q}(\beta)$ lies in configurations of rank at most $\theta n$ for $\theta = 1/(6\Delta)$;
    \item then, the quantity $Z^{\le \theta n}$ can be approximately estimated by summing up configurations specified by each pattern respectively;
    \item finally, for a fixed pattern, the method of the abstract polymer model provides an algorithm to estimate the weights of specified configurations.
\end{enumerate}

To formalize the proof idea, set $\theta = 1/(6\Delta)$.
Define $\calS = \calS(G, \theta)$ as the collection of vertex subsets whose $2$-connected components are of size at most $\theta n$, \IE,
\begin{align*}
    \calS \defeq \set{S \subseteq V \mid \text{each $2$-connected component in $G$ is of size at most $\theta n$}}.
\end{align*}
For each pattern $(A, B) \in \PATTERN$, define the quantity $Z_{A, B}(\calS)$ as
\begin{align*}
    Z_{A, B}(\calS) \defeq \sum_{S \in \calS} W_{A, B}(S).
\end{align*}
Recall the definition of $\Omega_{A, B}^{\LOW}$ for $(A, B)$.
For convenience, we define the quantity
\begin{align*}
    Z_{A, B}^{\LOW} \defeq \sum_{\sigma \in \Omega_{A, B}^{\LOW}} w(\sigma) = \sum_{S \subseteq V : \abs{S} \le \theta n} W_{A, B}(S).
\end{align*}

The following lemmas together with~\Cref{lem:high-rank-tail-bound} are key ingredient to show~\Cref{thm:counting-results}.
\begin{lemma} \label{lem:estimation-by-patterns}
    Given positive integers $q \ge 3, \Delta \ge 4$ and a real number $\beta > 0$ satisfying
    \[
        \ParameterCondition,
    \]
    there exists a constant $C > 1/(20q)$ such that
    \begin{align*}
        1 \le \frac{\sum_{(A, B) \in \PATTERN} Z_{A, B}^{\LOW}}{Z^{\le \theta}} \le 1 + \binom{2^q - 2}{2} (2n + 1) \e^{-Cn}.
    \end{align*}
\end{lemma}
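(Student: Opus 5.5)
The plan is inclusion–exclusion over the union $\Omega^{\le \theta n} = \bigcup_{(A,B)\in\PATTERN}\Omega_{A,B}^{\LOW}$, which is exactly the definition of rank at most $\theta n$ (I read $Z^{\le\theta}$ as $Z^{\le\theta n}$). The lower bound is immediate: every $\sigma\in\Omega^{\le\theta n}$ lies in at least one $\Omega_{A,B}^{\LOW}$, so $\sum_{(A,B)} Z_{A,B}^{\LOW}\ge Z^{\le\theta n}$. For the upper bound I would use the Bonferroni-type estimate
\[
\sum_{(A,B)\in\PATTERN} Z_{A,B}^{\LOW} \le Z^{\le\theta n} + \sum_{\{(A,B),(A',B')\}} w\bigl(\Omega_{A,B}^{\LOW}\cap\Omega_{A',B'}^{\LOW}\bigr).
\]
Here the second sum runs over unordered pairs of distinct patterns, of which there are $\binom{2^q-2}{2}$. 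It then suffices to show that each pairwise overlap has weight at most $(2n+1)\e^{-Cn}Z^{\le\theta n}$.

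The overlap bound is a counting argument. Fix distinct patterns $(A,B)\ne(A',B')$ and pick a color $c$ on which they differ, say $c\in A\cap B'$. If $\sigma\in\chi_{A,B}(S)\cap\chi_{A',B'}(S')$ with $|S|,|S'|\le\theta n$, then every vertex of $\calL$ colored $c$ disagrees with $(A',B')$, and every vertex of $\calR$ colored $c$ disagrees with $(A,B)$. I would use the pair structure more cleverly, namely that $\sigma$ agrees with both patterns outside $S\cup S'$, a set of size at most $2\theta n$. Outside this set, $\calL$-vertices use only colors in $A\cap A'$ and $\calR$-vertices use only colors in $B\cap B'$.

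From this I would bound the overlap weight directly by $|A\cap A'|^{n}|B\cap B'|^{n}$, times $\binom{2n}{\le 2\theta n}q^{2\theta n}$ for the exceptional set. Rather than using $\e^{-\beta m}\le 1$, I would compare with $Z_{G,q}(\beta)\ge |A|^n|B|^n$. Since the patterns differ, either $|A\cap A'|\le|A|-1$ or $|B\cap B'|\le|B|-1$ (or one intersection might be empty). This gives a factor $(1-1/(q-1))^{n}\le\e^{-n/(q-1)}$ against the entropy cost $\binom{2n}{2\theta n}q^{2\theta n}=\e^{O(\theta\ln(q/\theta))n}$. With $\theta=1/(6\Delta)$ and $\Delta\ge 9q\ln(q\Delta)$, the entropy term is at most roughly $\frac{\ln(6\e q\Delta)}{3\Delta}n\le \frac{n}{27q}$-ish, which leaves $C>1/(20q)$.

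The factor $(2n+1)$ will come from splitting the sum over the sizes $|S\cup S'|$, or equivalently summing over the rank level. I would finish by dividing by $Z^{\le\theta n}$, using Lemma~\ref{lem:high-rank-tail-bound} to say $Z^{\le\theta n}\ge(1-\e^{-C'n})Z_{G,q}(\beta)\ge\frac12|A|^n|B|^n$, or more simply bounding $Z^{\le\theta n}\ge Z_{A,B}(\emptyset)=|A|^n|B|^n$ directly.

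The main obstacle is getting the explicit constant $1/(20q)$. One point needs care: the comparison must use the pattern $(A,B)$ with the larger product $|A||B|$ among the two, so that the loss $\frac{|A\cap A'||B\cap B'|}{|A||B|}$ is genuinely at most $1-1/(q-1)$. The entropy estimate $\ln\binom{2n}{2\theta n}+2\theta n\ln q\le 2\theta n\ln(\e q/\theta)$ must also be checked numerically against $n/(q-1)$ under the parameter condition.
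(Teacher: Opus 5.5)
Your proposal is correct and follows essentially the paper's route. You apply Bonferroni over unordered pairs of patterns, bound each overlap by counting configurations that agree with both patterns off a set of size at most $2\theta n$ (using $w\le 1$), and compare with a trivial lower bound on $Z^{\le\theta n}$. The only difference is the comparison base. The paper uses $|A\cap C|+|B\cap D|\le q-1$ against the balanced bound $Z^{\le\theta n}\ge(\lfloor q/2\rfloor\lceil q/2\rceil)^n$, giving an exponent of at least $2/q$. You instead compare $|A\cap A'|\,|B\cap B'|$ with the pair's own $|A|^n|B|^n$, giving an exponent of at least $1/(q-1)$; either member of the pair works, so you need not pick the larger product. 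Both exponents comfortably beat the entropy cost, which is at most $1/(18q)$ rather than the $1/(27q)$ you estimated, so $C>1/(20q)$ still follows.
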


\begin{lemma} \label{lem:pattern-approximations}
    Assume that statements in~\Cref{lem:weight-upper-bound,prop:KP-parameter} hold for positive integers $q \ge 3, \Delta \ge 4$ and a real number $\beta > 0$ satisfying
    \[
        \ParameterCondition.
    \]
    There exists a constant $C > 0$ such that for all sufficiently large positive integer $n$ and every pattern $(A, B) \in \PATTERN$,
    \begin{align*}
        1 \le \frac{Z_{A, B}(\calS)}{Z_{A, B}^{\LOW}} \le 1 + \e^{-Cn}.
    \end{align*}
\end{lemma}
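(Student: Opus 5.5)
We need to prove Lemma \ref{lem:pattern-approximations}: $Z_{A,B}(\calS)/Z^{\LOW}_{A,B}\in[1,1+\e^{-Cn}]$.

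The lower bound is a pure inclusion. Any $S$ with $\abs{S}\le\theta n$ has all of its $2$-connected components of size at most $\theta n$, so $\{S:\abs{S}\le\theta n\}\subseteq\calS$. Since the sets $\chi_{A,B}(S)$ are disjoint for distinct $S$, we get $Z^{\LOW}_{A,B}=\sum_{\abs{S}\le\theta n}W_{A,B}(S)\le Z_{A,B}(\calS)$.

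For the upper bound we must control
\[
Z_{A,B}(\calS)-Z^{\LOW}_{A,B}=\sum_{S\in\calS,\ \abs{S}>\theta n}W_{A,B}(S).
\]
The natural plan is to bound this sum directly using the $2$-factorization (Lemma \ref{lem:2-factorization}) together with the componentwise bound from Lemma \ref{lem:weight-upper-bound}. Take $S\in\calS$ with $2$-connected components $S_1,\dots,S_\ell$, each of size at most $\theta n$. Then
\[
\frac{W_{A,B}(S)}{\abs{A}^n\abs{B}^n}=\prod_i\frac{W_{A,B}(S_i)}{\abs{A}^n\abs{B}^n}\le \e^{-\kappa\abs{S}}.
\]
By Proposition \ref{prop:KP-parameter}, $\kappa$ can be taken to be some $\rho>\tfrac32+\ln(\e\Delta^2+2(\Delta^2+1))$.

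Next, count the sets $S$. Dropping the $\calS$ constraint, the number of $S\subseteq V$ of size $k$ is at most $\binom{2n}{k}$. For $k=\theta n$ this count has exponent about $\theta n\ln(2\e/\theta)=\theta n\ln(12\e\Delta)$. This is smaller than $\rho\,\theta n$, because $\rho$ exceeds roughly $2\ln\Delta+\ln(\e+2)+\tfrac32$, which beats $\ln(12\e\Delta)$ for $\Delta\ge4$. This is the same check used in Lemma \ref{lem:small-inner-boundary}.

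For $k$ near $n$, however, $\binom{2n}{k}$ grows like $2^{2n}$ while $\e^{-\rho k}$ decays much faster, so the counting only gets easier as $k$ grows. Indeed, $\binom{2n}{k}\e^{-\rho k}\le(2\e n/k)^k\e^{-\rho k}$ is decreasing in $k$ once $k\ge\theta n$, provided $\rho>\ln(2\e/\theta)$. Summing over $k$ from $\theta n$ to $2n$ then gives at most $2n\cdot \e^{-c n}$ with $c=\theta(\rho-\ln(2\e/\theta))>0$.

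It remains to compare with $Z^{\LOW}_{A,B}$. The set $S=\emptyset$ alone contributes $W_{A,B}(\emptyset)\ge\abs{A}^n\abs{B}^n\e^{0}$, since all configurations agreeing everywhere with the pattern are proper colorings of weight $1$. Hence $Z^{\LOW}_{A,B}\ge\abs{A}^n\abs{B}^n$. Dividing, the ratio is at most $1+2n\e^{-cn}\le 1+\e^{-Cn}$ for large $n$, with any $C<c$.

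The main subtlety is making sure the componentwise bound applies. Lemma \ref{lem:weight-upper-bound} is stated for $\abs{S_i}\le\theta n$, which is exactly what membership in $\calS$ guarantees, and factorization converts the total into $\e^{-\kappa\sum\abs{S_i}}$. So no expansion estimate for large sets is needed. I expect the only real check to be verifying the numerical inequality $\rho>\ln(12\e\Delta)$ using the margin from Proposition \ref{prop:KP-parameter}.
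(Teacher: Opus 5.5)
Your proof is correct and follows the paper's argument. The lower bound is the inclusion $\{S:\abs{S}\le\theta n\}\subseteq\calS$. For the upper bound you use the componentwise bound $\e^{-\kappa\abs{S}}$ from \Cref{lem:2-factorization} and \Cref{lem:weight-upper-bound} (which the paper packages as \Cref{cor:spread-weight-upper-bound}), the crude count $\binom{2n}{k}$, decay in $k$ from $\kappa>\ln(12\e\Delta)$, and the comparison $Z^{\LOW}_{A,B}\ge W_{A,B}(\emptyset)=\abs{A}^n\abs{B}^n$. The paper uses that last bound without stating it; you spell it out. The only other difference is cosmetic: the paper gets the decay from the ratio $\frac{2n-k}{k+1}\e^{-\kappa}<1$ and bounds the first term by $\binom{2n}{\theta n}\le \e^{2nH(\theta/2)}$, whereas you use $(2\e n/k)^k$ and monotonicity.
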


\begin{lemma} \label{lem:polymer-count-sparse}
    Assume that statements in~\Cref{lem:weight-upper-bound,prop:KP-parameter} hold for positive integers $q \ge 3, \Delta \ge 4$ and a real number $\beta > 0$ satisfying $\ParameterCondition$.
    For every real number $\eps \in (0, 1)$ and every pattern $(A, B) \in \PATTERN$, there exists a deterministic algorithm such that it outputs an $\eps$-approximation to $Z_{A, B}(\calS)$ in time $(n/\eps)^{\tau(\Delta)}$ for a computable function $\tau : \mathbb{R} \to (0, \infty)$ depending only on $\Delta$.
\end{lemma}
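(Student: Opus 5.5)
The plan is to express $Z_{A,B}(\calS)/(\abs{A}^n\abs{B}^n)$ as the partition function of an abstract polymer model on the square graph $G^2$, and then apply \Cref{thm:counting-polymer}. For the fixed pattern $(A,B)$, the polymers are the $2$-connected subsets $\gamma \subseteq V$, that is, sets with $G^2[\gamma]$ connected, of size $1 \le \abs{\gamma} \le \theta n$. Each polymer gets weight
\[
w_\gamma \defeq \frac{W_{A,B}(\gamma)}{\abs{A}^n\abs{B}^n}.
\]
Two polymers are declared incompatible, $\gamma \sim \gamma'$, when $\gamma\cup\gamma'$ is $2$-connected; equivalently, $\gamma$ and $\gamma'$ intersect or lie at distance at most $2$ in $G$. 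This is exactly the graph-induced polymer model of \Cref{sec:abstract-polymer-model} for the graph $G^2$, whose maximum degree is at most $\Delta^2$.

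A set $S \in \calS$ corresponds bijectively to the family of its $2$-connected components. These components form a mutually compatible family of polymers, and every compatible family $\Gamma$ arises in this way from $S=\bigcup\Gamma$. Hence \Cref{lem:2-factorization} gives
\[
Z_{A,B}(\calS)=\abs{A}^n\abs{B}^n\,\Xi,
\]
where $\Xi$ is the polymer partition function. An $\eps$-approximation of $\Xi$ is therefore enough.

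Next I verify the three hypotheses of \Cref{thm:counting-polymer} in turn.

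\emph{Item 1.} Whether a set is a polymer is decided by a connectivity check in $G^2$ together with a size check. For the weight, the computation in the proof of \Cref{lem:2-factorization} expresses $w_\gamma$ as a sum over partial colorings of $\gamma\cup N_G(\gamma)$, normalised by $\abs{A}^{\abs{\Lambda\cap\calL}}\abs{B}^{\abs{\Lambda\cap\calR}}$ with $\Lambda=\gamma\cup N_G(\gamma)$. Brute-force enumeration of these colorings takes time at most $q^{(\Delta+1)\abs{\gamma}}\poly{\abs{\gamma}}$, which has the required form $\abs{\gamma}^{C_1}\e^{C_2\abs{\gamma}}$.

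\emph{Item 2.} I take $g(\gamma)=\abs{\gamma}/2$, so item 2 holds with $C=1/2$.

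\emph{Item 3 (Kotecký–Preiss).} Any $\gamma'\sim\gamma$ must contain a vertex of $\gamma\cup N_{G^2}(\gamma)$, and this set has at most $(\Delta^2+1)\abs{\gamma}$ vertices. The number of $G^2$-connected sets of size $k$ containing a given vertex is at most $(\e\Delta^2)^{k}$, by the standard tree-counting bound. By \Cref{lem:weight-upper-bound}, with $\delta$ small, together with \Cref{prop:KP-parameter}, every polymer satisfies $w_{\gamma'}\le \e^{-\rho\abs{\gamma'}}$ with $\rho \ge \tfrac32+\ln(\e\Delta^2+2(\Delta^2+1))+\tfrac14-\delta$. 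Combining these,
\[
\sum_{\gamma'\sim\gamma} w_{\gamma'}\e^{\frac32\abs{\gamma'}}
\le (\Delta^2+1)\abs{\gamma}\sum_{k\ge1}\Bigl(\e\Delta^2\,\e^{3/2-\rho}\Bigr)^k .
\]
The ratio of this geometric series is at most $\e\Delta^2\e^{-1/8}/\bigl(\e\Delta^2+2(\Delta^2+1)\bigr)$, which is strictly less than $1$. The whole sum should then be bounded by roughly $(\Delta^2+1)\cdot\frac{\e\Delta^2}{2(\Delta^2+1)}\cdot\e^{-1/8}\cdot(\text{small})\cdot\abs{\gamma}$, and the target is $\abs{\gamma}/2$.

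The main obstacle is making this final constant-chasing actually close below $\abs{\gamma}/2$. The extra $2(\Delta^2+1)$ term inside the logarithm in \Cref{prop:KP-parameter} appears designed for exactly this: it forces the ratio to be at most $\e\Delta^2/(\e\Delta^2+2(\Delta^2+1))$, so that the series sum times $(\Delta^2+1)$ is at most $1/2$. If the bookkeeping does not close directly, my first fallback is to use the $1/4$ slack from \Cref{prop:KP-parameter}, for instance by taking $g(\gamma)=\abs{\gamma}/4$, which still satisfies item 2.

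With all three hypotheses verified for the graph family $\{G^2\}$ of maximum degree $\Delta^2$, \Cref{thm:counting-polymer} yields an $\eps$-approximation of $\Xi$ in time $(n/\eps)^{\tau}$ with $\tau$ depending only on $\Delta$ (and $q$, which is fixed). Multiplying by $\abs{A}^n\abs{B}^n$ gives the claimed $\eps$-approximation of $Z_{A,B}(\calS)$.

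One subtlety must be checked: the cap $\abs{\gamma}\le\theta n$ is irrelevant for the algorithm. The algorithm only enumerates polymers of size $O(\ln(n/\eps))$, which is below $\theta n$ for large $n$, and the cluster expansion truncation bound in \Cref{thm:counting-polymer} relies only on the Kotecký–Preiss condition, which we have verified for all polymers including the large ones.
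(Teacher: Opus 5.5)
Your setup follows the paper: the same polymer model on $G^2$ (maximum degree $d=\Delta^2$), the identity $Z_{A,B}(\calS)=\abs{A}^n\abs{B}^n\,\Xi$ via \Cref{lem:2-factorization}, the choice $g(\gamma)=\abs{\gamma}/2$, and an appeal to \Cref{thm:counting-polymer}. The gap is in the Koteck\'y--Preiss step. With the count you wrote, that step does not close.

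You bound the number of $G^2$-connected sets of size $k$ through a fixed vertex by $(\e d)^{k}$. Write $x=\e^{3/2-\rho}$. Your sum is then $(d+1)\abs{\gamma}\,\frac{\e d x}{1-\e d x}$. Plugging in $x\le 1/(\e d+2(d+1))$ gives $\frac{\e d}{2}\abs{\gamma}$, not $\frac{1}{2}\abs{\gamma}$. So your claim that the $2(\Delta^2+1)$ term forces the total to be at most $1/2$ is off by a factor $\e\Delta^2$.

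The $1/4$ slack from \Cref{prop:KP-parameter} only buys a constant factor $\e^{-1/4}$, which cannot absorb a loss of order $\Delta^2$. Your fallback $g(\gamma)=\abs{\gamma}/4$ fails for the same reason. At $\Delta=4$, even with the full slack, your bound still exceeds $8\abs{\gamma}$ against a target of $\abs{\gamma}/4$.

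The fix is that the tree-counting bound is $(\e d)^{k-1}$, because the root vertex is fixed. With it,
\[
\sum_{\gamma'\sim\gamma} w_{\gamma'}\,\e^{\frac32\abs{\gamma'}} \le (d+1)\abs{\gamma}\sum_{k\ge1}(\e d)^{k-1}x^k=(d+1)\abs{\gamma}\,\frac{x}{1-\e d\,x}\le (d+1)\abs{\gamma}\cdot\frac{1}{2(d+1)}=\frac{\abs{\gamma}}{2}.
\]
This uses only $x\le 1/(\e d+2(d+1))$, i.e.\ $\kappa\ge \frac32+\ln(\e\Delta^2+2(\Delta^2+1))$. That holds as soon as you take $\delta<1/4$ in \Cref{lem:weight-upper-bound}, and no further slack is needed. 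This is exactly the paper's \Cref{lem:KP-threshold}: the $2(d+1)$ term is calibrated to the $(k-1)$ count. With this correction, the rest of your argument goes through as in the paper.
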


We provide the proof of~\Cref{lem:estimation-by-patterns} in~\Cref{sec:estimation-by-patterns}, the proof of~\Cref{lem:pattern-approximations} in~\Cref{sec:pattern-approximations} and the proof of~\Cref{lem:polymer-count-sparse} in~\Cref{sec:polymer-count-sparse}.
Here we show~\Cref{thm:counting-results}.
\begin{proof}[Proof of~\Cref{thm:counting-results}]
    We assume that all statements in~\Cref{lem:high-rank-tail-bound,lem:weight-upper-bound,prop:KP-parameter} hold (this event occurs with high probability).
    Set $\theta = 1/(6\Delta)$.
    For all sufficiently large positive integer $n$, by~\Cref{lem:high-rank-tail-bound,lem:estimation-by-patterns,lem:pattern-approximations}, there exists a constant $C = C(q, \Delta, \beta) > 0$ such that the quantity
    \begin{align} \label{eq:final-approximation}
        1 - \e^{-Cn} \le \frac{\sum_{(A, B) \in \PATTERN} Z_{A, B}(\calS)}{Z_{G, q}(\beta)} \le 1 + 4\e^{-Cn}.
    \end{align}
    
    When $\eps \le 20\e^{-Cn}$, note that $q^{2n} = \poly{n, \eps^{-1}}$ and it takes $\poly{n}$ time to compute the weight of each configuration.
    Therefore, we enumerate all configurations by brute force and directly calculate $Z_{G, q}(\beta)$.
    Suppose that $\eps > 20\e^{-Cn}$.
    We apply the algorithm in~\Cref{lem:polymer-count-sparse} to output an $(\eps/5)$-approximation $\wh{Z}_{A, B}$ to $Z_{A, B}(\calS)$ in time $(n/\eps)^{\tau(\Delta)}$ for each pattern $(A, B) \in \PATTERN$ and write $\wh{Z} = \sum_{(A, B) \in \PATTERN} \wh{Z}_{A, B}$.
    By~\eqref{eq:final-approximation}, it holds that
    \begin{align*}
        1 - \eps \le (1 - \eps/5)(1 - \e^{-Cn}) \le \frac{\wh{Z}}{Z_{G, q}(\beta)} \le (1 + 4\e^{-Cn})(1 + \eps/5) \le 1 + \eps,
    \end{align*}
    meaning that $\wh{Z}$ is an $\eps$-approximation to $Z_{G, q}(\beta)$.
    By~\Cref{lem:polymer-count-sparse}, the total running time is at most $(2^q - 2) (n/\eps)^{\tau(\Delta)}$.
    Then the theorem is concluded.
\end{proof}

The proof of~\Cref{cor:proper-coloring-result} comes directly from~\Cref{thm:counting-results}.
\begin{proof}[Proof of~\Cref{cor:proper-coloring-result}]
    Note that the whole analysis holds for proper vertex-coloring models whenever $\Delta \ge 9q\ln{(q\Delta)}$.
    Then we can obtain the $\FPTAS$ similarly.
\end{proof}

\subsection{Estimation by patterns} \label{sec:estimation-by-patterns}

The proof of~\Cref{lem:estimation-by-patterns} is implicit stated in~\cite{JKP19,GGS21}.
For completeness, we formally state here.

The following proposition bounds the intersection between two different patterns.
\begin{proposition} \label{prop:pattern-intersection}
    For two distinct patterns $(A, B), (C, D) \in \PATTERN$, there exists a constant $C > 1/(20q)$ such that
    \begin{align*}
        \sum_{\sigma \in \Omega_{A, B}^{\LOW} \cap \Omega_{C, D}^{\LOW}} w(\sigma) \le (2n + 1) \e^{-Cn} Z^{\le \theta n}.
    \end{align*}
\end{proposition}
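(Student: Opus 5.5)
The plan is to show that a configuration of low rank with respect to two distinct patterns must have very unbalanced color counts, and to compare the weight of such configurations with a single balanced pattern.

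\emph{Step 1: structure of the intersection.} Let $\sigma \in \Omega_{A,B}^{\LOW}\cap\Omega_{C,D}^{\LOW}$. Since $(A,B)\neq(C,D)$, some color $c$ lies in $A\cap D$ or in $B\cap C$; by symmetry take $c \in A\cap D$. At a left vertex colored $c$, $\sigma$ agrees with $(A,B)$ but disagrees with $(C,D)$. At a right vertex colored $c$, it disagrees with $(A,B)$. Both ranks are at most $\theta n$, so the numbers of vertices colored $c$ on the left and on the right are each at most $\theta n$. Hence color $c$ is used on at most $2\theta n$ vertices in total.

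\emph{Step 2: reduce to a pattern with one color removed.} Write $S_c$ for the set of vertices colored $c$. Then $\sigma$ restricted to $V\setminus S_c$ uses only the palette $[q]\setminus\{c\}$. I will split by $k = |S_c| \le 2\theta n$; this accounts for the factor $(2n+1)$, which bounds the number of values of $k$. For fixed $k$, I will bound the total weight by
\[
\binom{2n}{k}\,\max_{(A',B')} |A'|^{n}|B'|^{n} \cdot(\text{penalty}),
\]
where $(A',B')$ ranges over patterns of $[q]\setminus\{c\}$.

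A more robust route reuses the first-moment estimate. Configurations in the intersection have profiles $(\vecx,\vecy)$ with $x_c+y_c\le 2\theta$, and because both ranks are small, $\vecx$ and $\vecy$ are nearly supported on $A\cap C$ and $B\cap D$ respectively. So the entropy $H(\vecx)+H(\vecy)$ is at most roughly $\ln(|A\cap C|\,|B\cap D|) + O(\theta\ln(q/\theta))$, and $|A\cap C|\,|B\cap D| \le a(q-1-a)$ for some $a$. Against the lower bound $Z^{\le\theta n}\ge \lfloor q/2\rfloor^n\lceil q/2\rceil^n(1-o(1))$, the loss is at least
\[
\ln\frac{\lfloor q/2\rfloor\lceil q/2\rceil}{\lfloor (q-1)/2\rfloor\lceil (q-1)/2\rceil}\ \ge\ \ln\Bigl(1+\frac{1}{q}\Bigr)\ \approx\ \frac{1}{q}.
\]
This exceeds $1/(20q)$ once the $O(\theta\ln(q\Delta))$ entropy correction is absorbed. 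That absorption uses $\theta = 1/(6\Delta)$ and $\Delta \ge 9q\ln(q\Delta)$, which give $\theta\ln(q\Delta)\le 1/(54q)$.

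\emph{Main obstacle.} The weight is deterministic in $G$, so the bound must hold for the fixed expander, not just in expectation. I would therefore avoid the first-moment method here. Instead I would bound directly: choose positions of the disagreeing vertices, and at most $\binom{2n}{2\theta n}\le (2\e/\theta)^{2\theta n}$ ways to place them, each with at most $q$ colors. On agreeing vertices only $A\cap C$ (left) and $B\cap D$ (right) are available, and monochromatic-edge factors are bounded by $1$. This gives a total weight at most
\[
(|A\cap C|\,|B\cap D|)^n\,(2\e q/\theta)^{2\theta n}.
\]
It remains to check carefully that $2\theta\ln(2\e q/\theta)$ stays below $\ln(1+1/q)-1/(20q)$ under the parameter condition. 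This numerical check is the delicate step, and the room left for $1/(20q)$ is where I expect the proof to be tight.
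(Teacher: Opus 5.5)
Your final direct count is essentially the paper's proof. The vertices disagreeing with either pattern form a set of size at most $2\theta n$ that is colored arbitrarily, all other vertices use only $A\cap C$ on $\calL$ and $B\cap D$ on $\calR$ with $|A\cap C|+|B\cap D|\le q-1$, and weights are bounded by $1$. You then compare with the $(\lfloor q/2\rfloor\lceil q/2\rceil)^n$ weight-one, rank-$0$ configurations of a balanced pattern. These lie in $\Omega^{\le\theta n}$ deterministically, so no $(1-o(1))$ factor or tail bound is needed.

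The numerical step you flag as delicate has plenty of slack. The parameter condition gives $\Delta\ge 9q\ln(q\Delta)$, so $2\theta\ln(2\e q/\theta)=\ln(12\e q\Delta)/(3\Delta)<1/(15q)$. Meanwhile the log-ratio of the two products is at least $\ln(1+1/q)\ge 5/(6q)$, and in fact at least $2/q$.
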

\begin{proof}
    For a configuration $\sigma \in \Omega_{A, B}^{\LOW} \cap \Omega_{C, D}^{\LOW}$, the union of its disagree part in at least one side is of size at most $2\theta n$.
    Outside of the disagree part, colors on $\calL$ must lie in $A \cap C$ and colors on $\calR$ must lie in $B \cap D$.
    Note that $\abs{A \cap C} + \abs{B \cap D} \le q - 1$ (otherwise they are same).
    Then counting $\sigma$ by firstly enumerating the disagree part and secondly colors, together with $w(\sigma) \le 1$, it holds that
    \begin{align*}
        \sum_{\sigma \in \Omega_{A, B}^{\LOW} \cap \Omega_{C, D}^{\LOW}} w(\sigma) \le (\floor{(q - 1)/2}\ceil{(q - 1)/2})^{n}(2n + 1) \e^{2nH(\theta) + 2\theta n \ln{q}}.
    \end{align*}
    Therefore, pick $C = \ln{\frac{\floor{q/2}\ceil{q/2}}{\floor{(q - 1)/2}\ceil{(q - 1)/2}}} - 2H(\theta) - 2\theta \ln{q}$ and we conclude the proposition if $C > 1/(20q)$.
    In fact, it holds that
    \begin{align*}
        \ln{\frac{\floor{q/2}\ceil{q/2}}{\floor{(q - 1)/2}\ceil{(q - 1)/2}}} \ge 2/q
    \end{align*}
    and by $\ParameterCondition$,
    \begin{align*}
        2H(\theta) + 2\theta \ln{q} \le \frac{\ln{(6\e q\Delta)}}{3\Delta} \le \frac{1}{18q}.
    \end{align*}
    Therefore we obtain that $C > 1/(20q)$.
\end{proof}

\begin{proof}[Proof of~\Cref{lem:estimation-by-patterns}]
    The lower bound is direct: it is trivial to see $Z^{\le \theta n} \le \sum_{(A, B) \in \PATTERN} Z_{A, B}^{\LOW}$.
    For the upper bound, by~\Cref{prop:pattern-intersection},
    \begin{align*}
        \sum_{(A, B) \in \PATTERN} Z_{A, B}^{\LOW} - Z^{\le \theta n} &= \sum_{(A, B) \neq (C, D)} \sum_{\sigma \in \Omega_{A, B}^{\LOW} \cap \Omega_{C, D}^{\LOW}} w(\sigma) \\
        &\le \binom{2^q - 2}{2} (2n + 1) \e^{-Cn} Z^{\le \theta n}.
    \end{align*}
    Thus we conclude the lemma.
\end{proof}

\subsection{Pattern approximations} \label{sec:pattern-approximations}

To show~\Cref{lem:pattern-approximations}, the following corollary shows that the upper bound in~\Cref{lem:weight-upper-bound} also holds for every subset $S \in \calS$.
\begin{corollary} \label{cor:spread-weight-upper-bound}
    Assume that the statement in~\Cref{lem:weight-upper-bound} holds.
    Then it holds that for every pattern $(A, B) \in \PATTERN$ and every vertex subset $S \in \calS$, it holds that
    \begin{align*}
        \frac{W_{A, B}(S)}{\abs{A}^n \abs{B}^n} \le \e^{-\kappa \abs{S}}
    \end{align*}
    where $\kappa$ is the exact factor in~\Cref{lem:weight-upper-bound}.
\end{corollary}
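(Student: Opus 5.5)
The plan is to reduce to \Cref{lem:weight-upper-bound} through the factorization in \Cref{lem:2-factorization}. Fix a pattern $(A, B) \in \PATTERN$ and $S \in \calS$. Let $S_1, \ldots, S_\ell$ be the $2$-connected components of $S$. By the definition of $\calS$, each satisfies $\abs{S_i} \le \theta n$ with $\theta = 1/(6\Delta)$. Hence \Cref{lem:weight-upper-bound} applies to every component separately, which gives
\[
    \frac{W_{A, B}(S_i)}{\abs{A}^n \abs{B}^n} \le \e^{-\kappa \abs{S_i}} \quad \text{for each } i \in [\ell].
\]
That lemma holds simultaneously for all subsets of size at most $\theta n$ on the same high-probability event, and this event is what we are assuming here.

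Next, \Cref{lem:2-factorization} states that the normalized weight $W_{A, B}(S)/(\abs{A}^n\abs{B}^n)$ factorizes as the product over the $2$-connected components. Multiplying the bounds above then gives
\[
    \frac{W_{A, B}(S)}{\abs{A}^n \abs{B}^n} \le \prod_{i=1}^{\ell} \e^{-\kappa\abs{S_i}} = \e^{-\kappa \sum_i \abs{S_i}} = \e^{-\kappa \abs{S}},
\]
because the components partition $S$. The case $S = \emptyset$ is trivial: the product is empty, and $W_{A,B}(\emptyset) = \abs{A}^n\abs{B}^n$ since no edge is monochromatic under the pattern.

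The only point needing care is that the factorization in \Cref{lem:2-factorization} is genuinely valid. Its proof restricts to $\Lambda = S \cup N_G(S)$. For this to factorize, the sets $S_i \cup N_G(S_i)$ must not share any vertex in a way that couples constraints across components. This holds for the following reason. A vertex $v \in N_G(S_i) \cap N_G(S_j)$ with $i \neq j$ would make $S_i$ and $S_j$ $2$-connected to each other, contradicting maximality of the components. Likewise, no vertex of $S_j$ lies in $N_G(S_i)$. Consequently every edge between $S$ and $V \setminus S$, together with the disagreement constraints on $S_i$ and the agreement constraints on $N_G(S_i)$, involves exactly one component. Once this is checked, the rest is the routine multiplication above. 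I do not expect other obstacles, since the sign and size conditions are already absorbed into \Cref{lem:weight-upper-bound}.
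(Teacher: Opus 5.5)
Your proposal is correct and follows essentially the same route as the paper: split $S\in\calS$ into its $2$-connected components $S_1,\ldots,S_\ell$, apply \Cref{lem:weight-upper-bound} to each component (each has size at most $\theta n$), and multiply the bounds using \Cref{lem:2-factorization}. Your extra check that the sets $S_i \cup N_G(S_i)$ are pairwise disjoint is correct, and it makes explicit why the factorization holds, which the paper leaves implicit.
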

\begin{proof}
    For $S \in \calS$, let its maximal $2$-connected components be $S_1, \ldots, S_{\ell}$.
    By assumption $\abs{S_i} \le \theta n$ for $i = 1, \ldots, \ell$.
    We bound $W_{A, B}(S)$ by~\Cref{lem:2-factorization} and~\Cref{lem:weight-upper-bound} to show that
    \begin{align*}
        \frac{W_{A, B}(S)}{\abs{A}^n \abs{B}^n} &\le \prod_{i = 1}^{\ell} \frac{W_{A, B}(S_i)}{\abs{A}^n \abs{B}^n} \\
        &\le \prod_{i = 1}^{\ell} \e^{-\kappa \abs{S_i}} \\
        &= \e^{-\kappa \abs{S}}.
    \end{align*}
\end{proof}

\begin{proof}[Proof of~\Cref{lem:pattern-approximations}]
    Recall the definition of $\Omega_{A, B}^{\LOW}$.
    It is trivial to see $Z_{A, B}^{\LOW} \le Z_{A, B}(\calS)$.
    For the upper bound, pick a sufficiently small $\delta > 0$.
    Apply~\Cref{cor:spread-weight-upper-bound} and we obtain that
    \begin{align*}
        Z_{A, B}(\calS) - Z_{A, B}^{\LOW} &\le \abs{A}^{n} \abs{B}^{n} \sum_{S \ge \theta n : S \in \calS} \e^{-\kappa \abs{S}} \\
        &\le \sum_{k = \theta n + 1}^{2n} \binom{2n}{k} \e^{-\kappa \cdot k}.
    \end{align*}
    Since~\Cref{prop:KP-parameter} holds, it can be shown that $\kappa > \ln{(12\e\Delta)} > \ln{((2 - \theta)/\theta)}$.
    Therefore, the sequence $\set{\binom{2n}{k} \e^{-\kappa \cdot k}}_{k \ge \theta n + 1}$ decreases geometrically and by $\binom{2n}{\theta n} \le \exp(2nH(\theta/2))$, it holds that
    \begin{align*}
        Z_{A, B}(\calS) - Z_{A, B}^{\LOW} &\le 2n \abs{A}^{n} \abs{B}^{n} \exp(2nH(\theta/2) - \theta \kappa n) \\
        &\le 2n \exp(2nH(\theta/2) - \theta \kappa n) Z_{A, B}^{\LOW}.
    \end{align*}
    By a direct calculation, $2H(\theta/2) \le \theta \ln{(2\e/\theta)} \le \theta\ln{(2\e\Delta)} < \theta \cdot \kappa$.
    Therefore, there exists a factor $C > 0$ such that $Z_{A, B}(\calS) \le (1 + \e^{-Cn}) Z_{A, B}^{\LOW}$.
\end{proof}

\subsection{Counting by abstract polymers} \label{sec:polymer-count-sparse}

In this part, we provide our algorithm in~\Cref{lem:polymer-count-sparse} to approximately compute $Z_{A, B}(\calS)$ for each pattern $(A, B) \in \PATTERN$.
The main tool is the \emph{abstract polymer model} introduced in~\Cref{sec:abstract-polymer-model}.
Fix $\theta = 1/(6\Delta)$ and a pattern $(A, B) \in \PATTERN$.
We construct the abstract polymer model $\calM_{A, B}(\theta) = (\calC = \calC_{A, B}(\theta), \sim)$ as following.
For every connected subgraph $S \subseteq V$ in $G^2$ of size $\abs{S} \le \theta n$, it induces a polymer $\gamma = (V_\gamma, w_\gamma)$ with $V_{\gamma} = S$ and
\begin{align*}
    w_{\gamma} \defeq \frac{W_{A, B}(V_\gamma)}{\abs{A}^n \abs{B}^n}.
\end{align*}
For two polymers $\gamma_1, \gamma_2$, we say $\gamma_1 \sim \gamma_2$ if the induced subgraph $G^2[\gamma_1 \cup \gamma_2]$ is connected.

For the model of proper colorings,~\cite{JKP19} shows that the partition function of the polymer model is somehow a `normalized' number of proper colorings at low ranks.
For the anti-ferromagnetic Potts model, we put the following proposition showing the relation between $\Xi(\calM_{A, B}(\theta))$ and $Z_{A, B}(\calS)$ stated in~\cite{GGS21} and for completeness we provide the proof.
\begin{proposition} \label{prop:polymer-partition-function}
    For every pattern $(A, B) \in \PATTERN$, it holds that
    \begin{align*}
        Z_{A, B}(\calS) = \abs{A}^n \abs{B}^n \cdot \Xi\ab(\calM_{A, B}(\theta)).
    \end{align*}
\end{proposition}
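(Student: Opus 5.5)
The identity will follow from a bijection between sets $S \in \calS$ and compatible families of polymers, combined with the factorization of \Cref{lem:2-factorization}. We expand both sides as sums and match them term by term.

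\textbf{Step 1: polymer families to sets.} Expand the polymer partition function as
\[
    \Xi(\calM_{A, B}(\theta)) = \sum_{\Gamma \in \calK} \prod_{\gamma \in \Gamma} w_\gamma .
\]
Here $\calK$ consists of the finite families of pairwise compatible polymers. Two polymers are compatible exactly when $G^2[\gamma_1 \cup \gamma_2]$ is disconnected. Since each $V_\gamma$ is itself $G^2$-connected, this means $V_{\gamma_1}$ and $V_{\gamma_2}$ are disjoint and no edge of $G^2$ joins them; equivalently, they are at graph distance at least $3$. Given $\Gamma \in \calK$, set $S_\Gamma = \bigcup_{\gamma \in \Gamma} V_\gamma$. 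Then the $2$-connected components of $S_\Gamma$, meaning the connected components of $G^2[S_\Gamma]$, are exactly the sets $V_\gamma$ for $\gamma \in \Gamma$. Each has size at most $\theta n$, so $S_\Gamma \in \calS$.

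\textbf{Step 2: sets to polymer families.} Conversely, take $S \in \calS$ with $2$-connected components $S_1, \ldots, S_\ell$. Each $S_i$ is $G^2$-connected and has size at most $\theta n$, so it is a polymer. These polymers are pairwise compatible, because distinct components are not joined in $G^2$. This map is inverse to $\Gamma \mapsto S_\Gamma$, so we have a bijection between $\calK$ and $\calS$. The empty family corresponds to $S = \emptyset$, whose contribution is $W_{A,B}(\emptyset)/(\abs{A}^n\abs{B}^n) = 1$. This holds because $\chi_{A,B}(\emptyset)$ consists of configurations with all $\calL$-colors in $A$ and all $\calR$-colors in $B$. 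Such configurations have no monochromatic edges, since $A \cap B = \emptyset$. Hence there are exactly $\abs{A}^n \abs{B}^n$ of them, each of weight $1$.

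\textbf{Step 3: matching the weights.} By \Cref{lem:2-factorization},
\[
    \frac{W_{A,B}(S)}{\abs{A}^n \abs{B}^n} = \prod_{i=1}^{\ell} \frac{W_{A,B}(S_i)}{\abs{A}^n\abs{B}^n} = \prod_{\gamma \in \Gamma} w_\gamma ,
\]
where $\Gamma$ is the family corresponding to $S$. Summing over $S \in \calS$ and reindexing via the bijection gives $Z_{A,B}(\calS)/(\abs{A}^n\abs{B}^n) = \Xi(\calM_{A,B}(\theta))$.

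\textbf{Main obstacle.} The only delicate point is checking that the compatibility relation matches $2$-disconnectedness exactly, i.e.\ that "$G^2[\gamma_1\cup\gamma_2]$ is disconnected" is equivalent to "the sets are disjoint with no $G^2$ edge between them." This needs the fact that each polymer is individually $G^2$-connected. One should also confirm that \Cref{lem:2-factorization} is applied with the same notion of $2$-connected components, namely components of $G^2[S]$. This is the notion under which the neighborhoods $S_i \cup N_G(S_i)$ of distinct components share no edges, so the local weights factor.
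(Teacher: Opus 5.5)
Your proposal is correct and takes the same route as the paper. The paper simply observes the one-to-one correspondence between $\calS$ and compatible polymer families and then invokes \Cref{lem:2-factorization}; you fill in the details it leaves implicit, namely compatibility being equivalent to $G^2$-disconnectedness of $G^2$-connected sets, and the empty family corresponding to $S=\emptyset$ with normalized weight $1$.
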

\begin{proof}
    Observe that there is a one-to-one correspondence between $\calS$ and compatible polymers.
    Then the proposition holds directly from~\Cref{lem:2-factorization}.
\end{proof}

\subsubsection{Verify \Kotecky-Preiss condition}

Now we verify the \Kotecky-Preiss condition (\Cref{cond:Kotecky-Preiss-condition}) and thus we obtain an $\FPTAS$ for computing $\Xi(\calM_{A, B}^{\theta})$ by~\Cref{thm:counting-polymer}.
\begin{lemma} \label{lem:KP-threshold}
    Let $\Xi = (\calC, \sim)$ be an abstract polymer induced by a graph of degree at most $d$.
    Suppose that
    \begin{align*}
        \forall \gamma \in \calC, \quad w_{\gamma} \le \e^{-\rho \abs{\gamma}}
    \end{align*}
    where $\rho \ge \frac{3}{2} + \ln{(\e \cdot d + 2(d + 1))}$.
    Then~\Cref{cond:Kotecky-Preiss-condition} holds for $g(\gamma) = \abs{\gamma}/2$.
\end{lemma}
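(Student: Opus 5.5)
To verify \Cref{cond:Kotecky-Preiss-condition} with $g(\gamma) = \abs{\gamma}/2$, we need, for every polymer $\gamma$,
\begin{align*}
    \sum_{\gamma' \sim \gamma} w_{\gamma'} \e^{3\abs{\gamma'}/2} \le \frac{\abs{\gamma}}{2}.
\end{align*}
Two facts drive the argument: incompatibility is local, and connected subgraphs containing a fixed vertex are few. Here $d$ plays the role of the degree of the graph in which polymers are connected (for our application, $G^2$ with degree at most $\Delta^2$, matching \Cref{prop:KP-parameter}).

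\textbf{Step 1 (localize).} If $\gamma' \sim \gamma$, then $\gamma \cup \gamma'$ is connected, so $\gamma'$ contains a vertex of $\gamma$ or a neighbor of one. The set $\gamma \cup N(\gamma)$ has at most $(d+1)\abs{\gamma}$ vertices. Hence
\begin{align*}
    \sum_{\gamma' \sim \gamma} w_{\gamma'} \e^{3\abs{\gamma'}/2} \le (d+1)\abs{\gamma} \cdot \max_{v} \sum_{\gamma' \ni v} \e^{(3/2 - \rho)\abs{\gamma'}}.
\end{align*}

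\textbf{Step 2 (count connected sets).} Use the standard bound that the number of connected vertex sets of size $k$ containing a fixed vertex in a graph of maximum degree $d$ is at most $(\e d)^{k-1}$. It follows from counting subtrees via the generating-function bound $\frac{1}{k}\binom{dk}{k-1} \le (\e d)^{k-1}$. This gives
\begin{align*}
    \sum_{\gamma' \ni v} \e^{(3/2-\rho)\abs{\gamma'}} \le \sum_{k \ge 1} (\e d)^{k-1} \e^{(3/2-\rho)k} = \frac{x}{1 - \e d\, x}, \qquad x \defeq \e^{3/2-\rho}.
\end{align*}

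\textbf{Step 3 (plug in $\rho$).} The hypothesis $\rho \ge \frac32 + \ln(\e d + 2(d+1))$ gives $x \le 1/(\e d + 2(d+1))$. Then
\begin{align*}
    1 - \e d\, x \ge \frac{2(d+1)}{\e d + 2(d+1)},
\end{align*}
so
\begin{align*}
    \frac{x}{1-\e d\, x} \le \frac{1}{\e d + 2(d+1)} \cdot \frac{\e d + 2(d+1)}{2(d+1)} = \frac{1}{2(d+1)}.
\end{align*}
Multiplying by $(d+1)\abs{\gamma}$ yields exactly $\abs{\gamma}/2 = g(\gamma)$, which is the Kotecký–Preiss condition. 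The constant in the hypothesis is precisely calibrated to make this hold, which is a good sign that this is the intended route.

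\textbf{Main obstacle.} The computation itself is routine; the delicate point is Step 2. The connected-set bound must be stated with exponent $k-1$ (not $k$) and constant $\e d$, since with these the geometric series closes exactly. If only a cruder bound such as $(\e d)^k$ were available, the constants would fail, so I would either cite the subtree-counting bound (e.g.\ from \cite{JKP19}) or prove it by injecting connected sets into rooted spanning subtrees of the infinite $d$-ary tree.
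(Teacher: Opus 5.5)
Your proof is correct and is essentially the paper's argument. You localize incompatible polymers to the at most $(d+1)\abs{\gamma}$ vertices of $\gamma \cup N(\gamma)$, bound the number of connected sets of size $k$ through a fixed vertex by $(\e d)^{k-1}$, and sum the geometric series, which gives exactly $\abs{\gamma}/2$ under the stated bound on $\rho$.
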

\begin{proof}
    We use the typical propostion that given a graph $G = (V, E)$ of maximum degree $d$, for every vertex $v \in V$ and a positive integer $k \ge 1$, the number of connect subsets containing $v$ of size $k$ is at most $(\e d)^{k - 1}$.
    Then it holds that
    \begin{align*}
        \sum_{\gamma' \sim \gamma} \abs{w_{\gamma'}} \e^{\abs{\gamma'} + g(\gamma')} &\le (d + 1) \abs{\gamma} \sum_{k \ge 1} (\e d)^{k - 1} \e^{-(\rho - 3/2)k} \\
        &= \frac{(d + 1) \e^{-(\rho - 3/2)}}{1 - \e d \cdot \e^{-(\rho - 3/2)}} \abs{\gamma}.
    \end{align*}
    By assumption, $\e^{-(\rho - 3/2)} \le \frac{1}{\e d + 2(d + 1)}$ and thus
    \[
         \sum_{\gamma' \sim \gamma} \abs{w_{\gamma'}} \e^{f(\gamma') + g(\gamma')} = \frac{(d + 1) \e^{-(\rho - 3/2)}}{1 - \e d \cdot \e^{-(\rho - 3/2)}} \abs{\gamma} \le \frac{\abs{\gamma}}{2}.
    \]
    The~\Cref{cond:Kotecky-Preiss-condition} holds.
\end{proof}

\begin{proof}[Proof of~\Cref{lem:polymer-count-sparse}]
    Fix a pattern $(A, B) \in \PATTERN$ and we construct the abstract polymer model $\calM_{A, B}(\theta) = (\calC = \calC(\theta), \sim)$ as before.
    By~\Cref{lem:weight-upper-bound} and~\Cref{prop:KP-parameter},~\Cref{cond:Kotecky-Preiss-condition} holds for $g(\gamma) = \abs{\gamma}/2$ by~\Cref{lem:KP-threshold} with $d = \Delta^2$.
    Obviously \Cref{item:counting-polymer-enumerate,item:counting-polymer-function-bound} hold by definition and our construction.
    Then by~\Cref{thm:counting-polymer}, there exists a deterministic algorithm to output an $\eps$-approximation to $\Xi(\calM_{A, B})$ running in time $(n/\eps)^{\tau(\Delta)}$ for a computable function $\tau : \mathbb R \to (0, +\infty)$ depending only on $\Delta$.
    Applying this algorithm to estimate $\Xi(\calM_{A, B})$ and by~\Cref{prop:polymer-partition-function}, we obtain the desired algorithm.
\end{proof}

\section{Conclusion} \label{sec:conclusion}

In this work, we study the anti-ferromagnetic $q$-Potts models on random regular bipartite graphs $\calG_{n, \Delta}^{\BIP}$.
Though the celebrated single-site Glauber dynamics exhibits a slow mixing time, there still might be an $\FPTAS$ for approximating the partition function $Z_{G, q}(\beta)$ in the non-uniqueness regime.
Both negative and positive results rely on the geometric properties of the model.
We state some discussion and conjectures here.

\subsection{Geometric properties of anti-ferromagnetic Potts models}

In~\Cref{sec:anti-ferromagnetic-Potts}, we demonstrate the configuration space according to ranks when $\beta$ is in the non-uniqueness regime.
The key observation is that, for anti-ferromagnetic Potts models on random regular bipartite graphs, when it is in the non-uniqueness threshold, most of configurations are concentrated around some ground states characterized by different patterns.
It has been mentioned above that such properties are inspired by the existence of long-range order of anti-ferromagnetic $3$-state Potts models on lattices $\mathbb Z^d$ conjectured by~\cite{Kotecky85} and answered by~\cite{FS19} for sufficiently large $d$.
However, it still remains an understanding when such phase transition occurs for anti-ferromagnetic $q$-Potts models on random regular bipartite graphs and we ask the following question.
\begin{question} \label{que:phase-transition}
    For anti-ferromagnetic $q$-state Potts models on random $\Delta$-regular bipartite graphs, when does the model undergo a phase transition?
\end{question}
We remark here that \Cref{que:phase-transition} is important since our results in this work actually rely on the property that the model is somehow in an ordered phase.

\subsection{Torpid and rapid mixing of Markov chains}

\Cref{thm:Glauber-dynamics-slow-mixing} shows the torpid mixing of Glauber dynamics for anti-ferromagnetic $q$-Potts models when $\beta$ is in the non-uniqueness threshold.
It has been shown in~\cite{GKRS15} that the local-update Markov chain (including the Glauber dynamics) exhibits torpid mixing for $3$-colorings on lattices by the existence of phase transition. 
We conjecture that the coexistence of different phases implies the torpid mixing of this family of Markov chains.
\begin{conjecture}
    For anti-ferromagnetic $q$-Potts models where the phase transition occurs, the local-update Markov chains (including the single-site Glauber dynamics $P^{\GD}$) exhibit exponential mixing time at sufficiently low temperature.
\end{conjecture}

On the other side, in spite of the torpid mixing of $P^{\GD}$, it remains open whether there exists a fast Markov chain to design an $\FPAUS$ for $\mu_{G, q; \beta}$.
\begin{conjecture}
    For anti-ferromagnetic $q$-Potts models on random $\Delta$-regular $2n$-vertex bipartite graphs at $\beta > 0$, with high probability there exists an $\FPAUS$ for $\mu_{G, q; \beta}$ running in time $\poly{n, \Delta}$.
\end{conjecture}

\subsection{Approximate counting}

To estimate the partition function $Z_{G, q}(\beta)$ of anti-ferromagnetic $q$-state Potts model on $\calG_{n, \Delta}^{\BIP}$, we apply the method of abstract polymer models at low temperature.
We put the following conjecture on counting anti-ferromagnetic $q$-Potts models on random regular bipartite graphs.
\begin{conjecture}
    For anti-ferromagnetic $q$-Potts models on random $\Delta$-regular bipartite graphs at $\beta > 0$, with high probability there exists an $\FPTAS$ for $Z_{G, q}(\beta)$.
\end{conjecture}

Meanwhile, the \Erdos-\Renyi random bipartite graph $\calG(n, n, p)$ is another important random models in combinatorics and theoretic computer science.
We generate $G = (\calL, \calR, E) \sim \calG(n, n, p)$ of size $\abs{\calL} = \abs{\calR} = n$ by picking each edge $(u, v) \in \calL \times \calR$ in $E$ with probability $p$ independently at random.
We put the following conjecture on counting anti-ferromagnetic Potts models on sparse \Erdos-\Renyi random bipartite graphs.
\begin{conjecture}
    For anti-ferromagnetic $q$-Potts models on \Erdos-\Renyi random bipartite graph $\calG(n, n, d/n)$ at $\beta > 0$, with high probability there exists an $\FPTAS$ for $Z_{G, q}(\beta)$.
\end{conjecture}
\begin{remark}
    The technique we apply for random regular bipartite graphs seems to fail on the \Erdos-\Renyi random bipartite graphs, since it is seemingly impossible to verify the convergent criterion when the underlying graph is not a bipartite expander.
    Also, it remains a barrier to design the approximating algorithm on unbounded-degree graphs (\Cref{thm:counting-polymer} has a strict restriction on the assumption that the degree $\Delta$ is a constant).
\end{remark}

\bibliographystyle{alpha}
\bibliography{refs}

\end{document}